\newcommand{\lug}[1]{}
\newcommand{\nas}[1]{}
\newcommand{\ars}[1]{}
\newcommand{\lugtext}[1]{}
\newcommand{\nastext}[1]{}
\newcommand{\arstext}[1]{}
\newcommand{\expspace}{\ensuremath{\textsc{ExpSpace}}}
\newenvironment{sketch-proof} {\textit{Sketch of proof.}}{\hfill$\square$\\ }
 \newcommand{\PP}{\ensuremath{P}}   
\newcommand{\CC}{\ensuremath{\mathcal{C}}}
\renewcommand{\SS}{\ensuremath{\mathcal{S}}}
\newcommand{\nat}{\mathbb{N}}
\newcommand{\set}[1]{\{#1\}}
\newcounter{sarrow}
\newcommand{\squig}{{\scriptstyle\sim\mkern-3.9mu}}
\newcommand{\rsquigend}{{\scriptstyle\rule{.1ex}{0ex}\rhd}}
\newcounter{sqindex}
\newcommand\squigs[1]{%
	\setcounter{sqindex}{0}%
	\whiledo {\value{sqindex}< #1}{\addtocounter{sqindex}{1}\squig}%
}
\newcommand\rsquigarrow[2]{%
	\mathbin{\stackon[2pt]{\squigs{#2}\rsquigend}{\scriptscriptstyle{#1\,}}}%
}
\newcommand{\trans}{\rightarrow}
\newcommand{\transup}[1]{\xrightarrow{#1}}
\newcommand{\abtransup}[1]{\xRightarrow{#1}}
\newcommand{\abconftrans}{\ensuremath{\rsquigarrow{}{3}}}
\newcommand{\abconftransup}[1]{\ensuremath{\rsquigarrow{\ensuremath{#1}}{3}}}
\newcommand{\Reachability}{\textsc{Reachability}}
\newcommand{\Target}{\textsc{Synchro}}
\newcommand{\RepeatedCover}{\textsc{Repeated Coverability}}
\newcommand{\pspace}{\textsc{PSpace}}
\newcommand{\qinit}{\ensuremath{q_{\textit{in}}}}
\renewcommand{\set}[1]{\ensuremath{\{#1\}}}
\newcommand{\Op}{\ensuremath{\mathsf{Op}}}
\newcommand{\implem}[1]{\ensuremath{\llbracket#1\rrbracket}}
\newcommand{\printof}[1]{\ensuremath{\mathbf{print}(#1)}}
\newcommand{\abconffrom}[2]{\ensuremath{\mathbf{repr}(#1,#2)}}
\newcommand{\Counters}{\ensuremath{\mathsf{X}}}
\newcommand{\counter}{\ensuremath{\mathsf{x}}}
\newcommand{\lab}[1]{\ensuremath{\textbf{lab}(#1)}}
\newcommand{\aproc}{\ensuremath{e}}
\newcommand{\print}{\ensuremath{\textsf{pr}}}
\newcommand{\summary}{\ensuremath{S}}
\newcommand{\nextsummary}[1]{\ensuremath{{#1}^{\texttt{+1}}}}
\newcommand{\arrived}{\ensuremath{\texttt{Done}}}
\newcommand{\conf}[1]{\overset{\bullet}{#1}}
\newcommand{\confstate}[1]{\overset{\bullet}{#1}}
\newcommand{\garbagestate}{\ensuremath{q_u}}
\newcommand{\setof}[1]{\ensuremath{\textbf{set}(#1)}}
\newcommand{\new}[1]{\color{blue}#1\color{black}}
\newcommand{\old}[1]{}
\newcommand{\nbagents}[1]{\ensuremath{\textbf{\#proc}(#1)}}
\newcommand{\nextactionindex}[1]{\ensuremath{\textbf{na-index}(#1)}}
\newcommand{\nextactionstate}[1]{\ensuremath{\textbf{na-state}(#1)}}
\newcommand{\nextaction}[1]{\ensuremath{\textbf{na}(#1)}}
\newcommand{\Loc}{\ensuremath{\textsf{Loc}}}
\newcommand{\CoherentSets}{\ensuremath{\textsf{CoSets}}}
\newcommand{\inc}[1]{\ensuremath{#1}\tiny ++}
\newcommand{\dec}[1]{\ensuremath{#1}- -}
\newcommand{\nextactionindexset}[1]{\ensuremath{\textbf{na-index-set}(#1)}}
\newcommand{\problemtitle}[1]{\gdef\@problemtitle{#1}}
\newcommand{\probleminput}[1]{\gdef\@probleminput{#1}}
\newcommand{\problemquestion}[1]{\gdef\@problemquestion{#1}}
\newcommand{\problemquestionline}[1]{\gdef\@problemquestionline{#1}}
	\par\addvspace{.2\baselineskip}
			\normalsize \textbf{Input:} &  \@probleminput \\
			\normalsize \textbf{Question:} &  \@problemquestion\\
	\par\addvspace{.2\baselineskip}
\title{Wait-Only Broadcast Protocols are Easier to Verify}
\author{Lucie Guillou}{IRIF, CNRS, Universit\'e Paris Cit\'e,
  France}{guillou@irif.fr}{https://orcid.org/0000-0002-6101-2895}{}
\author{Arnaud Sangnier}{DIBRIS, Università di Genova, Italy}{arnaud.sangnier@unige.it}{https://orcid.org/0000-0002-6731-0340}{}
\author{Nathalie Sznajder}{LIP6, CNRS, Sorbonne Universit\'e,
  France}{nathalie.sznajder@lip6.fr}{https://orcid.org/0000-0002-4199-2443}{}
\authorrunning{L. Guillou and A. Sangnier and N. Sznajder}
\keywords{Parameterised verification, Reachability, Broadcast}
\begin{document}

\maketitle
\abstract{We study networks of processes that all execute the same finite-state protocol and communicate via broadcasts. We are interested in two problems with a parameterized number of processes: the synchronization problem which asks whether there is an execution which puts all processes on a given state; and the repeated coverability problem which asks if there is an infinite execution where a given transition is taken infinitely often. Since both problems are undecidable in the general case, we investigate those problems when the protocol is \emph{Wait-Only}, i.e., it has no state from which a process can both broadcast and receive messages. We establish that the synchronization problem becomes Ackermann-complete, and the repeated coverability problem is in \expspace\,  and \pspace-hard.}

\section{Introduction}
Distributed systems are at the core of modern computing, and are widely used in critical applications such as sensor networks, distributed databases and
collaborative control systems. These systems rely on protocols to enable communication, maintain coherence and coordination between the different processes. 
Because of their distributed nature, such protocols have proved to be error-prone. As a result, the formal verification of distributed systems has become an essential area of research. 
Formal verification of
distributed systems presents unique challenges compared to formal verification of centralized systems. One of the most significant issue is the state explosion 
problem: the 
behavior of multiple processes that execute concurrently and exchange information often lead to state spaces that grow exponentially with the number of processes, making the analysis highly challenging. 
When the systems are parameterized, meaning designed to operate for an arbitrary number of processes, which is often the case in distributed protocols, classical techniques
are not useful anymore. The difficulty shifts from state explosion to dealing with an infinite number of system configurations, which leads to undecidability in the 
general
case~\cite{apt86limits}. Despite these challenges, verification becomes more tractable in certain restricted settings. For instance, in parameterized systems, the behavior of individual processes needs not always to be explicitly represented, which can mitigate state explosion. This has motivated researchers to focus on specific subclasses of systems or communication models where decidability can be achieved without sacrificing too much expressiveness.

One such restriction involves protocols where processes are anonymous (i.e., without identities), and communicate via simple synchronous mechanisms, such as rendezvous~\cite{german92}, where two processes exchange a message synchronously. Several variants of this model have been 
studied, such as communication via a shared register containing some value from a finite set~\cite{esparza-param-cav13}, or non-blocking rendezvous~\cite{guillou-safety-concur23,DelzannoRB02},
where a sender is not prevented from sending a message when no process is ready to receive it. In all these cases, the 
processes execute the same protocol, which is described by a finite automaton. 

A more expressive model is broadcast protocols, introduced by Emerson and Namjoshi~\cite{emerson98model-checking}. In these protocols, a process can send a broadcast message that is received simultaneously by all other processes (or by all neighboring processes in a network).
 Several key verification problems arise in this context, including the coverability problem, which asks whether a process can eventually reach a 
 specific (bad) state starting from an initial
 configuration. Another important property is synchronization, which asks whether 
 all processes can simultaneously reach a  given state. Liveness properties, like determining whether infinite executions exist, ensure that certain behaviors  occur
 indefinitely. 
 While broadcast protocols allow more powerful communication than rendezvous protocols, this added expressiveness comes at the cost of increased verification complexity. Indeed, rendezvous communication can be simulated using broadcasts~\cite{EsparzaFM99}. However, verification becomes significantly harder: while coverability and synchronization can be solved in polynomial time for rendezvous protocols~\cite{german92}, they become respectively
 Ackermann-complete~\cite{EsparzaFM99,schmitz-power-concur13} and undecidable (we show it in this paper) for broadcast protocols. Liveness properties
 are also undecidable for 
 broadcast protocols~\cite{EsparzaFM99}. The challenge in verifying broadcast protocols can be understood through their relationship with Vector Addition Systems with States (VASS). Rendezvous-based protocols can be encoded in a VASS, where counters track the number of processes in each state. A rendezvous is simulated by decreasing the counters corresponding to the sender and receiver states and increasing the counters for their post-communication states. 
 While using a VASS for protocols using rendezvous is certainly not the way to go due to the complexity
 gap between problems on VASS and the existing known polynomial time algorithms to solve verification problems, it is interesting to note that this encoding fails for broadcast protocols because a broadcast message must be received by all processes in a given state, requiring a bulk transfer of a counter value—something not expressible in classical VASS without adding powerful operations such as transfer arcs, which leads to an undecidable reachability problem~\cite{valk-self-icalp78}. 
 However, in some cases, verification of broadcast protocols can become easier, complexity-wise. One example is the setting of Reconfigurable Broadcast Networks, in which communication between processes can be lossy~\cite{DelzannoSZ10}. Liveness properties become decidable in that case~\cite{DelzannoSZ10} and even polynomial time~\cite{ChiniMS22}.
 
 Another such case is given by a syntactic restriction known as \emph{Wait-Only protocols}, introduced in~\cite{guillou-safety-concur23} in the context of non-blocking rendezvous communication.  In Wait-Only protocols, a process cannot both send and receive a message from the same state. From a practical perspective, Wait-Only protocols were proposed to model the non-blocking rendezvous semantics used in Java’s \texttt{wait}/\texttt{notify}/\texttt{notifyAll} synchronization mechanism and C’s \texttt{wait}/\texttt{signal}/\texttt{broadcast} operations with conditional variables, commonly found in multi-threaded programs. In both languages, threads can be awakened by the reception of a message. This naturally leads to distinguishing between action and waiting states: a sleeping thread cannot act on its own and can only be awakened by a signal.
This restriction simplifies the possible behaviours of the system, potentially reducing the complexity of the verification. Actually, the coverability problem for Wait-Only broadcast protocols becomes \pspace-complete~\cite{guillou-safety-petrinets24}. Indeed, when processes are in a state where they can receive broadcasts, they cannot send messages, meaning they will move together to the next state, reducing the need for precise counting. Moreover, when a process is in a broadcasting state, it remains there until it decides to send a message, simplifying execution reconstruction and enabling better verification algorithms.
By leveraging these properties, we design algorithms for the synchronization problem and liveness properties, obtaining new decidability 
results for these challenging problems.

\textbf{Contributions.} In this paper, we address two key verification problems in the context of Wait-Only broadcast protocols: the synchronization problem 
and the repeated coverability problem (a type of liveness property). Our main contributions are as follows: 
\begin{itemize}
\item Synchronization problem. We show that restricting to Wait-Only protocols allows us to 
regain decidability for synchronization. Specifically, we prove that the problem is Ackermann-complete via a two-way reduction to the reachability problem in Vector Addition Systems with States (VASS) (\cref{sec:synchro}). Furthermore, when the target state is an action state (a state that cannot receive messages), the problem becomes \expspace-complete (\cref{sec:synchro-action}). 

\item Repeated Coverability Problem: it asks whether a specific transition can occur infinitely often. We show that this problem is in 
\expspace\, and \pspace-hard (\cref{sec:repeated-cover}). 
\end{itemize}

Proofs omitted due to space contraints can be found in Appendix.

\section{Model and Verification Problems}

In this section, we provide the description of a model for networks with broadcast communication together with some associated verification problems we are interested in. First we introduce some useful mathematical notations. We use $\nat$ to represent the set of natural numbers, $\nat_{>0}$ to represent $\nat \setminus \set{0}$, and for $n, m \in \nat$ with $n \leq m$, we denote by $[n, m]$ the set $\set{n, n+1, \dots, m}$. For a finite set $E$ and a natural $n>0$, the set $E^n$ represents the $n$-dimensional vectors with values in $E$ and for $v \in E^n$ and $1\leq i \leq n$, we use $v(i)$ to represent the $i$-th value of $v$. Furthermore, for such a vector $v$, we denote by $||v||$ its dimension $n$.


\subsection{Networks of Processes using Broadcast Communication}

In the networks we consider, all the processes execute the same protocol, given by a finite state machine. The actions of this machine can either broadcast a message $a$ to the other processes (denoted by $!!a$) or receive a message $a$ (denoted by $?a$). For a finite alphabet of messages $\Sigma$, we use the following notations: $!!\Sigma\overset{\mathrm{def}}{=}\set{!!a \mid a \in \Sigma}$, $?\Sigma\overset{\mathrm{def}}{=}\set{?a \mid a \in \Sigma }$ and $\Op_\Sigma \overset{\mathrm{def}}{=}!!\Sigma \cup ?\Sigma$. 

\begin{definition}
	A \emph{protocol} $\PP$ is a tuple $(Q, \Sigma, \qinit, T)$ where $Q$ is a finite set of states, $\Sigma$ is a finite set of messages, $\qinit \in Q$ is an {initial state}, and $T \subseteq Q \times\Op_\Sigma \times Q$ is a set of transitions.
\end{definition}

For all $q \in Q$, we define $R(q)$ as the set of messages that can be received from state $q$, given by $\set{a \in \Sigma \mid \exists q' \in Q, (q, ?a, q') \in T}$.
A protocol $\PP = (Q, \Sigma, \qinit, T)$ is \emph{Wait-Only} whenever for all $q \in Q$, either $\set{q' \in Q\mid (q, \alpha, q') \in T \text{ with } \alpha \in ?\Sigma} = \emptyset$, or $\set{q' \in Q\mid (q, \alpha, q') \in T \text{ with } \alpha \in !!\Sigma} = \emptyset$. In a Wait-Only protocol, a state $q \in Q$ such that $\set{q' \in Q\mid (q, \alpha, q') \in T \text{ with } \alpha \in ?\Sigma} \neq \emptyset$ is called a \emph{waiting state}. A state that is not a waiting state is an \emph{action state}.
We denote by $Q_W$ the set of waiting states and by $Q_A$ the set of action states (hence $Q_A=Q \setminus Q_W$). Observe that if $\qinit \in Q_W$, then no process is ever able to broadcast messages, for this reason we will always assume that $\qinit \in Q_A$.

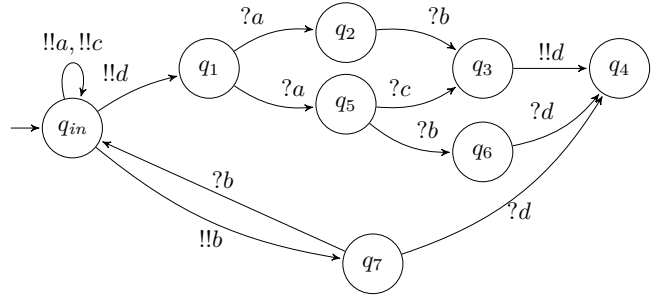
\begin{figure}
  
  \begin{center}
    \scalebox{0.9}{
      \tikzset{box/.style={draw, minimum width=4em, text width=4.5em, text centered, minimum height=17em}}

\begin{tikzpicture}[->, >=stealth', shorten >=1pt,node distance=2cm,on grid,auto, initial text = {}] 
	\node[state, initial] (q0) {$\qinit$};
	\node[state] (q1) [right = of q0, yshift = 25 ] {$q_1$};
	\node[state] (q2) [right = of q1, yshift=-15] {${q_5}$};
	\node[state] (q3) [right  = of q1, yshift = 15] {${q}_{2}$};
	\node[state] (q4) [right  = of q2, yshift = 15] {$q_{3}$};
	\node[state] (q5) [right  = of q2, yshift = -20] {$q_{6}$};
	\node[state] (q6) [right  = of q4] {$q_{4}$};
	\node[state] (q7) [below  = of q0, xshift = 125, yshift = 0] {$q_{7}$};
	
	\path[->] 
	(q0) edge [loop above] node [] {$!!a,!!c$} ()
	edge [bend left = 10] node {$!!d$} (q1)
	edge [bend right = 15] node [below] {$!!b$} (q7)
	(q1) edge [bend right = 20] node {$?a$} (q2)
	edge [bend left = 20] node {$?a$} (q3)
	(q3) edge [bend left = 20] node {$?b$} (q4)
	(q2) edge [bend right = 20] node {$?b$} (q5)
	(q2) edge [bend right = 20] node {$?c$} (q4)
	(q4) edge [] node {$!!d$} (q6)
	(q7) edge [bend right = 20] node [below]{$?d$} (q6)
	(q5) edge [bend right = 20] node {$?d$} (q6)
	(q7) edge [bend left = 0] node [above] {$?b$} (q0)
	;
\end{tikzpicture}

%
%
%
      }
	\end{center}
        
	\caption{Example of a Wait-Only protocol $\PP$.}\label{fig:example-1}
\end{figure}

\begin{example}
   \cref{fig:example-1} illustrates an example of a Wait-Only protocol: no state allows both broadcasting and receiving messages simultaneously. The waiting states are $q_1, {q}_{2}, {q}_{5},  {q}_{6}$, and $q_7$, with $R({q}_{5}) = \set{b, c}$.

\end{example}

For $n \in \nat_{>0}$, an $n$-process \emph{configuration} of a protocol $\PP = (Q, \Sigma, \qinit, T)$ is a vector $C \in Q^n$ where $C(\aproc)$ represents the state of the $\aproc$-th process in $C$  for $ 1 \leq \aproc \leq n$. The configuration is said to be \emph{initial} whenever $C(\aproc) = \qinit$ for all $1 \leq \aproc \leq n$. The dimension $||C||$  hence characterizes the number of processes in $C$. For a state $q \in Q$, 
we use $C^{-1}(q)$ to represent the set of processes in state $q$, formally $C^{-1}(q)=\set{1\leq e\leq ||C|| \mid C(e) = q}$. For a subset $A\subseteq [1,||C||]$ of processes, we use $C(A)$ to identify the set of states of processes in $A$, formally  $C(A)=\set{C(e)\mid e\in A}$. 
We let $\CC$ [resp. $\mathcal{I}$] be the set of all configurations [resp. initial configurations] of $\PP$.

Let $\PP =(Q, \Sigma, \qinit, T)$ be a protocol. We now define the broadcast network semantics associated to $\PP$. For configurations $C, C' \in \CC$, 
transition $t = (q, !!a, q') \in T$ and  $\aproc \in [1,||C||]$, we write  $C \transup{\aproc, t} C'$ whenever $||C|| = ||C'||$, $C(\aproc) = q$, $C'(\aproc) = q'$ and, for all $\aproc' \in [1,||C||] \setminus\set{\aproc}$, either a reception occurs, i.e., $(C(\aproc'), ?a, C'(\aproc')) \in T$, or the process cannot receive $a$, in which case ($a \nin R(C(\aproc'))$ and $C(\aproc') = C'(\aproc')$). 
Intuitively, the $\aproc$-th process of $C$ broadcasts a message $a$, which is received by all processes able to receive it, while the other processes remain in their current state. 
We note $C \trans C'$ if there exists $\aproc \in [1,||C||]$ and $t \in T$ such that $C \transup{\aproc,t} C'$ and use $\trans^\ast$ [resp. $\trans^+$] for the reflexive and transitive [resp. transitive] closure of $\trans$.

A finite [resp. infinite] execution is a finite [resp. infinite] sequence of configurations  $\rho = C_0 \ldots C_\ell$ [resp.  $\rho = C_0 \ldots $] such that $C_0 \in \mathcal{I}$ and $C_{i-1} \trans C_{i}$ for all $0 < i \leq \ell$ [resp. for all $i >0$]. Its length $|\rho|$ is equal to $\ell+1$ [resp. $\omega$]. 
We write $\Lambda$ [resp. $\Lambda_\omega$] for the set of finite [resp. infinite] executions. 
%
%
For an execution $\rho = C_0 C_1 \ldots  \in \Lambda \cup \Lambda_\omega$ and $0 \leq i < |\rho|$, we use $\rho_i$ to denote $C_i$, the $i$-th configuration.
Additionally, $\nbagents{\rho}$ represents $||C_0||$, the number of processes in the execution. For $0 \leq i < j < |\rho|$, we define $\rho_{i,j} \overset{\mathrm{def}}{=} \rho_i \ldots \rho_j$. We also let $\rho_{\geq i} \overset{\mathrm{def}}{=} \rho_i \rho_{i+1} \ldots$ and $\rho_{\leq i} \overset{\mathrm{def}}{=} \rho_0 \ldots \rho_i$.
For $\aproc \in [1, \nbagents{\rho}]$, we denote by $\rho(\aproc)$ the sequence of states $\rho_0(\aproc) \rho_1(\aproc) \ldots$.

\begin{example}\label{example:exec}
  Here is an execution with $3$ processes for the protocol of  \cref{fig:example-1}:$$\begin{array}{l}
    (\qinit, \qinit, \qinit) \transup{1, (\qinit, !!d ,q_1)} (q_1, \qinit, \qinit) \transup{2, (\qinit, !!d, q_1)} (q_1, q_1, \qinit) \transup{3, (\qinit, !!a, \qinit)} ({q}_{5}, {q}_{2}, \qinit)\\  \transup{3, (\qinit, !!c, \qinit)} ({q}_{3}, {q}_{2}, \qinit) \transup{3, (\qinit, !!b, q_7)} ({q}_{3}, {q}_{3}, q_7).
  \end{array}
  $$
\end{example}



\subsection{Verification Problems}
We investigate two verification problems over broadcast networks: the synchronization problem and the repeated coverability problem. The former asks whether, given a protocol and a final state, there exists an execution 
that brings all processes to the final state. It is formally defined as follows.
\begin{center}
\begin{decproblem}
	\problemtitle{$\Target$~}
	\probleminput{A protocol  $\PP = (Q, \Sigma, \qinit, T)$ and
		a  state $q_f \in Q$;} 
	\problemquestion{Do there exist $C\in \mathcal{I}$, and $C' \in \CC$ such that $C \trans^\ast C'$, and }\problemquestionline{ $C'(\aproc)=q_f$  for all $\aproc \in [1, ||C'||]$?}
\end{decproblem}
\end{center}

The repeated coverability problem asks whether, given a protocol and a broadcast transition, there exists an infinite execution in which a single process takes
 the given transition infinitely often. It is formally defined as follows.
\begin{center}
\begin{decproblem}
	\problemtitle{$\RepeatedCover$~}
	\probleminput{A protocol  $\PP = (Q, \Sigma, \qinit, T)$ and
		a  transition $t = (q, !!a, q') \in T$;} 
	\problemquestion{Do there exist $\rho \in \Lambda_\omega$ and $e \in [1, \nbagents{\rho}]$, such that for all $i \in \nat$,  }\problemquestionline{there exists $j > i$ verifying $\rho_j \transup{\aproc, t} \rho_{j+1}$ ?}
\end{decproblem}
\end{center}


\begin{theorem} \Target~and~\RepeatedCover~are undecidable. 
\end{theorem}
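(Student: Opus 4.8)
The plan is to reduce the halting problem of deterministic two‑counter Minsky machines (which is undecidable) to $\Target$ and to $\RepeatedCover$. Fix such a machine $M$ with counters $c_1,c_2$ and instructions $\ell_1,\dots,\ell_n$, $\ell_n$ being the halting one. I build one protocol $\PP$ in which a single process acts as a \emph{leader} running the control of $M$: the leader occupies a state $q_\ell$ for the current instruction $\ell$, the value of $c_i$ is encoded by the number of processes in a dedicated state $X_i$, spare processes wait in a \emph{pool} state, and there is a \emph{garbage} state with no outgoing transition. One process is singled out as the leader by the first broadcast performed in the network, every other process going to the pool. Note that the states involved are typically both broadcasting and receiving, which is exactly what the Wait‑Only restriction rules out, so this construction does not apply to Wait‑Only protocols.

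The heart of the reduction is faithfully simulating the three instruction types with non‑blocking broadcasts. A \emph{zero‑test} on $c_i$ uses the crucial non‑monotonicity of broadcast communication: when the leader chooses the branch asserting $c_i=0$, it broadcasts a reset message which every process in $X_i$ receives and follows to the garbage state; if $c_i$ was really $0$ nothing happens, otherwise a process is irrevocably burned into garbage, leaving a permanent trace of the cheat. An \emph{increment} of $c_i$ is a two‑step gadget: the leader broadcasts a message on which each pool process nondeterministically moves to a staging state $Z_i$ or stays put; the leader enters a waiting state whose only transition awaits an acknowledgement; then one process in $Z_i$ broadcasts that acknowledgement, moving to $X_i$, while every other $Z_i$ process \emph{retracts} to the pool and the leader proceeds. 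The net effect is to add exactly one token to $X_i$; and if no pool process volunteered, no acknowledgement can ever be sent, so the leader is permanently stuck. A \emph{decrement} is the mirror gadget (processes of $X_i$ stage into $Y_i$, one returns to the pool, the rest return to $X_i$; an empty $X_i$ stalls the leader forever). Hence each simulated step is either exact, or deadlocks the leader, or — only for a bogus zero‑test — permanently destroys a process.

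For $\Target$, when the leader reaches $q_{\ell_n}$ it broadcasts a final message flushing every process of $X_1$, $X_2$ and the pool to the target state $q_f$ (garbage does not react). A configuration with all processes in $q_f$ is then reachable iff some execution produced no garbage and never deadlocked the leader while bringing it to $q_{\ell_n}$ — i.e.\ iff a faithful simulation reaches the halting instruction, i.e.\ iff $M$ halts; for the converse, a halting run of $M$ uses bounded counter values, so with enough initial processes it is replayed faithfully and then flushed. For $\RepeatedCover$, at $q_{\ell_n}$ the leader instead performs a transition $t$ which recycles $X_1$ and $X_2$ back to the pool and returns the leader to $q_{\ell_1}$, so that the whole simulation can be re‑run; $t$ is fired infinitely often iff the leader reaches the halting instruction infinitely often. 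If $M$ halts, this is achievable forever by faithful replays (each replay recycles all its material, so the pool never runs dry). If $M$ does not halt, then since $M$ is deterministic every passage of the leader from $q_{\ell_1}$ to $q_{\ell_n}$ must deviate from the faithful run, hence run a bogus zero‑test, permanently destroying at least one more process; as the number of processes is finite, this can happen only finitely often, so $t$ is fired only finitely often.

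The delicate part is the faithfulness of the gadgets: a broadcast cannot be declined and the leader cannot truly synchronise with a chosen partner, so the usual Petri‑net‑style encoding of a counter machine does not transfer, and one has to verify that the recruit‑and‑retract gadgets leave no stray tokens behind when they succeed, and that the \emph{only} deviations from the faithful run are either a leader deadlock (harmless for both problems) or a write to garbage (which forbids all‑in‑$q_f$ outright, and which can be done only boundedly often along a run, defeating the infinite re‑simulation). Establishing these invariants, plus the routine description of the transitions, is where the argument lives; the rest is bookkeeping.
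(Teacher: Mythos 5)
Your reduction is sound, and for \Target\ it is essentially the paper's own argument: the paper also reduces from the halting problem of two-counter machines, with a leader elected by the first broadcast, counter values encoded by the number of processes in dedicated states, bogus zero-tests punished by irrevocably sending counter-token processes to a sink state, and a final flush gathering everyone in $q_f$. The gadgets differ in flavour: the paper lets the \emph{counter processes} initiate increments and decrements by broadcasting $\mathsf{inc}/\mathsf{dec}$ messages, and the leader punishes any operation that does not match the current instruction by moving to a sink state $\frownie$, whereas you use a leader-initiated recruit-and-retract handshake that is either exact or deadlocks the leader; both designs work, yours trades the ``wrong guess kills the leader'' invariant for the staging/acknowledgement invariants you list (note that ``a pool process may stay put'' needs an explicit self-loop reception, since reception of a broadcast is forced, and that uniqueness of the acknowledgement per gadget is exactly the forced-reception emptying of the staging state). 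Where you genuinely depart from the paper is \RepeatedCover: the paper does not prove it at all but cites Esparza--Finkel--Mayr (Theorem 5.1 of that paper), while you give a self-contained reduction by adding a recycling loop at the halting instruction and arguing, via the determinism of $M$ and the fact that each unfaithful pass must burn at least one process into the garbage state, that the recycled transition can fire infinitely often only if $M$ halts. This buys a uniform, single-construction proof of both undecidability claims at the cost of having to verify the extra invariant that recycling restores the exact initial token distribution (which your exactness claims for the gadgets do give you); the paper's choice buys brevity by outsourcing the liveness half to the literature.
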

We provide the proof of the undecidability of \Target~in \cref{app-sec0}, while the undecidability of \RepeatedCover~is established in \cite[Theorem 5.1]{EsparzaFM99}.
%
%
In the remainder of the paper, we show that by restricting to Wait-Only protocols, one can regain decidability. 
\begin{remark}\label{rem:noselfloop}
	Throughout the remainder of this paper, we will consider protocols without self-loop broadcast transitions of the form $(q, !!a, q)$. Such transitions can be
	transformed into two transitions $(q, !!a, p_q)$, $(p_q, !!\$, q)$ where $p_q$ is a new state added to the set of states and $\$$ is a new message added to the alphabet. This transformation of the protocol is equivalent to the original one with respect to $\Target$ and $\RepeatedCover$.
\end{remark}

\subsection{Vector Addition System with States.}
In the following, we will extensively rely on the  model of Vector Addition Systems with States, which we introduce below. 
A Vector Addition System with States (VASS) is a tuple $\mathcal{V} = (\Loc, \ell_0, \Counters, \Delta)$ where $\Loc$ is a finite set of locations, $\ell_0 \in \Loc$ is the \emph{initial} location, $\Counters$ is a finite set of natural variables, called counters, and $\Delta \subseteq \Loc \times \mathbb{Z}^\Counters \times \Loc$ is a finite set of transitions. For a VASS $\mathcal{V}$, a configuration is a pair $(\ell, v)$ in $\Loc \times \mathbb{N}^\Counters$ where $v$ provides a value for each counter. The initial configuration is $(\ell_0, \mathbf{0})$ where $\mathbf{0}(\counter) = 0$ for all $\counter \in \Counters$. Given two configurations $(\ell, v)$, $(\ell',v')$  and a transition $(\ell, \delta, \ell') \in \Delta$, we write $(\ell, v) \abconftransup{\delta} (\ell',v')$ whenever $v'(\counter) = v(\counter) + \delta(\counter)$  for all counters $\counter \in \Counters$. We simply use $(\ell, v) \abconftrans (\ell', v')$ when there exists $(\ell, \delta, \ell') \in \Delta$ such that $(\ell, v) \abconftransup{\delta} (\ell', v')$. We denote $\abconftrans^\ast$ for the transitive and reflexive closure of $\abconftrans$.
An execution (or a run) 
of the VASS is then a sequence of configurations starting with the initial configuration $(\ell_0, v_0), (\ell_1, v_1), \dots, (\ell_n, v_n)$ such that for all $0 \leq i <n$, it holds that $(\ell_i, v_i) \abconftransup{} (\ell_{i+1}, v_{i+1})$.  When the sequence is infinite, it is called an infinite execution of the VASS.

\Reachability, the well-known reachability problem for VASS, is defined as follows: given a VASS $\mathcal{V} = (\Loc, \ell_0, \Counters, \Delta)$ and
		a  location $\ell_f \in \Loc$, is there an execution from $(\ell_0, \mathbf{0})$ to $(\ell_f, \mathbf{0})$?
		
\begin{theorem}[Membership {\cite{LerouxS19}, Hardness \cite{CzerwinskiO21,Leroux21}}]\label{vass-reach-complexity}
	\Reachability~  is Ackermann-complete.
\end{theorem}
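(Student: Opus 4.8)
\textit{Proof proposal.} \cref{vass-reach-complexity} is a deep and celebrated result whose two directions are exactly the cited works; here I only outline the strategy one would follow for each direction, since carrying either out in full is itself a substantial development.

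\textbf{Upper bound (membership in Ackermann).} The plan is to analyse arbitrary VASS runs through the decomposition technique of Mayr, Kosaraju and Lambert, in the streamlined form of Leroux and Schmitz~\cite{LerouxS19}. To a reachability instance one associates a finitely branching tree of \emph{marked witness graphs} --- generalised VASS in which some counters are relaxed to range over $\nat \cup \set{\omega}$ --- and one shows that the instance is positive iff some leaf carries a \emph{perfect} decomposition, from which a concrete run can be reconstructed. The quantitative heart of the argument is to bound the size of this tree: each refinement step strictly decreases a well-founded rank built from the dimension and from the number of relaxed counters, and, using the length-function theorems for Dickson's lemma (controlled bad sequences over $\nat^d$), one bounds the whole procedure --- and hence the length of a shortest witnessing run --- by a function at level $\omega$ of the fast-growing hierarchy. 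Guessing and verifying such a run then places \Reachability~ in Ackermann.

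\textbf{Lower bound (Ackermann-hardness).} The plan is to reduce from the halting problem for counter machines whose counters are a priori bounded by an Ackermann-large value (equivalently, from reachability for lossy counter machines equipped with an Ackermann budget), following~\cite{CzerwinskiO21,Leroux21}. The core is a family of VASS \emph{amplifier} gadgets: a level-$k$ amplifier turns a run holding $n$ tokens on an input counter into one holding roughly $A_k(n)$ tokens on an output counter, where $A_k$ is the $k$-th Ackermann approximant, built by iterating a level-$(k-1)$ amplifier inside a controlled loop whose counter plays the role of the iteration parameter. Stacking a primitive-recursive number of such gadgets yields a VASS that can deposit an Ackermann-large value on a dedicated counter; this value is then used as the counter budget of a faithfully simulated bounded counter machine, and the requirement of \Reachability~ that \emph{all} counters return to $\mathbf{0}$ forces the simulation to be exact --- no tokens silently lost or created --- so that the target configuration is reachable iff the machine halts.

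\textbf{Main obstacle.} For the upper bound the delicate point is purely quantitative: showing that the recursive decomposition terminates fast enough, which requires the full machinery of well-structured transition systems and controlled bad sequences. For the lower bound the crux is the correctness of the amplifier gadgets, i.e.\ that the monotone (``weak'') computation they implement can neither overshoot nor undershoot the intended value once the global zero test is imposed. As both directions are major theorems in their own right, for the purposes of this paper we simply invoke~\cite{LerouxS19,CzerwinskiO21,Leroux21}.
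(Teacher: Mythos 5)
The paper does not prove this theorem; it imports it wholesale from the cited works (membership from Leroux--Schmitz, hardness from Czerwi\'nski--Orlikowski and Leroux), exactly as you ultimately do. Your outline of the two directions accurately reflects those works, so your proposal is correct and takes essentially the same approach as the paper.
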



%

\section{Solving  \Target~for Wait-Only Protocols}\label{sec:synchro}

To solve \Target~we show in this section that we can build a VASS that simulates the behavior of a broadcast network
running a Wait-Only protocol. An intuitive way to proceed, inspired by the counting abstraction proposed for protocols communicating by pairwise rendez-vous communication \cite{german92}, consists in having one counter per state of the protocol, that stores the number of processes that populate that state. Initially, the counter associated with the initial state can be incremented as much as one wants: this will determine the number of processes of the execution simulated in the VASS. Then, the simulation of broadcast messages $(q, !!a, q')$ amounts to decrementing the counter associated to $q$ and increment the counter associated to $q'$. 
However, simulating receptions of the message (e.g. a transition $(p, ?a, p')$), requires to empty the counter associated to $p$ and transfer its value to the
counter associated to $q'$. This transfer operation  is not something that can be done in VASS unless the model is extended with special transfer transitions, 
leading to an undecidable reachability problem \cite{valk-self-icalp78}. To circumvent this problem, we rely on two properties of Wait-Only protocols:
(1) processes that occur to be in the same waiting state follow the same path of reception in the protocol, and end in the same action state (we show how to take care of potential non-determinism in the next section) and (2) processes in an action state cannot be influenced by the behaviour of other processes as they cannot receive any message. Thanks to property (1), instead of precisely tracking the number of processes in each waiting state, we only count the processes
 in a ``waiting region'' -- a connected component of waiting states populated by processes that will all reach the same action state simultaneously. The waiting region  allows us also to monitor when the processes go out from a waiting region to an action state. 
 We will explain later how we use property (2) to handle the transfer of values of counters within the VASS semantics, and thus simulating processes leaving a waiting region. 

For the rest of this section, we fix $\PP = (Q, \Sigma, \qinit, T)$ a Wait-Only protocol (with $Q_W$ the set of waiting states and  $Q_A$ the set of action states) for which we assume w.l.o.g. the existence of an \emph{uncoverable state} $\garbagestate \in Q$ verifying   $q\neq \garbagestate$ and $q' \neq \garbagestate$ for all $(q, \alpha, q') \in T$  and $\garbagestate \neq \qinit$.
Furthermore, we consider a final waiting state $q_f \in Q_W$.

\begin{remark}
  To ease the reading, we assume in this section that the final state $q_f$ is a waiting state, but our construction could be adapted to the case where $q_f$ is an action state. Moreover, we show in the next section that \Target\ in this latter case is in fact easier to solve. 
\end{remark}

\subsection{Preliminary properties}

We present here properties satisfied by Wait-Only protocols, which we will rely on for our construction. 
We first show with \cref{lemma:resolving-non-determinism} that if, during an execution, two processes fulfill two conditions: ($i$) they are on the same waiting state $q_w$, and ($ii$) the next action state they reach (not necessarily at the same time) is the same (namely $q_a$), then one can build an execution where they both follow the same path between $q_w$ and $q_a$. This is trivial for \emph{deterministic} protocols  (where for all $q \in Q$ and $\alpha \in \Op_\Sigma$, there is at most one transition of the form $(q, \alpha, q') \in T$) and is also true for non-deterministic protocols.


For an execution $\rho \in \Lambda \cup \Lambda_\omega$ of $\PP$, an index $0 \leq j < |\rho|$, a waiting state $q \in Q_W$ and a process number $\aproc \in [1, \nbagents{\rho}]$ such that $\rho_j(\aproc) =q$, we define $\nextactionindex{\rho, j, \aproc } = \min\set{k \mid j \leq k \leq |\rho| \textrm{ and }\rho_k(\aproc) \in Q_A}$, i.e. the first moment after $\rho_j$ where the process $\aproc $ reaches an action state (if such moment does not exist, it is set to $|\rho|$).
We note $\nextactionstate{\rho, j, \aproc}\overset{\textrm{def}}{=}\rho_{\nextactionindex{\rho, j, \aproc}}(\aproc)$ the next action state for process $\aproc$ from $\rho_j$ if $\nextactionindex{\rho, j , \aproc} \neq |\rho|$ and otherwise we take the convention that $\nextactionstate{\rho, j, \aproc}$ is equal to  $\garbagestate$, the uncoverable state of the protocol \PP. 
%
Finally, we let $\nextaction{\rho, j, \aproc} = (\nextactionstate{\rho,j,\aproc}, \nextactionindex{\rho, j, \aproc})$. 


\begin{lemma}\label{lemma:resolving-non-determinism}
	Let $\rho \in \Lambda \cup \Lambda_\omega$ be an execution of $\PP$. If there exist $\aproc_1, \aproc_2 \in [1, \nbagents{\rho}]$ and $0 \leq j < |\rho|$ such that 
	\begin{enumerate}[(i)]
		\item $\rho_{j}(\aproc _1) = \rho_{j}(\aproc_2) \in Q_W$, 
		\item $\nextaction{\rho, j ,\aproc_1} = (q, j_1)$, and 
		\item $\nextaction{\rho, j, \aproc_2} = (q,j_2)$ with $j_1 \leq j_2 < |\rho|$,
	\end{enumerate}
	then there exists an execution $\rho'$ of the form $ \rho_{\leq j} \rho'_{j+1} \ldots \rho'_{j_2}\rho_{\geq j_2+1}$ such that $\rho'_k(\aproc_1) = \rho'_k(\aproc_2) = \rho_k(\aproc_1)$ for all $j+1 \leq k \leq j_1$, and $\rho'_k(\aproc_1) = \rho_k(\aproc_1) $  and $\rho'_k(\aproc_2) = q$ for all $j_1< k \leq j_2$. In particular $\nextaction{\rho', j ,\aproc_1}=\nextaction{\rho', j, \aproc_2} = (q,j_1)$.
\end{lemma}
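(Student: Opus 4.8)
## Proof strategy for Lemma~\ref{lemma:resolving-non-determinism}

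The plan is to rebuild the segment of the execution between index $j$ and index $j_2$ so that process $\aproc_2$ \emph{mimics} the trajectory of process $\aproc_1$, delaying at the end if necessary. The key observation is that in a Wait-Only protocol, from the moment $\aproc_1$ sits in the waiting state $q_w = \rho_j(\aproc_1)$ until it reaches the action state $q$ at index $j_1$, every transition it makes is a \emph{reception}: it is triggered by some broadcast performed by another process. Crucially, because $\aproc_1$ is in a waiting state throughout $[j, j_1)$, it never broadcasts, so the broadcasts that drive it are performed by \emph{other} processes, and those broadcasts still occur in the modified execution. The same is true of $\aproc_2$, which also sits in $q_w$ at index $j$ (condition~(i)). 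So the idea is: at each step where $\aproc_1$ receives a message $a$ and moves along a transition $(p, ?a, p')\in T$, we let $\aproc_2$ — which we will maintain in the same state $p$ — receive the same message $a$ along the \emph{same} transition $(p,?a,p')$. This forces $\aproc_2(k) = \aproc_1(k)$ for all $k \in [j+1, j_1]$, and in particular $\aproc_2$ reaches $q$ at index $j_1$.

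\textbf{Construction.} For $k \in [j+1, j_2]$, I build $\rho'_k$ from $\rho_k$ by changing only the state of process $\aproc_2$. Concretely, for $j+1 \le k \le j_1$ set $\rho'_k(\aproc_2) := \rho_k(\aproc_1)$, and for $j_1 < k \le j_2$ set $\rho'_k(\aproc_2) := q$; leave all other process coordinates (including $\aproc_1$) untouched, i.e. $\rho'_k(\aproc) = \rho_k(\aproc)$ for $\aproc \neq \aproc_2$. I then need to check that each step $\rho'_{k-1} \transup{\aproc,t} \rho'_k$ is a valid broadcast transition, where $(\aproc,t)$ is the \emph{same} labelled transition used in the original step $\rho_{k-1} \transup{\aproc,t} \rho_k$ (note $\aproc \neq \aproc_2$ always, since $\aproc_2$ never broadcasts while in a waiting state, and for $k > j_1$ process $\aproc_2$ sits frozen in $q$). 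The broadcasting process $\aproc$ is unaffected. For each non-broadcasting process other than $\aproc_2$ nothing changes, so the reception/no-reception condition is inherited from the original execution. The only coordinate needing verification is $\aproc_2$: for $k \le j_1$, in the original execution $\aproc_1$ goes from $\rho_{k-1}(\aproc_1)$ to $\rho_k(\aproc_1)$; if this was a genuine reception of the broadcast message $a$, we use the same transition for $\aproc_2$ (legitimate since $\rho'_{k-1}(\aproc_2) = \rho_{k-1}(\aproc_1)$ by induction); if $\aproc_1$ stayed put because $a \notin R(\rho_{k-1}(\aproc_1))$, then $\aproc_2$ also stays put for the same reason. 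Thus the requirement on the coordinate $\aproc_2$ is met, and $\rho'_{k-1} \trans \rho'_k$. For $j_1 < k \le j_2$, process $\aproc_2$ is in the action state $q$, which by the Wait-Only property cannot receive anything, so $q \notin R(q)$ and $\aproc_2$ legitimately stays in $q$. Finally, at index $j_2$ we have $\rho'_{j_2}(\aproc_2) = q$; in the original execution $\aproc_2$ also reaches $q$ exactly at index $j_2$ (since $\nextactionindex{\rho,j,\aproc_2} = j_2$, meaning $q$ is the first action state $\aproc_2$ hits), and both agree on every other coordinate with $\rho_{j_2}$, so $\rho'_{j_2} = \rho_{j_2}$ and the execution can be continued with the original tail $\rho_{\geq j_2+1}$, giving a bona fide execution $\rho'$.

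\textbf{Bookkeeping and the main obstacle.} Two small points need care. First, I must check that $\rho'_j = \rho_j$ so that the prefix $\rho_{\le j}$ glues correctly onto the new segment: this holds because condition~(i) gives $\rho_j(\aproc_2) = \rho_j(\aproc_1)$, and the first segment does not change $\aproc_2$'s state from $\rho_j(\aproc_1)$. Second, I need that $\aproc_1$'s own trajectory is genuinely a sequence of receptions on $[j, j_1)$ — this is exactly where the Wait-Only hypothesis is used: $\rho_j(\aproc_1) \in Q_W$ and $\aproc_1$ does not reach an action state before $j_1$, so every state $\rho_k(\aproc_1)$ for $j \le k < j_1$ is a waiting state, hence $\aproc_1$ never broadcasts on this stretch, and each of its actual moves is forced by an external broadcast. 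The main obstacle — and the only genuinely delicate part — is verifying step-by-step validity at the coordinate $\aproc_2$ when $\aproc_1$ \emph{did not move} at some step (the no-reception case): one must argue that $a \notin R(\rho_{k-1}(\aproc_1))$ transfers to $\rho'_{k-1}(\aproc_2) = \rho_{k-1}(\aproc_1)$, which is immediate since the two states are literally equal by the induction hypothesis. Once this invariant $\rho'_{k}(\aproc_2) = \rho_k(\aproc_1)$ (for $k \le j_1$) is maintained, everything else is routine. The concluding statement $\nextaction{\rho',j,\aproc_1} = \nextaction{\rho',j,\aproc_2} = (q, j_1)$ then follows because both processes now sit in waiting states on $(j, j_1)$ and both land in the action state $q$ at index $j_1$.
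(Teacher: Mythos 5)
Your proposal is correct and follows essentially the same route as the paper's own proof: you define $\rho'$ by modifying only the coordinate of $\aproc_2$ (making it copy $\rho_k(\aproc_1)$ up to $j_1$ and then stay in $q$ until $j_2$), observe that $\rho'_{j_2}=\rho_{j_2}$ so the original tail glues on, and verify each step by reusing the original broadcaster and transition, splitting on whether $k\leq j_1$ (where $\rho'_{k-1}(\aproc_2)=\rho_{k-1}(\aproc_1)$ lets $\aproc_2$ inherit $\aproc_1$'s reception or non-reception) or $k>j_1$ (where $q\in Q_A$ gives $R(q)=\emptyset$). This matches the paper's construction and induction, so no further comparison is needed.
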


\begin{example}
	Consider the protocol $\PP$ in \cref{fig:example-1} and the execution $\rho$ of \cref{example:exec}. 
	We have $\rho_2(1) = \rho_2(2) = q_1$ and $\nextaction{\rho,2, 1} = (q_3, 4)$ and $\nextaction{\rho,2, 2} = (q_3,5)$. Applying \cref{lemma:resolving-non-determinism}, we can build: $\rho' = (\qinit, \qinit, \qinit) \trans (q_1, \qinit, \qinit) \trans (q_1, q_1, \qinit) \trans (q_5, q_5, \qinit) \trans (q_3, q_3, \qinit) \trans (q_3, q_3, q_7)$ where process $1$ and process $2$ follow the same path between $q_1$ and $q_3$. However, if you consider the following execution: $\rho'' = (\qinit, \qinit, \qinit) \trans (q_1, \qinit, \qinit) \trans (q_1, q_1, \qinit) \trans (q_2, q_5, \qinit) \trans (q_3, q_6, q_7) \trans (q_4, q_4, q_4)$, we cannot apply the lemma as from $q_1$, processes $1$ and $2$ reach two different next action states ($q_3$ and $q_4$). 
\end{example}


%
The above lemma enables us to focus on a specific subset of executions for \Target,~which we refer to as well-formed. In these executions, at any moment, two processes in the same waiting state and with the same next action state, follow the same path of receptions. Formally, an execution $\rho$ of a protocol $\PP$ is \emph{well-formed} iff for all $0 \leq i < |\rho|$, for all $\aproc_1, \aproc_2\in [1, \nbagents{\rho}]$ such that $\rho_i(\aproc_1)  = \rho_i(\aproc_2)\in Q_W$ and $\nextactionstate{\rho, i, \aproc_1} = \nextactionstate{\rho, i, \aproc_2} = q$, it holds that $\rho_k(\aproc_1) = \rho_k(\aproc_2)$ for all $i \leq k\leq \nextactionindex{\rho, i, \aproc_1}$.
%
%
From \cref{lemma:resolving-non-determinism}, we immediately get:
\begin{corollary}\label{cor:well-formed}
There exists an execution $\rho = C_0 \ldots C_n$ such that $C_n(\aproc) = q_f$ for all $\aproc \in [1, \nbagents{\rho}]$ iff there exists a well-formed execution from $C_0$ to $ C_n$.
\end{corollary}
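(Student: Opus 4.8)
\textit{Proof plan for \cref{cor:well-formed}.}
The right-to-left implication is immediate, since a well-formed execution is in particular an execution: a well-formed execution from $C_0$ to $C_n$ with $C_n(\aproc)=q_f$ for all $\aproc$ directly witnesses the left-hand side. For the converse, I would fix an execution $\rho=C_0\ldots C_n$ with $C_n(\aproc)=q_f$ for all $\aproc\in[1,\nbagents{\rho}]$ and transform it, through repeated use of \cref{lemma:resolving-non-determinism}, into a well-formed execution with the same first and last configurations (and, incidentally, the same length, as the lemma preserves the number of configurations). One preliminary observation makes the bookkeeping painless: since $q_f\in Q_W$ we have $\rho_n(\aproc)\notin Q_A$, hence $\nextactionindex{\rho,i,\aproc}\neq n$ for every $i$ and every process $\aproc$; therefore, each time \cref{lemma:resolving-non-determinism} is applied its index $j_2$ is strictly below $n$, so the tail $\rho_{\geq j_2+1}$, and in particular the final configuration $C_n$, is left untouched.

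The plan is then a well-founded induction on the number of \emph{defects} of an execution, where a defect is an (unordered) pair $\{\aproc_1,\aproc_2\}$ together with an index $i$ such that $\rho_i(\aproc_1)=\rho_i(\aproc_2)\in Q_W$, $\nextactionstate{\rho,i,\aproc_1}=\nextactionstate{\rho,i,\aproc_2}$, yet $\rho_k(\aproc_1)\neq\rho_k(\aproc_2)$ for some $i\leq k\leq\nextactionindex{\rho,i,\aproc_1}$; an execution is well-formed exactly when it has no defect. If $\rho$ has a defect, let $i$ be the \emph{largest} index carrying one, choose $q_w\in Q_W$ and a next action state $q$ (possibly the virtual uncoverable state $\garbagestate$) for which the group $G=\set{\aproc\mid\rho_i(\aproc)=q_w,\ \nextactionstate{\rho,i,\aproc}=q}$ contains two processes with divergent paths, and let $\aproc_1\in G$ minimize $\nextactionindex{\rho,i,\cdot}$. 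Applying \cref{lemma:resolving-non-determinism} repeatedly, once for each other member of $G$ (always with $j=i$ and fixed first process $\aproc_1$; this is legitimate since $\aproc_1$ is the first of $G$ to reach an action state and since neither $\aproc_1$ nor the member being merged was affected by the earlier merges), I would make every process of $G$ follow $\aproc_1$'s path from $i$ until $\nextactionindex{\rho,i,\aproc_1}$ and then rest in $q$. After such a round, $G$ has no internal divergence at index $i$, so at least one defect disappears. The crucial claim is that no defect is created: \cref{lemma:resolving-non-determinism} changes only the trajectory of the merged-in process, and on the window where that trajectory becomes a copy of $\aproc_1$'s it lies in $Q_W$ only at indices $\leq i$; hence a freshly created defect could only sit at index $i$ (and is removed once the whole group is merged) or would mirror a defect already present at some index $>i$, which is impossible by maximality of $i$. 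Thus every round strictly decreases the number of defects, so after finitely many rounds we obtain a well-formed execution from $C_0$ to $C_n$, as required.

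The main obstacle is the verification that a round strictly decreases the measure, which rests on two facts to be read off the proof of \cref{lemma:resolving-non-determinism}: first, that the construction it provides modifies only the trajectory of the second process (so that the effect of a merge is localized, as used above); and second, that an analogous argument is available when the shared ``next action state'' is the artificial state $\garbagestate$ — the case where $\aproc_1$ and $\aproc_2$ never reach $Q_A$ after index $i$ and hence remain in $Q_W$ all the way to the all-$q_f$ configuration $C_n$. In that situation the hypothesis $j_2<|\rho|$ of \cref{lemma:resolving-non-determinism} fails, and one instead uses the easy observation that a process that stays in waiting states until the end of the execution can be made to mimic, reception by reception, another such process sitting in the same state: since in a Wait-Only protocol receptions generate no broadcasts, adding a process as a receiver of each broadcast changes nothing for the remaining processes and preserves the final configuration. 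With these two facts in hand, the induction above goes through and closes the left-to-right implication.
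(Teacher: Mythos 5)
Your proposal is correct and follows essentially the paper's route: the paper obtains \cref{cor:well-formed} by iterating \cref{lemma:resolving-non-determinism}, exactly as you do, and your defect-counting measure together with the separate treatment of pairs whose common next action state is the artificial $\garbagestate$ (where the lemma's hypothesis $j_2 < |\rho|$ fails, and one instead copies receptions to the end, which is harmless in a Wait-Only protocol) only fills in details the paper leaves implicit. One local slip does not harm the argument: the merged-in process's new trajectory lies in $Q_W$ at all indices up to $j_1-1$, not only at indices $\leq i$, but those indices are exactly the ones handled by your ``mirror a defect of $\aproc_1$ at an index $>i$, contradicting maximality'' clause, so each round still strictly decreases the number of defects and the induction goes through.
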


We finally provide a result which will allow us to bound the size of the VASS that we will build from the given Wait-Only protocol. Given an execution $\rho \in \Lambda \cup \Lambda_\omega$ of $\PP$, an index $0 \leq j < |\rho|$ and an action state $q_a \in Q_A$,  we define $\nextactionindexset{\rho,j,q_a}$ as the set of indices where processes in a waiting state at $\rho_j$
will reach $q_a$, if it is their next action state: $\nextactionindexset{\rho,j,q_a} \overset{\textrm{def}}{=} \set{i  \mid \textrm{ there exists } \aproc \in [1, \nbagents{\rho}]\text{ s.t. } \rho_j(\aproc) \in Q_W \text{ and }\nextaction{\rho,j,\aproc} = (q_a,i)}$.


%
\begin{lemma}\label{lemma:well-formed-execution:bound-next-action}For all well-formed executions $\rho$, for all $0 \leq j < |\rho|$, and for all $q_a \in Q_A$, we have $|\nextactionindexset{\rho,j,q_a} | \leq |Q_W|$.
\end{lemma}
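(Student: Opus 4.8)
The statement asserts that for a well-formed execution $\rho$, a fixed index $j$, and a fixed action state $q_a$, the set of indices at which processes waiting at time $j$ first reach $q_a$ as their next action state has at most $|Q_W|$ elements. I'll approach this by relating each such "arrival index" $i \in \nextactionindexset{\rho,j,q_a}$ to the waiting state occupied at time $j$ by the processes that arrive at that index.

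**Key observation and the map.** The plan is to define, for each $i \in \nextactionindexset{\rho,j,q_a}$, the set $W_i$ of waiting states from which some process at $\rho_j$ has next action $(q_a, i)$; then argue these $W_i$ are pairwise disjoint nonempty subsets of $Q_W$, giving $|\nextactionindexset{\rho,j,q_a}| \le |Q_W|$. The heart is disjointness: suppose a process $\aproc_1$ with $\rho_j(\aproc_1) = q_w \in Q_W$ has next action $(q_a, i_1)$ and $\aproc_2$ with $\rho_j(\aproc_2) = q_w$ (the same $q_w$) has next action $(q_a, i_2)$. Since $\rho$ is well-formed, conditions $\rho_j(\aproc_1) = \rho_j(\aproc_2) \in Q_W$ and $\nextactionstate{\rho,j,\aproc_1} = \nextactionstate{\rho,j,\aproc_2} = q_a$ hold, so well-formedness forces $\rho_k(\aproc_1) = \rho_k(\aproc_2)$ for all $j \le k \le \nextactionindex{\rho,j,\aproc_1}$. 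In particular the two processes reach an action state at the same index, so $i_1 = i_2$. Hence each waiting state $q_w$ contributes to at most one $W_i$, i.e. the $W_i$ are pairwise disjoint.

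**Finishing.** Each $W_i$ is nonempty by the very definition of $\nextactionindexset{\rho,j,q_a}$ (there is a witnessing process $\aproc$ with $\rho_j(\aproc) \in Q_W$ and next action $(q_a,i)$, so $\rho_j(\aproc) \in W_i$). Thus $\{W_i \mid i \in \nextactionindexset{\rho,j,q_a}\}$ is a collection of pairwise disjoint nonempty subsets of $Q_W$ indexed injectively by $\nextactionindexset{\rho,j,q_a}$ (injectivity because distinct indices give disjoint, hence distinct, sets), whence $|\nextactionindexset{\rho,j,q_a}| \le |Q_W|$.

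**Expected obstacle.** The only delicate point is being careful with the well-formedness hypothesis: it is stated for processes having the same *next action state* (via $\nextactionstate$), and one must check that when both processes have next action state $q_a$ (not $\garbagestate$), the index $\nextactionindex{\rho,j,\aproc_1}$ at which well-formedness synchronizes them is genuinely the first action-state index for *both*, so that their arrival indices coincide. This is immediate from the definition of $\nextactionindex$ once we know $\rho_k(\aproc_1) = \rho_k(\aproc_2)$ on the relevant range, but it deserves an explicit sentence. No nontrivial combinatorics or estimates are needed beyond the pigeonhole-style counting.
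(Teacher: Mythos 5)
Your proof is correct and follows essentially the same approach as the paper's: both arguments hinge on the observation that, by well-formedness, two processes occupying the same waiting state at time $j$ with the same next action state $q_a$ must reach $q_a$ at the same index, and then conclude by a pigeonhole-style count over $Q_W$. The paper phrases this as a direct contradiction (if $|\nextactionindexset{\rho,j,q_a}| > |Q_W|$, two witnesses share a waiting state but have distinct arrival indices), while you package the same idea as disjointness of the sets $W_i$; the difference is purely presentational.
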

\begin{proof}
  If we have $|\nextactionindexset{\rho,j,q_a} | > |Q_W|$, by pigeon hole principle there exist $\aproc_1,\aproc_2 \in [1, \nbagents{\rho}]$ such that $\rho_j(\aproc_1) = \rho_j(\aproc_2) \in Q_W$ and such that  $\nextaction{\rho,j,\aproc_1} = (q_a,i_1)$ and $\nextaction{\rho,j,\aproc_2} = (q_a,i_2)$ with  $i_1 < i_2$. Consequently $\rho_{i_1}(\aproc_1) = q_a \in Q_A$ and $\rho_{i_1}(\aproc_2) \in Q_W$ hence $\rho_{i_1}(\aproc_1) \neq \rho_{i_1}(\aproc_2)$, which contradicts the definition of well-formedness.
\end{proof}

\subsection{Building the VASS that Simulates a Broadcast Network}

\subsubsection{Summaries}

We present here the formal tool we use to represent processes in waiting states. We begin by introducing some notations.
A \emph{print} $\print$  is a set of waiting states. 
Given a non empty subset of states $A = \set{q_1, \dots, q_n} \subseteq Q$, we define a configuration $\conf{A} \in Q^{n}$  such that $\conf{A}(\aproc) = q_\aproc$ for all $\aproc \in [1,n]$. When $A=\set{q}$, we write $\confstate{q} \in Q^1$ instead of $\conf{\set{q}}$.
	Conversely, given a configuration $C$, we define $\setof{C}= \set{q\in Q \mid C^{-1}(q) \neq \emptyset}$.
For two configurations $C\in Q^n$ and  $C' \in Q^m$, we let $C \oplus C'$ be the configuration $C'' \in Q^{n+m}$ defined by: $C''(\aproc) = C(\aproc)$ if $1 \leq \aproc \leq n$ and $C''(\aproc) = C'(\aproc - n)$ if $n+1 \leq \aproc \leq m+n$. 

A \emph{summary} $\summary$ is a tuple $( \print, q_a, k)$ such that $\print\subseteq Q_W$ is a print, $q_a\in Q_A$ is an \emph{exit state}, and $k \in [1, |Q_W|+1]$ is an identifier. The label of $\summary$ is $\lab{\summary}=(q_a,k)$ and we denote its print by $\printof{\summary}$.
Finally, we define a \emph{special} summary \arrived.

In our construction, each process in a waiting state is associated with a summary while processes in action states are handled by the counters of the VASS. Intuitively, a summary $(\print, q_a, k)$ represents a set of processes lying in the set $\print \subseteq Q_W$ that will reach the same next action state  $q_a$ 
simultaneously (this restriction is justified by \cref{lemma:resolving-non-determinism}). Since the set of waiting states
$\print$ evolve during the simulation, each summary must be uniquely identified. To achieve this, we use an integer $k \in [1, |Q_W|+1]$. Hence each summary is identified with the pair $(q_a,k)$. We do not need more than $|Q_W|+1$ different identifiers, because when a process enters a new summary (i.e it arrives in a waiting state $q_w$ from an action state),  aiming for next action state $q_a$, it either joins an existing summary with exit state $q_a$, or create a new one. In the
latter case, well-formed executions ensure that no existing summary $S=(\print, q_a, k)$ with exit state $q_a$ is such that 
$q_w\in\print$. 
Otherwise two processes would be in the same state $q_w$, aiming for the same action state, but at different moments, contradicting 
\cref{lemma:resolving-non-determinism}. Note that we need $|Q_W|+1$ different identifers, and not $|Q_W|$ for technical reasons.
Finally, the special summary \arrived\ is used to indicate when the processes described by a summary reach the exit action state.

We now describe how summaries evolve with the sequence of transitions. Let $S = (\print, q_a, k)$ and $S' = (\print', q_a, k)$ be two summaries (with the same exit state and identifier) and $t = (q, !!b, q') \in T$ such that there exists a configuration $C'$ verifying $\confstate{{q}} \oplus \conf{\print}   \transup{1, t} \confstate{{q'}} \oplus C'$. We then write $S \abtransup{t} S'$ whenever $\setof{C'} = \print' \subseteq Q_W$. This represents the evolution of a summary upon reception of message $b$, when the processes all stay in waiting states, and no new process joins the ``waiting
region'' of the given summary. We write $S \abtransup{t, +q'} S'$ whenever $q'\in Q_W $ and $\setof{C'} \cup \{q'\} = \print' \subseteq Q_W$. In that latter case, the process responsible for the transition $t$ joins the ``waiting region'' represented by $S$. 
This typically occurs when the process’s next action state is $q_a$, and it reaches $q_a$ simultaneously with the processes described by $S$.
Finally, we use $S \abtransup{t} \arrived$ whenever $\setof{C'} = \set{q_a}$. This
	represents the evolution (and disappearance) of the summary when all the processes reach $q_a$ (they all reach it simultaneously).



\begin{example}
	Returning to the protocol $\PP$ of \cref{fig:example-1}, consider the summary $S_0 = (\set{q_1}, {q}_{3}, 1)$. Observe that $S_0 \abtransup{(\qinit, !!a, \qinit)} (\set{{q}_{2}}, {q}_{3}, 1)$ and  $S_0 \abtransup{(\qinit, !!a, \qinit)} (\set{{q}_{5}}, {q}_{3}, 1)$. However, by definition, we do not have $S_0 \abtransup{(\qinit, !!a, \qinit)} (\set{{q}_{5}, {q}_{2}}, {q}_{3}, 1)$. Indeed, processes summarized in $S_0$ are forced to all go either on ${q}_{2}$ or on ${q}_{5}$ upon receiving $a$: thanks to \cref{cor:well-formed}, we consider only well-formed executions, where all processes summarized in $S_0$  choose the same next state. 
Additionnaly, we have $(\set{{q}_{2}}, {q}_{3}, 1) \abtransup{(\qinit, !!b, q_7)} \arrived$
and $(\set{q_1}, {q}_{4}, 1) \abtransup{(\qinit, !!a, \qinit)} (\set{{q}_{5}}, {q}_{4}, 1) \abtransup{(\qinit, !!b, q_7), +q_7} (\set{{q}_{6}, q_7}, {q}_{4}, 1) \abtransup{(\qinit, !!d, q_1)} \arrived$. However, note that we do not have  $(\set{{q}_{2}}, {q}_{4}, 1) \abtransup{(\qinit, !!b, q_7)} \arrived$, since the action state ${q}_{3}$ reached from ${q}_{2}$ upon receiving $b$ is not the exit state ${q}_{4}$.
\end{example}

\old{
	\new{
		EN DEFINISSANT CA ON VEUT DEFINIR EN MEME TEMPS LA TRANSITION DU VASS 
		Let $\SS$, $\SS'$ be two coherent sets of summaries and $t = (q, !!a, q')$. We write $\SS \abtransUp{t} \SS'$ whenever the following holds:
		\begin{itemize}
			\item there exists $\SS''$ a coherent set of summaries such that (1) for all $S \in \SS$, either $S \trans \arrived$ or there exists $S' \in \SS''$ such that $S \trans S'$; and (2) for all $S' \in \SS''$, there exists $S \in \SS$ such that $S \trans S'$.
			\item if $q' \in Q_A$, $\SS' = \SS''$, otherwise, either (a) there exists $S = (\print, s, k) \in \SS''$ such 
		\end{itemize}
	}
	\color{red}
	Let $\SS$, $\SS'$ be two coherent sets of summaries and $t = (q, !!a, q')$. We write $\SS \abtransUp{t} \SS'$ whenever the following holds:
	\begin{itemize}
		\item there $\SS''$ a coherent set of summaries such that (1) for all $S \in \SS$, either $S \trans \arrived$ or there exists $S' \in \SS''$ such that $S \trans S'$; and (2) for all $S' \in \SS''$, there exists $S \in \SS$ such that $S \trans S'$.
	\end{itemize}
	\color{black}
}

\subsubsection{Definition of the VASS}
\label{sec:def-vass}
We now explain how we use the summaries in the VASS simulating the executions in the network. First we say that  a set $\SS$ of summaries is \emph{coherent} if for all distinct pairs $(\print_1, q_1, k_1), (\print_2, {q}_{5}, k_2)\in \SS$ such that $q_1={q}_{5}$, we have $k_1 \neq k_2$ and $\print_1 \cap \print_2 = \emptyset$. We denote by $\CoherentSets$ the set of coherent sets of summaries. For a set of summaries $\SS$ we let $\lab{\SS} = \set{\lab{S} \mid S \in \SS}$. Observe that, when $\SS$ is coherent, for a label $(q, k) \in \lab{\SS}$, there is a unique $S \in \SS$ such that $\lab{S} = (q,k)$.

We define the VASS $\mathcal{V}_\PP$ simulating the protocol $\PP$ as follows: $\mathcal{V}_\PP = (\Loc, s_0, \Counters, \Delta)$, where $\Loc = \CoherentSets \cup \set{s_0} \cup \set{s_f}$, the set of counters is $\Counters = \set{\counter_q \mid q \in Q_A} \cup \set{\counter_{{(q,k)}} \mid q\in Q_A, k \in [1, |Q_W|+1]}$, and the set of 
transitions  $\Delta = \Delta_0 \cup (\bigcup_{t \in T} \Delta_t) \cup \Delta_e$, is defined as follows:
\begin{itemize}
\item $\Delta_0$ contains exactly  the following transitions:
\begin{itemize}
	\item $(s_0, \delta_0, s_0)$ where $\delta_0(\counter_{\qinit}) = 1$ and $\delta_0(\counter) = 0$ for all other counters;
	\item $(s_0, \textbf{0}, \emptyset)$ where $\textbf{0}$ is the null vector (observe that the empty set is a coherent set of summaries);
	\item $(\set{S_f}, \textbf{0}, s_f) \in \Delta$ where $S_f= (\set{q_f}, q_u, 1)$; 
	\item $(s_f, \delta_f, s_f) \in \Delta$ where $\delta_{f}(\counter_{(q_u, 1)}) = -1$ and $\delta_f(\counter) = 0$ for all other counters.
\end{itemize}

\item For $t = (q,!!a,q') \in T$, we have  $(\SS, \delta, \SS') \in \Delta_t$ iff one of the following conditions holds:
\begin{enumerate}[a.]
	\item \label{cond-abconftrans-broadcast-action-state}$q' \in Q_A$ and $\SS = \set{S_1, \dots, S_k}$. Then, for all $1 \leq i \leq k$, there exists $S'_i$ such that $S_i \abtransup{t} S'_i$ and $\SS' = \set{S'_i \mid 1 \leq i \leq k}\setminus\set{\arrived}$. Moreover, $\delta(\counter_q) = - 1$, $\delta(\counter_{q'}) = 1$ and $\delta(\counter) = 0$ for all $\counter \in \Counters \setminus \set{\counter_q,\counter_{q'}}$. Here we simply update the sets of states populated by processes in waiting states after the transition $t$. Some summaries
	may have disappeared from $\SS$ if the processes represented by this summary have reached their exit action state when receiving $a$.  For now, we only modify the counters associated to the states $q$ and $q'$. Note that this is well-defined, since we assume in \cref{rem:noselfloop} that there is no self-loop of the form $(q, !!a, q)$.
	
	\item \label{cond-abconftrans-broadcast-join-existing}$q' \in Q_W$, $\SS = \set{S_1, \dots, S_k}$, and there exists $1 \leq i \leq k$ and $S'_i$ such that $S_i \abtransup{t, +q'} S'_i$. For all $j \neq i$, there exists $S'_j$ such that $S_j \abtransup{t} S'_j$. Then $\SS'= \set{S'_j \mid 1 \leq j \leq k}\setminus\set{\arrived}$. 
	Moreover, $\delta(\counter_q) = - 1$,  $\delta(\counter_{\lab{S_i}}) =1$ and $\delta(\counter) = 0$ for all $\counter \in \Counters \setminus \set{\counter_q,\counter_{{\lab{S_i}}}}$. In that case, the process having 
	performed the transition joins an existing summary $S_i$. We have then modified accordingly the counters associated to $q$ and to the appropriate summary.
	
	\item \label{cond-abconftrans-broadcast-new-summary}$q' \in Q_W$ and $\SS = \set{S_1, \dots, S_k}$.  For all $1 \leq j \leq k$, there exists $S'_j$ such that $S_j \abtransup{t} S'_j$. Then $\SS'=(\set{S'_1,\dots, S'_k}\setminus\{\arrived\})\cup \set{(\set{q'}, q_a, k)}$ for some $q_a \in Q_A$ and $k \in [1, |Q_W|+1]$ such that $(q_a, k)\nin \lab{\SS}$. Moreover, $\delta(\counter_q) = - 1$, $\delta(\counter_{(q_a, k)}) = 1$, and $\delta(\counter )= 0$ for all $\counter \in \Counters \setminus \set{\counter_q,\counter_{(q_a,k)}}$. This case happens
	when the process having sent the message $a$ reaches a waiting state, and its expected next action pair (index and state)
	does not correspond to any existing summary. In that case, it joins a new summary, the counter associated to the state $q$ is decremented and
	the counter of this new summary is incremented. 
\end{enumerate}

\item Finally, $\Delta_e$ allows to empty the counters associated to summaries that have disappeared from the set of locations. It is defined as the set of transitions $(\SS, \delta, \SS')$ such that:
	$\SS = \SS'$ and there exists $(q,k) \nin \lab{\SS}$ with $\delta(\counter_{q}) =  + 1$ and $\delta(\counter_{(q, k)}) = - 1$ and for all other counters, $\delta(\counter) = 0$.
\end{itemize}
Transitions from $\Delta_e$ allow the transfer of counter values from summaries that have disappeared to the counter of their exit action state. This transfer is not
immediate, as it is done through iterative decrements and increments of counters. In particular, it is possible for the execution of the VASS to continue without
the counter of a disappeared summary being fully transfered. The processes represented by the counter of a disappeared summary behave in the same way
as processes in the exit action state that do not take any action. 
These counters can be emptied later, but always from a location from where no summary with the same label exists. This ensures that there is no ambiguity between processes in a waiting region and those on an action state that have not yet been transferred to the appropriate counter. 
\begin{figure}
\begin{minipage}[c]{0.50\linewidth}
		\tikzset{box/.style={draw, minimum width=4em, text width=4.5em, text centered, minimum height=17em}}

\begin{tikzpicture}[->, >=stealth', shorten >=1pt,node distance=2cm,on grid,auto, initial text = {},square/.style={regular polygon,regular polygon sides=4}] 
	\node[state, square, initial] (s0) {$s_0$};
	\node[state, square] (s1) [right = of s0 , xshift = -20] {$\emptyset$};
	\node[state, rectangle] (s) [right = of s1 ] {$\set{q_f}, q_u, 1$};
	\node[state, square] (sf) [right = of s ] {$s_f$};
	
	\node[draw, fill = cyan, fill opacity = 0.2, text opacity = 1, fit=(s1) (s), text height = 0] (VASS) [yshift = 0]{\CoherentSets};
	
	\path[->] 
	(s0) edge [loop above] node [] {\inc{\counter_{\qinit}}} ()
	edge [] node {} (s1)
	(s) edge [] node {} (sf)
	(sf) edge [loop above] node [] {\dec{\counter_{(q_u, 1)}}} ()
	;
\end{tikzpicture}

%
%
%
	\caption{The structure of the VASS $\mathcal{V}$.}\label{fig:big-picture-vass}

		\tikzset{box/.style={draw, minimum width=4em, text width=4.5em, text centered, minimum height=17em}}
\tikzset{
	loop/.code={
		\let\tikztotarget\tikztostart,
		\pgfkeysalso{looseness=2,min distance=+5mm,every loop}}}
\begin{tikzpicture}[->, >=stealth', shorten >=1pt,node distance=2.2cm,on grid,auto, initial text = {},square/.style={regular polygon,regular polygon sides=4}] 
	\node[state, rectangle] (s1) {\shortstack{$\set{{q}_{2}}, {q}_{3}, 1$\\ $\set{q_1}, {q}_{3}, 2$\\ $\set{q_1, {q}_{5}}, {q}_{4}, 1$}};
	\node[state, rectangle] (s2) [right = of s1, xshift = 50, inner ysep = 4]{\shortstack{$\set{q_1}, {q}_{3}, 2$\\ $\set{q_1, {q}_{6}, q_7}, {q}_{4}, 1$}};

	
	\node[ rounded corners, fill = green, fill opacity = 0.2, text opacity = 1, fit=(s1),inner xsep= -3.5 , inner ysep= -16] () [yshift = 13]{};
	\node[ rounded corners, fill = orange, fill opacity = 0.3, text opacity = 1, fit=(s1), inner xsep= -3.5 ,inner ysep= -16] () [yshift = 0]{};
	\node[ rounded corners, fill = magenta, fill opacity = 0.3, text opacity = 1, fit=(s1), inner xsep= -3.5 ,inner ysep= -16] () [yshift = -13]{};
	
	\node[ rounded corners, fill = orange, fill opacity = 0.3, text opacity = 1, fit=(s2), inner xsep= -3.5 ,inner ysep= -11] () [yshift = 7]{};
	\node[ rounded corners, fill = magenta, fill opacity = 0.3, text opacity = 1, fit=(s2), inner xsep= -3.5 ,inner ysep= -11] () [yshift = -7]{};

	\path[->] 
	(s1) edge [] node [] {\small\shortstack{\dec{\counter_{\qinit}}\\ \inc{\counter_{({q}_{4},1)}}}} (s2)
	(s2) edge [loop right] node [] {\small\shortstack{\dec{\counter_{({q}_{3},1)}}\\ \inc{\counter_{{q}_{3}}}}} ()
	;
\end{tikzpicture}
	\caption{One successor of location $\set{(\set{{q}_{2}}, {q}_{3}, 1), (\set{q_1}, {q}_{3}, 2), (\set{q_1, {q}_{5}}, {q}_{4}, 1)}$ (left location) after the broadcast transition $(\qinit, !!b, q_7)$}\label{fig:vass-deleting-summary}
\end{minipage} \hfill
\begin{minipage}[c]{0.40\linewidth}
		\tikzset{box/.style={draw, minimum width=4em, text width=4.5em, text centered, minimum height=17em}}

\begin{tikzpicture}[->, >=stealth', shorten >=1pt,node distance=2.5cm,on grid,auto, initial text = {},square/.style={regular polygon,regular polygon sides=4}] 
	\node[state, rectangle] (s1) {$\set{{q}_{2}}, {q}_{3}, 1$};
	\node[state, rectangle] (s2) [right = of s1, xshift = 30, inner ysep = 6] {\shortstack{$\set{{q}_{2}}, {q}_{3}, 1$\\ $\set{q_1}, {q}_{4}, 1$}};
	\node[state, rectangle] (s3) [below right  = of s1, xshift = 50, yshift = -20 , inner ysep = 6] {\shortstack{$\set{{q}_{2}}, {q}_{3}, 1$\\ $\set{q_1}, {q}_{3}, 2$}};
	\node[state, rectangle] (s4) [below = of s1, yshift = 0 ] {$\set{q_1,{q}_{2}}, {q}_{3}, 1$};

	
	\node[ rounded corners, fill = green, fill opacity = 0.2, text opacity = 1, fit=(s1), inner sep= -4] () {};
	\node[ rounded corners, fill = green, fill opacity = 0.2, text opacity = 1, fit=(s2), inner xsep= -3.5 ,inner ysep= -11] () [yshift = 7]{};
	\node[ rounded corners, fill = magenta, fill opacity = 0.2, text opacity = 1, fit=(s2), inner xsep= -3.5 ,inner ysep= -11] () [yshift = -7]{};
	\node[ rounded corners, fill = green, fill opacity = 0.2, text opacity = 1, fit=(s3), inner xsep= -3.5 ,inner ysep= -11] () [yshift = 7]{};
	\node[ rounded corners, fill = orange, fill opacity = 0.3, text opacity = 1, fit=(s3), inner xsep= -3.5 ,inner ysep= -11] () [yshift = -7]{};
	\node[ rounded corners, fill = green, fill opacity = 0.2, text opacity = 1, fit=(s4), inner sep= -4] () {};
	
	\path[->] 
	(s1) edge [] node [] {\shortstack{\dec{\counter_{\qinit}}\\ \inc{\counter_{({q}_{4},1)}}}} (s2)
	(s1) edge [] node [yshift = -10, xshift = 5]{\shortstack{\dec{\counter_{\qinit}}\\ \inc{\counter_{({q}_{3},2)}}}} (s3)
	(s1) edge [] node [left]{\shortstack{\dec{\counter_{\qinit}}\\ \inc{\counter_{({q}_{3},1)}}}} (s4)
	;
\end{tikzpicture}
	\caption{Three successors of location $\set{(\set{{q}_{2}}, {q}_{3}, 1)}$ (top left location) after the broadcast transition $(\qinit, !!d, q_1)$.}\label{fig:vass-new-summary}
\end{minipage}
\end{figure}

\begin{example}
	\cref{fig:big-picture-vass,fig:vass-new-summary,fig:vass-deleting-summary} illustrate the VASS $\mathcal{V}_\PP$ defined for protocol $\PP$ in \cref{fig:example-1}. \cref{fig:big-picture-vass}\ shows the overall structure of the VASS: any run reaching $(s_f, \mathbf{0})$ starts by incrementing the counter associated with the initial state and ends by decrementing the counter associated with summary $(\set{q_f}, q_u, 1)$. We recall that $q_u$  is an artificially added unreachable state. This ensures that processes counted by this counter end in $q_f$ (as they cannot be in $q_u$).
	
	In \cref{fig:vass-new-summary}, we detail how message receptions and summary creations are handled. The top-left location has a single summary, $(\set{{q}_{2}}, {q}_{3}, 1)$, representing configurations where processes in waiting states are all in ${q}_{2}$, progressing together to ${q}_{3}$. The three successors depict behaviors after the broadcast $(\qinit, !!d, q_1)$.
In the first scenario (right location), the broadcasting process joins a \emph{new} summary (pink) $(\set{q_1}, {q}_{4}, 1)$. In the second scenario (diagonal location), it joins another \emph{new} summary (orange) $(\set{q_1}, {q}_{3}, 2)$, indicating a different timing for reaching ${q}_{3}$. In the third scenario (below location), the process joins an existing summary, and $q_1$ is added to the print.
%
%
%
%
	Note that well-formed executions limit the number of summaries with the same exit state 
	 since there are at most $|Q_W|$ different moments where processes can reach a same action state
	(\cref{lemma:well-formed-execution:bound-next-action}).
	
	\cref{fig:vass-deleting-summary} illustrates summary deletion. The transition $(\qinit, !!b, q_7)$ causes the broadcasting process to join the pink summary (the bottom one), incrementing its counter. Processes in the green summary ($\set{{q}_{2}}, {q}_{3}, 1$) receive the message via $({q}_{2}, ?b, {q}_{3})$, reaching their exit state ${q}_{3}$. The green summary is deleted in the next location. By taking the loop transition at the right location, the value of $\counter_{({q}_{3},1)}$ is transferred to the counter $\counter_{{q}_{3}}$. If the loop is not taken enough times, $\counter_{({q}_{3},1)}$ may remain non-zero, but this does not affect correctness, as the counter will be decremented eventually, from a state in which there is no summary labeled with $({q}_{3}, 1)$, and not decrementing soon enough only delay the moment the corresponding processes will move from the action state ${q}_{3}$.
\end{example}

\begin{remark}\label{remark:vass:size}
	In the sequel of this paper, the size of $\mathcal{V}_\PP$ will be of interest to us (Sections 4.1 and 5.1). Hence, observe that $|\Loc| = |\CoherentSets|+ 3 \leq 2^{|Q_A|\times( |Q_W| +1) \times 2^{|Q_W|}}+3$ (as one summary is composed of a state in $Q_A$, one label in $[1, |Q_W| +1]$ and one set of waiting states), and $|\Counters| = |Q_A|+ |Q_A| \times( |Q_W| +1)$.
\end{remark}


\subsubsection{Soundness of the construction}\label{subsubsec:target-wo-soundness-part}
We show now that if there exists a run in the VASS $\mathcal{V}_\PP$ from $(s_0, \mathbf{0})$ to $(s_f, \mathbf{0})$, then in the network built from $\PP$, there exist $C\in \mathcal{I}$ and $C' \in \CC$ such that $C \trans^\ast C'$ and  $C'(\aproc)=q_f$ for all $\aproc \in [1, ||C'||]$.


We say that a configuration $(\ell, v)$ of $\mathcal{V}_\PP$ is an \emph{S-configuration} if $\ell=\SS$ for some $\SS \in \CoherentSets$. The \emph{implementation} of an  S-configuration $\lambda=(\SS,v)$ is the set of network configurations $\implem{\lambda} \subseteq \CC$ defined as follows: $C\in\implem{\lambda}$ if 
and only if there exists a function $f:[1,||C||]\rightarrow \Counters$ such that  $|f^{-1}(\counter)|=v(\counter)$ for all $\counter\in\Counters$, and
\begin{enumerate}[{\color{purple}\textbf{CondImpl}}1,leftmargin=7.5em]
	\item for all $\counter_q\in\Counters$ with $q \in Q_A$, we have $C(\aproc) = q$  for all $\aproc \in f^{-1}(\counter_q)$,;
	\label{cond:implem-definition:action-state}
	\item for all $\counter_{(q, k)}\in\Counters$, if there exists $(\print, q, k) \in \SS$ for some $\print\subseteq Q_W$, then
	 $C(\aproc) \in \print \cup \set{q}$ for all $\aproc \in f^{-1}(\counter_{{q,k}})$;
	 	\label{cond:implem-definition:present-summary}
	\item for all $\counter_{(q, k)}\in\Counters$, if $(q,k)\notin\lab{\SS}$, then $C(\aproc) = q$ for all $\aproc \in f^{-1}(\counter_{{q,k}})$.
	\label{cond:implem-definition:absent-summary}
\end{enumerate}

In an implementation of $\lambda$, the processes populate states according to the values of the counters. All processes associated with a counter $\counter_q$ of an
action state will populate this exact state~(\ref{cond:implem-definition:action-state}). However, processes associated with a counter $\counter_{(q, k)}$ of a summary do not necessarily
populate waiting states. This occurs when the label $(q,k)$ does not appear in $\SS$ while the associated counter remains strictly positive. Such a situation arises
 when a summary has been previously deleted, but its
counter has not been emptied. Since
all associated processes  have already reached the exit action state, the processes associated to the corresponding counter have to populate this active state (\ref{cond:implem-definition:absent-summary}).
  If the summary with label $(q,k)$ is active (i.e. its label appears in $\SS$), the processes associated with $\counter_{(q,k)}$ will be in the corresponding waiting region,
or in the exit action state (\ref{cond:implem-definition:present-summary}). This may seem contradictory, as all processes are supposed to reach their exit state simultaneously. However,
the processes already on the exit action state are ``old'' processes, that remained after a previous deletion of this summary (cf. \ref{cond:implem-definition:absent-summary}). These processes will be allowed to send messages only when they are identified with the counter of the active state, meaning when the corresponding
transition in $\Delta_e$ will be taken. 
%
%
The next lemma allows to build an execution of the protocol $\PP$ from an execution of the VASS $\mathcal{V}_\PP$. 
%
%
\begin{lemma}\label{lemma:soundness-local-ppty}
	Let $\lambda, \lambda'$ be two S-configurations of $\mathcal{V}_\PP$ and $C \in \implem{\lambda}$. If $\lambda \abconftransup{\delta} \lambda'$ in $\mathcal{V}_\PP$ then there exists $C' \in \implem{\lambda'}$ such that $C \trans^\ast C'$ for $\PP$. 
\end{lemma}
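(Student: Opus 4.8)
The plan is to do a case analysis on which transition of $\mathcal{V}_\PP$ is used in $\lambda \abconftransup{\delta} \lambda'$, building the witnessing execution $C \to^\ast C'$ in the network and simultaneously exhibiting the function $f':[1,\|C'\|]\to\Counters$ that certifies $C'\in\implem{\lambda'}$. Since $\lambda,\lambda'$ are both $S$-configurations, the relevant transitions are those of $\Delta_t$ for some $t=(q,!!a,q')\in T$ (conditions \ref{cond-abconftrans-broadcast-action-state}, \ref{cond-abconftrans-broadcast-join-existing}, \ref{cond-abconftrans-broadcast-new-summary}) and those of $\Delta_e$; the transitions of $\Delta_0$ do not connect two $S$-configurations except $(s_0,\mathbf 0,\emptyset)$, which we handle trivially since $\emptyset$'s only implementation is the empty configuration. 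Fix $f:[1,\|C\|]\to\Counters$ witnessing $C\in\implem{\lambda}$.

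First I would treat the $\Delta_e$ case, which is the easy one: here $\SS=\SS'$ and $\delta$ transfers one unit from $\counter_{(q,k)}$ to $\counter_q$ for some $(q,k)\notin\lab{\SS}$. By \ref{cond:implem-definition:absent-summary}, every process in $f^{-1}(\counter_{(q,k)})$ is already in state $q$, which is also where processes in $f^{-1}(\counter_q)$ sit by \ref{cond:implem-definition:action-state}; so taking $C'=C$ and moving one process from $f^{-1}(\counter_{(q,k)})$ to $f^{-1}(\counter_q)$ to define $f'$ works with no network moves at all. Next, the three $\Delta_t$ cases. In each, one counter $\counter_q$ (with $q\in Q_A$, using \cref{rem:noselfloop} to rule out self-loops) is decremented: pick a process $\aproc_0\in f^{-1}(\counter_q)$, which by \ref{cond:implem-definition:action-state} is in state $q$, and let it broadcast $t$. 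The definition $S_i\abtransup{t}S_i'$ (resp. $\abtransup{t,+q'}$, resp. $\abtransup{t}\arrived$) was tailored so that the receptions it encodes are exactly valid reception moves: I would invoke that the configuration $\confstate q\oplus\conf{\printof{S_i}}\transup{1,t}\confstate{q'}\oplus C_i'$ witnessing each summary transition tells us precisely how to move the processes in $f^{-1}(\counter_{\lab{S_i}})$ that currently lie in $\printof{S_i}$ (the ones already parked in the exit state by \ref{cond:implem-definition:present-summary} simply fail to receive $a$, since $a\notin R$ of an action state, and stay put). Processes carried by counters $\counter_{(p,k)}$ with $(p,k)\notin\lab{\SS}$, and processes in counters $\counter_p$ for action states $p\neq q$, cannot receive $a$ and stay put. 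Assembling these moves gives a single network step $C\to C'$, and the new labelling $f'$ is obtained from $f$ by redirecting $\aproc_0$ to $\counter_{q'}$ (case a), to $\counter_{\lab{S_i}}$ (case b), or to $\counter_{(q_a,k)}$ (case c), and leaving everything else as is; one then checks \ref{cond:implem-definition:action-state}--\ref{cond:implem-definition:absent-summary} against the new $\SS'$ and updated counter values.

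The main obstacle is the bookkeeping around summaries that disappear (become $\arrived$) and around the ``old'' processes parked in an exit action state. Concretely, when $S_i\abtransup{t}\arrived$ the label $\lab{S_i}=(q_a,k)$ leaves $\lab{\SS'}$, so \ref{cond:implem-definition:absent-summary} now demands that \emph{every} process in $f'^{-1}(\counter_{(q_a,k)})$ be in state $q_a$; I must check that this is exactly what the network step achieves — the processes that were in $\printof{S_i}$ all move to $q_a$ by the definition of $S_i\abtransup t\arrived$ (namely $\setof{C_i'}=\{q_a\}$), and the ones already in $q_a$ stay there. A symmetric subtlety in case b: a process broadcasting and joining summary $S_i$ via $\abtransup{t,+q'}$ lands in $q'\in Q_W\subseteq\printof{S_i'}$, so \ref{cond:implem-definition:present-summary} is maintained for $\counter_{\lab{S_i}}$ after the increment. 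I also need the coherence of $\SS'$ (disjoint prints among same-exit summaries) to guarantee that $f'$ respects \ref{cond:implem-definition:present-summary} unambiguously, and I should note that $C\to^\ast C'$ in the statement allows the empty sequence, covering the $\Delta_e$ case. None of these steps is deep; the work is in verifying the three implementation conditions survive each of the four transition types, so I would organize the write-up as a short lemma-by-case verification with the summary-transition definitions doing the heavy lifting.
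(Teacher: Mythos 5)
Your proposal is correct and follows essentially the same route as the paper's proof: handle $\Delta_e$ with $C'=C$ and a relabelling of one process from $\counter_{(q,k)}$ to $\counter_q$, and handle the $\Delta_t$ cases by letting a process in $f^{-1}(\counter_q)$ broadcast, using the witnessing configuration in the definition of $S\abtransup{t}S'$ (resp.\ $\abtransup{t,+q'}$, $\abtransup{t}\arrived$) to dictate the receptions, and redirecting the broadcaster in $f'$ before re-checking \ref{cond:implem-definition:action-state}--\ref{cond:implem-definition:absent-summary}. The subtleties you flag (disappeared summaries forcing \ref{cond:implem-definition:absent-summary}, parked processes in exit states not receiving) are exactly the ones the paper's verification addresses.
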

 \begin{proof}[Sketch of Proof.]
Assume $\lambda=(\SS,v)$ and $\lambda'=(\SS',v')$. Then either $(\SS,\delta, \SS')\in \Delta_e$ or $(\SS,\delta, \SS')\in \Delta_t$ for some $t\in T$. In the first
case, one can show that if $C\in\implem{\lambda}$, then $C\in \implem{\lambda'}$. Otherwise, let $t=(q,!!a,q')$ and there is a process $e$ such that $C(e)=q$.
 The transition $\delta$ in the VASS makes all the summaries and counters evolve according to the reception of the message $a$. According to the
different cases \ref{cond-abconftrans-broadcast-action-state}, \ref{cond-abconftrans-broadcast-join-existing} or \ref{cond-abconftrans-broadcast-new-summary},
one can build a configuration $C'\in\implem{\lambda'}$ such that $C\trans C'$.
\end{proof}


\begin{lemma}\label{lemma:soundness}
	If $(s_0, \mathbf{0}) \abconftrans^\ast (s_f, \mathbf{0})$ in $\mathcal{V}_\PP$, then there exist $C \in \mathcal{I}$ and $C' \in \CC$ such that $C \trans^\ast C'$ and $C'(\aproc) = q_f$  for all $\aproc \in [1, ||C'||]$.
\end{lemma}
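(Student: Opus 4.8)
The plan is to bootstrap from the local simulation guarantee in \cref{lemma:soundness-local-ppty} and push it along the whole VASS run, then convert the start and end of the run into statements about the network. First I would take the run $(s_0,\mathbf 0)\abconftrans^\ast (s_f,\mathbf 0)$ and split it at the first entrance into $\CoherentSets$ and the last exit from it. By inspection of $\Delta_0$, the only way to leave $s_0$ (other than the self-loop $\delta_0$ incrementing $\counter_{\qinit}$) is the transition $(s_0,\mathbf 0,\emptyset)$; the only way to reach $s_f$ is via $(\set{S_f},\mathbf 0,s_f)$ with $S_f=(\set{q_f},q_u,1)$; and the only transition out of $s_f$ is the self-loop $\delta_f$ decrementing $\counter_{(q_u,1)}$. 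Hence any accepting run has the shape
\begin{equation*}
(s_0,\mathbf 0)\abconftransup{\delta_0}{}^{\!n}(s_0,v_0)\abconftrans(\emptyset,v_0)\abconftrans^\ast(\set{S_f},v_f)\abconftrans(s_f,v_f)\abconftransup{\delta_f}{}^{\!m}(s_f,\mathbf 0),
\end{equation*}
where $v_0(\counter_{\qinit})=n$ and $v_0$ is zero elsewhere, and the middle portion is an $S$-configuration run from $\lambda_0=(\emptyset,v_0)$ to $\lambda_f=(\set{S_f},v_f)$. The final decrement loop forces $v_f(\counter_{(q_u,1)})=m$ and, since $v_f$ must be the null vector after the loop on all other counters (the loop only touches $\counter_{(q_u,1)}$), actually $v_f(\counter)=0$ for every $\counter\neq\counter_{(q_u,1)}$ and $v_f(\counter_{(q_u,1)})=m$; note $n=m$ by conservation of the total number of processes, though we will not even need that.

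Next I would check the two endpoints against the implementation relation. Let $C_0\in Q^n$ be the initial network configuration with $C_0(\aproc)=\qinit$ for all $\aproc$; taking $f$ to map every process to $\counter_{\qinit}$ witnesses $C_0\in\implem{\lambda_0}$, since $\lambda_0=(\emptyset,v_0)$ has empty summary set, $q_u\neq\qinit$ and \ref{cond:implem-definition:action-state}, \ref{cond:implem-definition:absent-summary} are immediate (there are no present summaries, and the only nonzero counter is $\counter_{\qinit}$ for the action state $\qinit$). Then I would apply \cref{lemma:soundness-local-ppty} inductively along the middle portion: a trivial induction on its length shows there is a network configuration $C_f\in\implem{\lambda_f}$ with $C_0\trans^\ast C_f$. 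It remains to read off that every process in $C_f$ is in state $q_f$. The location is $\set{S_f}$ with $S_f=(\set{q_f},q_u,1)$, so $\lab{\set{S_f}}=\set{(q_u,1)}$. For the counter $\counter_{q_f}$: if $q_f$ were an action state it would be handled by \ref{cond:implem-definition:action-state}, but here $q_f\in Q_W$, and since no summary of the form $(\print,q_f,k)$ is present and $(q_f,k)\notin\lab{\set{S_f}}$ for any $k$, \ref{cond:implem-definition:absent-summary} forces all processes assigned to $\counter_{(q_f,k)}$ to be in $q_f$ anyway — but $v_f$ is zero on all those counters, so no process is assigned there. The only process-carrying counter is $\counter_{(q_u,1)}$ with value $m$, and $(q_u,1)=\lab{S_f}\in\lab{\set{S_f}}$ with $\printof{S_f}=\set{q_f}$, so by \ref{cond:implem-definition:present-summary} each such process is in $\printof{S_f}\cup\set{q_u}=\set{q_f,q_u}$. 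Since $q_u$ is the uncoverable state and no network execution from an initial configuration can ever place a process on $q_u$ (no transition has $q_u$ as source or target), every process must be in $q_f$. Hence $C_f(\aproc)=q_f$ for all $\aproc\in[1,\|C_f\|]$, and taking $C=C_0\in\mathcal I$, $C'=C_f$ proves the lemma.

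The only genuine subtlety — and where I would be careful — is the argument that processes assigned to the active summary counter $\counter_{(q_u,1)}$ sit on $q_f$ rather than on $q_u$. The implementation relation \ref{cond:implem-definition:present-summary} allows such a process to be on the exit state $q_u$ as well (these are the ``old'' processes left over after an earlier deletion of a summary with that label), so the clean conclusion genuinely relies on the separate, protocol-level fact that $q_u$ is uncoverable: since $C_0$ is initial and no transition of $\PP$ has $q_u$ as a source or target state, no configuration reachable from $C_0$ contains a process in $q_u$. I would state this as a one-line observation (reachability from $\mathcal I$ never covers $q_u$) and invoke it. Everything else is a routine unfolding of $\Delta_0$ and of the definitions of $S$-configuration and implementation, plus the induction via \cref{lemma:soundness-local-ppty}; I anticipate no other obstacle.
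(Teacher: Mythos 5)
Your proof is correct and follows essentially the same route as the paper: decompose the run using the structure of $\Delta_0$, observe that the initial configuration with $n$ processes implements $(\emptyset,v_0)$, iterate \cref{lemma:soundness-local-ppty} along the middle portion, and conclude from \ref{cond:implem-definition:present-summary} together with the fact that $q_u$ is uncoverable. The only blemish is the aside about counters $\counter_{q_f}$ and $\counter_{(q_f,k)}$, which do not exist since $q_f\in Q_W$ and counters are indexed by action states only — but, as you note yourself, that remark carries no weight in the argument.
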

\begin{proof}
	From the definition of $\mathcal{V}_\PP$, any run from $(s_0, \mathbf{0})$ to $(s_f, \mathbf{0})$ is of the form $(s_0, \mathbf{0}) \abconftrans^\ast (\emptyset, v_{init}) \abconftrans^\ast (\set{S_f}, v_f) \abconftrans^\ast (s_f, \mathbf{0})$ where $v_{init}(\counter_{\qinit}) = n$ for some $n \in \nat$ and $v_{init}(\counter) = 0$  for all  $\counter\in \Counters \setminus \set{\counter_{\qinit}}$, whereas $v_f(\counter_{(q_u, 1)}) = v_{init}(\counter_{\qinit})=n$ and  $v_{f}(\counter) = 0$ for all $\counter\in\Counters \setminus \set{\counter_{(q_u, 1)}}$. Define $C$ as the initial configuration in $\mathcal{I}$ with $n$ processes (i.e. $||C||=n$). Trivially, $C \in \implem{(\emptyset, v_{init})}$ and with \cref{lemma:soundness-local-ppty} and a simple induction, we get that there exists $C' \in \implem{(\set{S_f}, v_{f})}$ such that $C\trans^\ast C'$.
	Observe that by definition of $v_f$, and since $C'\in\implem{(\set{S_f}, v_f)}$, there exists $f:[1,n]\rightarrow \Counters$ such that
	$f^{-1}(\counter_{(q_u, 1)}) = [1, n]$.
	Using \ref{cond:implem-definition:present-summary}, we have $C'(\aproc) \in \set{q_f,q_u}$  for all $\aproc \in [1,n]$ with $n=||C'||$. Since $q_u$ is unreachable by construction, we deduce that $C'(\aproc) = q_f$ for all $\aproc \in [1,||C'||]$.	
\end{proof}


\subsubsection{Completeness of the construction}
\label{subsubsec:target-wo-completeness-part}

Let us first introduce some new definitions. Given a well-formed execution $\rho \in \Lambda \cup \Lambda_\omega$, two indices $0 \leq i < j < |\rho|$ and an action state $q_a \in Q_A$, we define $E^{\rho,i}_{q_a, j}$, the set of processes in waiting states in $\rho_i$, and whose next action state is $q_a$, reached at $\rho_j$.
Formally, $E^{\rho,i}_{q_a, j}=\big\{\aproc\in [1,\nbagents{\rho}] \mid \rho_i(\aproc) \in Q_W\textrm{ and }
	  \nextaction{\rho, i, \aproc} = (q_a, j) \big\}$. Furthermore, an S-configuration $\lambda = (\SS, v)$ is a \emph{representative} of the configuration $\rho_i$ in $\rho$ iff the following conditions are respected:
\begin{enumerate}[{\color{violet}\textbf{CondRepr}}1,leftmargin=7.5em]
	\item \label{new-cond:repr:action-state} 
	for all $q \in Q_A$, $v(\counter_q) = |\rho_i^{-1}(q)|$;
\end{enumerate}
and for all $q_a\in Q_A$, there is an injective function $r_{q_a} : \nextactionindexset{\rho, i ,q_a} \rightarrow [1, |Q_W|+1]$ s.t.:
\begin{enumerate}[{\color{violet}\textbf{CondRepr}}1,leftmargin=7.5em]
        \setcounter{enumi}{1}
\item \label{new-cond:summaries}$\SS= \set{(\rho_i(E^{\rho,i}_{q_a,j}), q_a, r_{q_a}(j)) \mid q_a \in Q_A, j \in \nextactionindexset{\rho,i, q_a}}$
\item \label{new-cond:repr:renaming}
	for all $q_a\in Q_A$, we have  $v(\counter_{(q_a, r_{q_a}(j))}) = |E^{\rho,i}_{q_a, j}|$ for all $j \in \nextactionindexset{\rho, i, q_a}$ 
	 and  $v(\counter_{(q_a, k)}) = 0$ for all $k \notin r_{q_a}(\nextactionindexset{\rho, i ,q_a})$.      
\end{enumerate}
%
%
%

In a representative of the configuration $\rho_i$, the counters faithfully count the number of processes on active states (\ref{new-cond:repr:action-state}), and on waiting regions.
The set $E_{q_a,j}^{\rho,i}$ gathers exactly the set of processes that populate a same waiting region in a representative of $\rho_i$ in $\rho$: they  all reach the same next action state $q_a$ at
the same instant $j$. 
The set $\nextactionindexset{\rho,i,q_a}$ being bounded by $|Q_W|$ (cf. \cref{lemma:well-formed-execution:bound-next-action}), we can associate with each of these
indices a unique identifier, given by the injective function $r_{q_a}$. We use a spare identifier for technical reasons, to handle more easily situations when $|Q_W|$ summaries exist in the location of the VASS, and a new summary is created while one of the previous summaries is deleted at the next step.
Then, \ref{new-cond:summaries} and \ref{new-cond:repr:renaming} require that the counters corresponding to the summaries and the summaries themselves reflect
faithfully the situation in the configuration $\rho_i$ with respect to $\rho$. Observe that such a representative  is tightly linked to the simulated execution, since we need to
know the future behavior of the different processes to determine the different summaries. Finally, if we let $\abconffrom{\rho}{i}$ 
be the set of all $S$-configurations that are representatives of $\rho_i$ in $\rho$, we establish the following result before proving the main lemma of this subsection (\Cref{lemma:completeness}).

\begin{lemma}\label{lemma:target:completeness:local}
	Let $\rho$ be a \emph{well-formed} execution, and $0 \leq i < |\rho|$. For all $\lambda_i \in \abconffrom{\rho}{i}$, there exists $\lambda_{i+1} \in \abconffrom{\rho}{i+1}$ such that $\lambda_i \abconftrans^\ast \lambda_{i+1}$ in $\mathcal{V}_\PP$. 
\end{lemma}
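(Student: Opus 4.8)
The plan is to fix a well-formed execution $\rho$, an index $i$ with $i+1 < |\rho|$, and a representative $\lambda_i = (\SS_i, v_i) \in \abconffrom{\rho}{i}$, together with its witnessing family of injective functions $(r_{q_a})_{q_a \in Q_A}$. The transition $\rho_i \trans \rho_{i+1}$ in the network is of the form $\rho_i \transup{\aproc, t} \rho_{i+1}$ for some broadcasting process $\aproc$ and some $t = (q, !!a, q') \in T$. First I would case-split on the nature of $q$ and $q'$: since $\PP$ is Wait-Only and $q$ is the source of a broadcast transition, $q \in Q_A$; the interesting distinction is whether $q' \in Q_A$ or $q' \in Q_W$, which corresponds exactly to the three cases \ref{cond-abconftrans-broadcast-action-state}, \ref{cond-abconftrans-broadcast-join-existing}, \ref{cond-abconftrans-broadcast-new-summary} in the definition of $\Delta_t$. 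In each case I would exhibit a concrete VASS transition $(\SS_i, \delta, \SS') \in \Delta_t$ to fire from $\lambda_i$, landing in some intermediate S-configuration $\mu = (\SS', v')$, and then argue that finitely many $\Delta_e$-transitions suffice to move from $\mu$ to a genuine representative $\lambda_{i+1} \in \abconffrom{\rho}{i+1}$; this is why the statement only claims $\lambda_i \abconftrans^\ast \lambda_{i+1}$ rather than a single step.

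The core of the argument is bookkeeping on the three $\textbf{CondRepr}$ conditions. For \ref{new-cond:repr:action-state}, the broadcast moves process $\aproc$ out of $q$ into $q'$, and simultaneously every process receiving $a$ from a waiting state either stays in a waiting state or lands exactly on the exit action state of its summary (by well-formedness and \cref{lemma:resolving-non-determinism}, which is already baked into the definition of $\SS \abtransup{t} \SS'$); so the net change to the action-state counters is $-1$ on $\counter_q$, $+1$ on $\counter_{q'}$ (if $q' \in Q_A$), plus, crucially, an influx onto $\counter_{q_a}$ for each summary that reaches $\arrived$ — and it is precisely this influx that is not performed by the $\Delta_t$-transition but is deferred to $\Delta_e$. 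So after firing $\Delta_t$ I would have a configuration where the disappeared summaries' counters $\counter_{(q_a,k)}$ are still carrying the value that $\textbf{CondRepr}$ wants on $\counter_{q_a}$; firing the corresponding $\Delta_e$-loop $|E^{\rho,i}_{q_a,j}|$-many times (legal because the label $(q_a,k)$ is absent from $\SS'$) transfers it. The identifier function $r_{q_a}$ for step $i+1$ is obtained by restricting/extending the old one: summaries that merely evolve keep their identifier, a deleted summary frees its identifier, and a freshly created summary (case \ref{cond-abconftrans-broadcast-new-summary}, when $q'$'s next action state in $\rho$ does not match any surviving summary) is assigned the spare identifier — this is exactly the "$|Q_W|+1$ for technical reasons" point flagged in the text, and well-formedness guarantees the new print $\set{q'}$ is disjoint from all existing prints with the same exit state, so coherence of $\SS'$ is preserved.

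The main obstacle, I expect, is the case analysis for which VASS transition to fire when the broadcasting process $\aproc$ lands in a waiting state $q' \in Q_W$: one must read off from $\rho$ the value $\nextaction{\rho, i+1, \aproc} = (q_a, j)$ and decide whether $j \in \nextactionindexset{\rho, i, q_a}$ already — in which case $\aproc$ joins the existing summary with that label (case \ref{cond-abconftrans-broadcast-join-existing}, incrementing $\counter_{\lab{S_i}}$) — or not, in which case a new summary is created (case \ref{cond-abconftrans-broadcast-new-summary}). Verifying that $\setof{C'} \cup \{q'\}$ (resp.\ $\setof{C'}$) matches the print $\rho_{i+1}(E^{\rho,i+1}_{q_a,j})$ demanded by \ref{new-cond:summaries} requires chasing how $E^{\rho,i}_{q_a,j}$ relates to $E^{\rho,i+1}_{q_a,j}$ — essentially, $E^{\rho,i+1}_{q_a,j} = E^{\rho,i}_{q_a,j} \cup \{\aproc\}$ or $= E^{\rho,i}_{q_a,j}$ depending on whether $\aproc$ joins, and well-formedness ensures every process in $E^{\rho,i}_{q_a,j}$ makes the same receiving move so that $\rho_{i+1}$ restricted to this set has a single print — which is exactly what $\SS_i \abtransup{t} \SS'_i$ or $\SS_i \abtransup{t,+q'} \SS'_i$ encodes. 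The rest ($\textbf{CondImpl}$-style verification that counters for absent labels sit on the right action state after the $\Delta_e$ transfers, and that $v'$ restricted to the surviving counters is untouched) is routine. I would conclude by assembling $\lambda_{i+1} = (\SS', v_{i+1})$ with $v_{i+1}$ the result of the $\Delta_e$-transfers, and checking all three $\textbf{CondRepr}$ conditions hold with the updated $r_{q_a}$.
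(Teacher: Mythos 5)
Your plan matches the paper's own proof essentially step for step: fire the single $\Delta_t$ transition chosen by the same case analysis ($q'\in Q_A$, joining an existing summary when $\nextaction{\rho,i+1,\aproc}$ matches a label in $\SS_i$, or creating a new summary with a spare identifier), then defer the counters of summaries that reached $\arrived$ to iterated $\Delta_e$ transfers, and finally re-verify \ref{new-cond:repr:action-state}--\ref{new-cond:repr:renaming} with an updated injective labeling built from the old $r_{q_a}$. The key ingredients you invoke (the relation between $E^{\rho,i}_{q_a,j}$ and $E^{\rho,i+1}_{q_a,j}$, well-formedness forcing a single print per summary and giving coherence, and legality of the $\Delta_e$ loop because the deleted label is absent from $\SS'$) are exactly those used in the paper, so the proposal is correct and not a different route.
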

\begin{proof}[Sketch of Proof.]
From an S-configuration $\lambda_i=(\SS_i, v_i)$ in $\abconffrom{\rho}{i}$, we can compute the effect on $\SS_i$ of the transition $t=(q,!!a,q')$ taken in $\rho$ to go from $\rho_i$ to $\rho_{i+1}$, and find a matching transition $(\SS_i, \delta, \SS_{i+1}) \in \Delta_t$ (according to whether $q'$ is a waiting state or not, and in the latter case, whether it joins an existing summary or a new one, depending of the execution $\rho$). We then obtain a new configuration $\lambda'=(\SS_{i+1},v')$ such that $\lambda_i \abconftransup{\delta} \lambda'$. Note that $\lambda'$ is not necessarily in $\abconffrom{\rho}{i+1}$:
if some summaries have been deleted between $\lambda_i$ and $\lambda'$, the transition $(\SS_i, \delta, \SS_{i+1})$  in $\mathcal{V}_\PP$ has not updated the corresponding
counters. Hence we need to apply transitions from $\Delta_e$ to  empty counters corresponding to deleted summaries and accordingly increase corresponding counters on matching action states to obtain $\lambda_{i+1}=(\SS_{i+1}, v_{i+1})$ in $\abconffrom{\rho}{i+1}$ .
\end{proof}

\begin{lemma}\label{lemma:completeness}
      If  there exist $C\in \mathcal{I}$, and $C' \in \CC$ such that $C \trans^\ast C'$ in $\PP$ and $C'(\aproc)=q_f$  for all $\aproc \in [1, ||C'||]$, then $(s_0, \mathbf{0}) \abconftrans^\ast (s_f, \mathbf{0})$ in $\mathcal{V}_\PP$.
\end{lemma}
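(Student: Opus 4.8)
The plan is to simulate a well-formed witnessing execution of $\PP$ step by step inside $\mathcal{V}_\PP$, with \cref{lemma:target:completeness:local} as the workhorse. By \cref{cor:well-formed}, the hypothesis yields a \emph{well-formed} execution $\rho = C_0 \ldots C_n$ with $C_0\in\mathcal{I}$ and $C_n(\aproc)=q_f$ for all $\aproc$; put $n := \nbagents{\rho}$. From $(s_0,\mathbf{0})$ I would fire the self-loop $(s_0,\delta_0,s_0)$ exactly $n$ times and then $(s_0,\mathbf{0},\emptyset)$, reaching the $S$-configuration $\lambda_0 := (\emptyset, v_{init})$ with $v_{init}(\counter_{\qinit})=n$ and $v_{init}(\counter)=0$ otherwise.

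Next I would observe that $\lambda_0\in\abconffrom{\rho}{0}$: since $C_0$ is initial, all processes are in $\qinit\in Q_A$, so \ref{new-cond:repr:action-state} holds with $v_{init}(\counter_{\qinit})=|\rho_0^{-1}(\qinit)|=n$, and no process is in a waiting state at $\rho_0$, whence $\nextactionindexset{\rho,0,q_a}=\emptyset$ for all $q_a$, forcing $\SS=\emptyset$ and every summary counter to $0$, which gives \ref{new-cond:summaries} and \ref{new-cond:repr:renaming}. Applying \cref{lemma:target:completeness:local} for $i=0,\dots,n-1$ then yields $S$-configurations $\lambda_i=(\SS_i,v_i)\in\abconffrom{\rho}{i}$ with $\lambda_0 \abconftrans^\ast \lambda_1 \abconftrans^\ast \cdots \abconftrans^\ast \lambda_n$ in $\mathcal{V}_\PP$.

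It remains to analyse $\lambda_n=(\SS_n,v_n)\in\abconffrom{\rho}{n}$. At $\rho_n$ all $n$ processes are in $q_f\in Q_W$ and, as $\rho$ has no later index, each has next action state $\garbagestate$ by convention; hence \ref{new-cond:summaries} forces $\SS_n$ to be the singleton $\set{(\set{q_f},\garbagestate,k)}$ for some $k\in[1,|Q_W|+1]$, \ref{new-cond:repr:action-state} forces $v_n(\counter_q)=0$ for all $q\in Q_A$, and \ref{new-cond:repr:renaming} forces $v_n(\counter_{(\garbagestate,k)})=n$ and $v_n(\counter)=0$ for every other summary counter. If $k=1$ then $\SS_n=\set{S_f}$, and the run is completed by $(\set{S_f},\mathbf{0},s_f)$ followed by $n$ iterations of $(s_f,\delta_f,s_f)$, reaching $(s_f,\mathbf{0})$; prepending $(s_0,\mathbf{0}) \abconftrans^\ast \lambda_0$ gives the claim.

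The delicate point, and the one I expect to be the real obstacle, is the value of $k$, since the transition into $s_f$ hard-wires identifier $1$. I would resolve it using the symmetry of $\mathcal{V}_\PP$ under relabelling summary identifiers: for any permutation $\pi$ of $[1,|Q_W|+1]$, the relabelling $(\print,q_a,j)\mapsto(\print,q_a,\pi(j))$ on summaries and $\counter_{(q_a,j)}\mapsto\counter_{(q_a,\pi(j))}$ on counters preserves coherence and maps every $\Delta_t$- and $\Delta_e$-transition to one of the same form (no defining condition mentions identifiers, and freshness of a pair $(q_a,j)$ is preserved). As $\lambda_0$ has no summary and $v_{init}$ is supported on the non-summary counter $\counter_{\qinit}$, it is fixed by every such $\pi$; applying the transposition $(1\ k)$ to the whole chain then gives a run $\lambda_0 \abconftrans^\ast (\set{S_f},v'_n)$ with $v'_n(\counter_{(\garbagestate,1)})=n$ and $v'_n(\counter)=0$ elsewhere, and we finish as above. (Equivalently, since at most one summary with exit state $\garbagestate$ ever occurs, identifier $1$ for it is always free, so the construction behind \cref{lemma:target:completeness:local} can be taken to always choose $1$ for it.) All the remaining work is routine bookkeeping of the representative conditions through the transitions of $\mathcal{V}_\PP$.
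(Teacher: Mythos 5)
Your proof is correct and follows essentially the same route as the paper's: extract a well-formed execution via \cref{cor:well-formed}, initialize $\counter_{\qinit}$ to the number of processes, push a representative through the execution by iterating \cref{lemma:target:completeness:local}, and close with the transitions into $s_f$. The one point you flag as delicate --- the hard-wired identifier $1$ in $S_f$ --- is handled in the paper exactly as in your parenthetical remark (since at most one summary with exit state $\garbagestate$ ever exists, its identifier is simply chosen to be $1$ throughout the construction), so the permutation-symmetry argument, while valid, is more machinery than needed.
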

\begin{proof}
 Thanks to \cref{cor:well-formed}, we know there exists a well-formed execution $\rho$ from $C$ to $C'$.  Let $K = ||C||$ and $n=|\rho|-1$. First, consider the sequence $(s_0, \mathbf{0}) \abconftrans^\ast (s_0, v_0) \abconftrans (\emptyset, v_0)$ where $v_0(\counter_{\qinit}) = K$ and for all other counters $\counter$, $v_0(\counter_{}) = 0$. Observe that, since $\qinit\in Q_A$, $(\emptyset,v_0) \in \abconffrom{\rho}{0}$. By applying inductively \cref{lemma:target:completeness:local}, we get that $(\emptyset, v_0) \abconftrans^\ast (\SS_1, v_1)  \abconftrans^\ast \cdots \abconftrans^\ast (\SS_n, v_n)$ with $(\SS_n, v_n) \in \abconffrom{\rho}{n}$. By definition, every time that $\nextactionstate{\rho, i, \aproc}=q_u$ for some process $e$, then $\nextactionindex{\rho, i, \aproc}=n+1$, hence there will always be at most one summary with exit state $q_u$. So in the execution we build, every time that some summary $(\print, q_u,k)\in\SS_i$, we chose $k=1$.  
We then  have $v_n(\counter_{q_u,1}) = K$ and $v_n(\counter) = 0$ for all other $\counter \in \Counters \setminus \set{\counter_{q_u,1}}$. Hence, by construction of $\mathcal{V}_\PP$, we have $(\SS_n, v_n) \abconftrans (s_f, v_n) \abconftrans^\ast (s_f, \mathbf{0})$.
%
\end{proof}

Using \cref{lemma:target:completeness:local}\ and \cref{lemma:completeness}\ and the fact that the reachability problem for VASS is decidable (see \cref{vass-reach-complexity}) we get the main theorem of this section.
\begin{theorem}\label{th:upper-bound-target}
	$\Target$ restricted to Wait-Only protocols is decidable.
\end{theorem}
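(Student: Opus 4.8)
The plan is to reduce $\Target$ for Wait-Only protocols to \Reachability\ for VASS and then invoke \cref{vass-reach-complexity}. Given an instance $(\PP, q_f)$ with $\PP$ Wait-Only, I would first normalise $\PP$: add a fresh uncoverable state $\garbagestate$ (not appearing in any transition, as assumed throughout \cref{sec:synchro}) and eliminate self-loop broadcasts following \cref{rem:noselfloop}; neither transformation changes the answer to $\Target$. Assuming first that $q_f\in Q_W$, I would then build the VASS $\mathcal{V}_\PP=(\Loc,s_0,\Counters,\Delta)$ of \cref{sec:def-vass} together with its distinguished location $s_f$, and output the \Reachability\ question: is $(s_0,\mathbf{0})\abconftrans^\ast(s_f,\mathbf{0})$ in $\mathcal{V}_\PP$?

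Correctness of this reduction is exactly the conjunction of the two results already in hand. \cref{lemma:soundness} shows that a run of $\mathcal{V}_\PP$ from $(s_0,\mathbf{0})$ to $(s_f,\mathbf{0})$ yields $C\in\mathcal{I}$ and $C'\in\CC$ with $C\trans^\ast C'$ and $C'(\aproc)=q_f$ for all $\aproc$, i.e. a positive $\Target$ instance; conversely, \cref{lemma:completeness} — passing through a well-formed witnessing execution via \cref{cor:well-formed}, then its representatives $\abconffrom{\rho}{i}$ and the inductive step of \cref{lemma:target:completeness:local} — produces such a run from any positive $\Target$ instance. Hence the $\Target$ instance is positive if and only if the \Reachability\ instance is, and since \Reachability\ is decidable by \cref{vass-reach-complexity}, $\Target$ on Wait-Only protocols is decidable when $q_f\in Q_W$. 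For $q_f\in Q_A$ I would reduce to the previous case by padding $\PP$ with a fresh waiting state reachable only from $q_f$ (or, equivalently, adapt $\Delta_0$ so that $s_f$ is reached by draining $\counter_{q_f}$ once all summary counters and other action-state counters are $0$); alternatively one may forward-reference \cref{sec:synchro-action}, which settles this case with a sharper \expspace\ bound. Either way the theorem follows immediately.

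I do not expect this wrap-up to be the difficulty: the work sits in the two lemmas it relies on, and the real obstacle there is the bookkeeping of summaries. Two invariants must be maintained carefully: that the pool of identifiers $[1,|Q_W|+1]$ never overflows — this is where well-formedness (\cref{lemma:resolving-non-determinism}) and the bound of \cref{lemma:well-formed-execution:bound-next-action} are essential — and that the incremental, deferred transfer of a deleted summary's counter to its exit action state via $\Delta_e$ never lets a process broadcast ``too early''. These are precisely the conditions \ref{cond:implem-definition:action-state}--\ref{cond:implem-definition:absent-summary} defining $\implem{\cdot}$ and \ref{new-cond:repr:action-state}--\ref{new-cond:repr:renaming} defining the representatives $\abconffrom{\rho}{i}$; once those are in place, \cref{th:upper-bound-target} is immediate.
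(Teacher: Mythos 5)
Your proposal is correct and takes exactly the paper's route: the paper's proof of this theorem is precisely the combination of \cref{lemma:soundness} and \cref{lemma:completeness} (via \cref{lemma:target:completeness:local}) with the decidability of VASS \Reachability\ (\cref{vass-reach-complexity}), applied to the VASS of \cref{sec:def-vass}. The only caveat is your first gadget for $q_f\in Q_A$ (padding with a fresh waiting state): since a broadcaster never receives its own message and Wait-Only forbids $q_f$ from both sending and receiving, getting \emph{all} processes into the new waiting state \emph{simultaneously} is delicate; but your other two alternatives (draining $\counter_{q_f}$ and the dead-summary counters $\counter_{(q_f,k)}$ at $s_f$, or deferring to \cref{sec:synchro-action}) are sound and match what the paper does.
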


\subsection{Lower Bound for \Target~in Wait-Only Protocols}\label{subsec:synchro-lower-bound}
\label{subsec:lower-bound-target-waitonly}

\begin{figure}
	\begin{center}
		\tikzset{box/.style={draw, minimum width=4em, text width=4.5em, text centered, minimum height=17em}}

\begin{tikzpicture}[->, >=stealth', shorten >=1pt,node distance=2.5cm,on grid,auto, initial text = {}] 
	\node[state, initial] (q0) {$\qinit$};
	\node[state] (q1) [below = of q0, yshift = 23 ] {$q_1$};

%
	\node[state] (q6) [ right = of q0, xshift = 0, yshift =0] {$\ell_0$};
	\node[state] (q7) [right = of q1, yshift = 0] {$\textsf{zero}$};
	\node[state] (sf) [right  = of q6, xshift = -20] {$\ell_f$};
	\node[state] (spf) [right  = of sf, xshift = 0] {$\ell'_f$};
	\node[state,accepting] (qf) [right = of spf ] {$q_f$};
	\node[state] (err) [right = of qf] {$\textsf{err}$};
	\node[state, inner sep= -2] (q8) [below right  = of q7, xshift = 20, yshift = 20] {$\textsf{unit}_{\counter}$};
	\node[state, inner sep= -2] (q9) [right of = q7, yshift = 15] {$\textsf{z-end}$};
	\node[state, inner sep= -2] (q10) [right of = q8, xshift = 40, yshift = 0] {$\textsf{err}'$};
	
	\path[->] 
	(q0) 
	edge [bend right = 0] node {$!!\textsf{start}$} (q6)
	edge [bend right = 0] node {$!!\$$} (q1)
	(q1) edge node {$?\textsf{start}$} (q7)
	(q7) edge [bend left = 10] node[sloped] {$!!\textsf{inc}_{\counter}$} (q8)
	(q8) edge [bend left = 20] node [sloped] {$!!\textsf{dec}_{\counter}$} (q7)
	(spf) edge [bend right = 0] node []{$!!\textsf{verif}$}  (qf)
	
	(sf) edge node {$?\textsf{end}$} (spf)
	(q7) edge node [sloped] {$!!\textsf{end}$} (q9)
	
	(q9) edge [bend right = 15]node [sloped] {$?\textsf{verif}$} (qf)
	;
	
	
		\node[draw, fill = cyan, fill opacity = 0.2, text opacity = 1, fit=(q6) (sf), text height=0.09 \columnwidth, inner sep = 3.5] (PV) {$\PP'_{\mathcal{V}}$};
	
	\path[->,dashed]
	(PV) edge [bend left = 20] node {$?\textsf{start}$} (err);

	\path[->,very thick]
	(qf) edge node {$?\textsf{verif}$} (err)
	(q9) edge [bend right = 10]node [sloped] {\shortstack{$?\textsf{inc}_\counter, ?\textsf{dec}_\counter$ $\counter \in \Counters$}} (q10); 
\end{tikzpicture}
	\end{center}
	\vspace{-10pt}
	\caption{Protocol $\PP$ associated to a VASS $\mathcal{V}$. The dashed edge $(\ell_f, ?\textsf{start}, \textsf{err})$ will only be considered in Section 4.2.}\label{fig:prot-from-vass}
\end{figure}
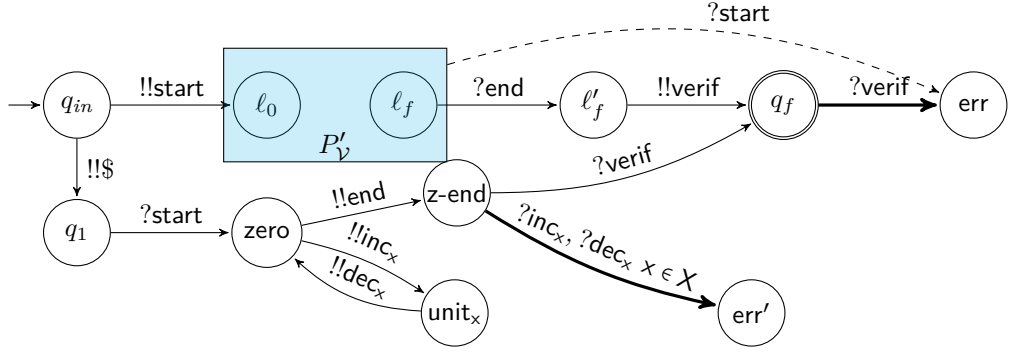

In this subsection we reduce the reachability problem for VASS to \Target. 
We fix a VASS $\mathcal{V} = (\Loc, \ell_0, \Counters, \Delta)$ and a final location $\ell_f\in \Loc$. Without loss of generality, we suppose that any transition in 
$\Delta$ either increments or decrements of 1 exactly one counter. Hence, for a transition $(\ell,\delta,\ell')\in\Loc$, we might describe $\delta$ by giving only
$\delta(\counter)$ for the only $\counter\in\Counters$ such that $\delta(\counter)\neq 0$. 

We explain how to build a protocol $\PP_\mathcal{V}$ that simulates $\mathcal{V}$: it is depicted in \cref{fig:prot-from-vass} \emph{in which we don't consider the dashed edge}. 
The states $\textsf{zero}$ and $\set{\textsf{unit}_\counter\mid\counter\in\Counters}$  represent the evolution of the different counters of the VASS while the
 blue box $\PP'_{\mathcal{V}}$ describes the evolution of $\mathcal{V}$ with respect to its locations. Formally,  $\PP'_{\mathcal{V}}$ consists of a set of
 states $Q_{\mathcal{V}} = \Loc \cup \set{\frownie}$ and a set of transitions $T_{\mathcal{V}} = \set{(\ell, ?\textsf{inc}_{\counter}, \ell') \mid (\ell, 
\delta, \ell') \in \Delta \textrm{ and } \delta(\counter) = 1} \cup \set{(\ell, ?\textsf{dec}_{\counter}, \ell') \mid (\ell, \delta, \ell') \in \Delta \textrm{ and } \delta(\counter) = -1}
 \cup \set{(\ell, ?m, \frownie) \mid  m \in \set{\textsf{inc}_{\counter}, \textsf{dec}_{\counter} \mid \counter \in \Counters}}$.
 Some transitions of $T_{\mathcal{V}}$ are depicted in \cref{fig:prot-from-vass-pv}.
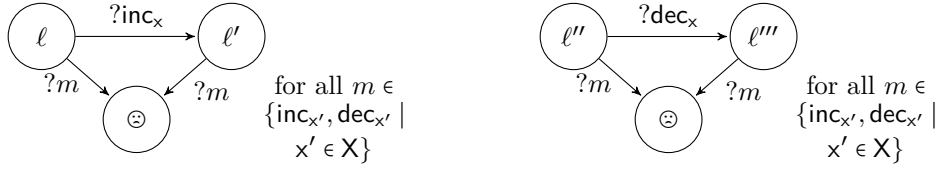
\begin{figure}
  \begin{minipage}{.5\textwidth}
  	\begin{center}
  		\tikzset{box/.style={draw, minimum width=4em, text width=4.5em, text centered, minimum height=17em}}

\begin{tikzpicture}[->, >=stealth', shorten >=1pt,node distance=2.5cm,on grid,auto, initial text = {}] 
	\node[state] (q0) {$\ell$};
	\node[state] (q1) [right = of q0, yshift = 0 ] {$\ell'$};
	\node[state] (bad) [below = of q0, xshift = 1.25cm , yshift = 1.4cm] {$\frownie$};
	\node[] (text) [right =of bad, yshift = 0.0cm, xshift = 0.1cm] {
		\shortstack{
		for all $m \in $\\
		$\set{\textsf{inc}_{\counter'}, \textsf{dec}_{\counter'} \mid $\\
			$\counter' \in \Counters}$}};
	
	\path[->] 
	(q0) edge node {$?\textsf{inc}_{\counter}$} (q1)
	(q0) edge node [left, yshift = -2]{$?m$} (bad)
	(q1) edge node [yshift = 2]{$?m$} (bad)
	;

\end{tikzpicture}
  	\end{center}
  \end{minipage}%
  \begin{minipage}{0.5\textwidth}
  	\begin{center}
  		\tikzset{box/.style={draw, minimum width=4em, text width=4.5em, text centered, minimum height=17em}}

\begin{tikzpicture}[->, >=stealth', shorten >=1pt,node distance=2.5cm,on grid,auto, initial text = {}] 
	\node[state] (q0) {$\ell''$};
	\node[state] (q1) [right = of q0, yshift = 0 ] {$\ell'''$};
	\node[state] (bad) [below = of q0, xshift = 1.25cm , yshift = 1.4cm] {$\frownie$};
	\node[] (text) [right =of bad, yshift = 0.0cm, xshift = 0.1cm] {
		\shortstack{
			for all $m \in $\\
			$\set{\textsf{inc}_{\counter'}, \textsf{dec}_{\counter'} \mid $\\
				$\counter' \in \Counters}$}};
	
	\path[->] 
	(q0) edge node {$?\textsf{dec}_{\counter}$} (q1)
	(q0) edge node [left, yshift = -2]{$?m$} (bad)
	(q1) edge node [yshift = 2]{$?m$} (bad)
	;

\end{tikzpicture}
  	\end{center}
  \end{minipage}
	\caption{Part of $\PP'_\mathcal{V}$ that simulates respectively the transition $(\ell, \delta,\ell')\in\Delta$ with $\delta(\counter)=1$ at the left, and $(\ell'', \delta,\ell''')\in\Delta$ with $\delta(\counter)=-1$ at the right.}
	\label{fig:prot-from-vass-pv}
\end{figure}
We then obtain the following theorem.
\begin{theorem}\label{th:synchro-ack-complete}
	\Target\ with Wait-Only protocols is Ackermann-complete.
\end{theorem}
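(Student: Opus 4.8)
The plan is to establish the two matching bounds separately, reusing the constructions already set up in the excerpt. For the upper bound, I would invoke \cref{lemma:soundness} and \cref{lemma:completeness}, which together say that for a fixed Wait-Only protocol $\PP$ and final waiting state $q_f$, the instance of \Target\ is a yes-instance if and only if $(s_0,\mathbf{0}) \abconftrans^\ast (s_f,\mathbf{0})$ in the VASS $\mathcal{V}_\PP$. This is exactly an instance of \Reachability, which is in Ackermann by \cref{vass-reach-complexity}. The only subtlety is the blow-up in size: by \cref{remark:vass:size}, $|\Loc|$ is doubly exponential in $|Q|$ and $|\Counters|$ is polynomial, so $\mathcal{V}_\PP$ has size at most doubly exponential in $|\PP|$. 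Since the Ackermann function dominates any fixed tower of exponentials (composing an elementary function with an Ackermann-bounded one stays in Ackermann), the resulting procedure still runs within Ackermann time, giving membership. I would also note the remark that the case $q_f \in Q_A$ is handled either by the same construction adapted, or more simply by the \expspace\ bound of \cref{sec:synchro-action}, so the restriction to $q_f \in Q_W$ in the construction is harmless.

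For the lower bound, the plan is to finish the reduction sketched around \cref{fig:prot-from-vass,fig:prot-from-vass-pv}: from a VASS $\mathcal{V}=(\Loc,\ell_0,\Counters,\Delta)$ and target $\ell_f$, build the Wait-Only protocol $\PP_\mathcal{V}$ depicted (without the dashed edge), and prove that $\mathcal{V}$ has a run from $(\ell_0,\mathbf{0})$ to $(\ell_f,\mathbf{0})$ if and only if $(\PP_\mathcal{V}, q_f)$ is a yes-instance of \Target. First I would check that $\PP_\mathcal{V}$ is indeed Wait-Only: the "controller" process lives in the $\PP'_\mathcal{V}$ box and in $\textsf{zero}/\textsf{unit}_\counter$, alternating broadcasts ($!!\textsf{inc}_\counter$, $!!\textsf{dec}_\counter$, $!!\textsf{start}$, $!!\textsf{end}$, $!!\textsf{verif}$) from action states with receptions in waiting states, and by inspection no state both sends and receives. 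Then the intended semantics: one process guesses to be the controller by broadcasting $!!\textsf{start}$ and moving to $\ell_0$; all other processes receive $\textsf{start}$ and become "counter units" sitting in $\textsf{zero}$. A unit of counter $\counter$ is incremented when it moves $\textsf{zero} \to \textsf{unit}_\counter$ on $?\textsf{inc}_\counter$ and decremented on $?\textsf{dec}_\counter$; the controller simulates a VASS transition $(\ell,\delta,\ell')$ with $\delta(\counter)=1$ by broadcasting $!!\textsf{inc}_\counter$, which is received by exactly one unit in $\textsf{zero}$ (others in $\textsf{zero}$ must move to $\frownie$, so correctness forces the controller to only fire this when it does not "over-count" — this is where the error/sink states and the final verification phase come in). The crucial accounting is that reaching $q_f$ synchronously requires: the controller to be in $\ell_f$, all $\textsf{unit}_\counter$ processes to be back in $\textsf{zero}$ (i.e. all counters zero) — enforced by the $!!\textsf{end}$ / $?\textsf{end}$ / $!!\textsf{verif}$ handshake and the thick error edges $?\textsf{inc}_\counter, ?\textsf{dec}_\counter$ from $\textsf{z-end}$ to $\textsf{err}'$ and $?\textsf{verif}$ from $q_f$ to $\textsf{err}$ — and, finally, no process stuck in any $\frownie$, $\textsf{err}$, $\textsf{err}'$ sink.

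I would organize the correctness argument in two directions. For soundness of the reduction (VASS run $\Rightarrow$ synchronization): given a run of $\mathcal{V}$ using $N$ as the maximum counter value reached, take $1$ controller plus enough unit processes for each counter (at least $N$ per counter, say), and replay the run step by step; at the end all counters are $0$ so all units are in $\textsf{zero}$, the controller broadcasts $!!\textsf{end}$ pulling $\ell_f$-type states to $\ell'_f$ and $\textsf{zero}$-units to $\textsf{z-end}$, then $!!\textsf{verif}$ moves everyone simultaneously to $q_f$; crucially no unit moved to $\frownie$ because increments/decrements were exactly matched, and the $!!\textsf{start}$/$!!\$$ split ensures exactly one controller. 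For completeness of the reduction (synchronization $\Rightarrow$ VASS run): in any execution ending with all processes in $q_f$, at most one process ever left $\qinit$ via $!!\textsf{start}$ (others that took $!!\$$ went to $q_1$ then, on $?\textsf{start}$, to $\textsf{zero}$, and a second $!!\textsf{start}$ would send the first controller's successors or the dashed target to error — here I use that $\textsf{err}$, with its $?\textsf{verif}$ self-trap, is a true sink that can never be emptied, so any process entering it dooms synchronization); hence there is a unique controller whose trajectory through $\PP'_\mathcal{V}$ projects onto a genuine path in $\Delta$, and the message discipline (each $\textsf{inc}_\counter$ received by a $\textsf{zero}$-unit, each $\textsf{dec}_\counter$ by a $\textsf{unit}_\counter$-process, with mismatches sending units to $\frownie$) forces the counter values to stay nonnegative and to be exactly tracked by the multiset of unit-process states; the final $!!\textsf{verif}$ phase succeeding forces all counters to be $0$, i.e. $\mathcal{V}$ reaches $(\ell_f,\mathbf{0})$. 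Combining the two reductions with \cref{vass-reach-complexity} yields Ackermann-hardness, and with the upper bound above, Ackermann-completeness.

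The main obstacle I anticipate is the completeness direction of the lower-bound reduction, specifically nailing down that a successful synchronization cannot "cheat" — that no spurious extra controller arises, that processes cannot avoid the $\frownie$ sinks when the controller fires an unmatched $\textsf{inc}_\counter$/$\textsf{dec}_\counter$, and that the error states are genuinely inescapable so that reaching one is fatal to synchronization. This requires a careful case analysis of where each process can be at each step and an argument that the $\textsf{verif}$ handshake at the end cannot be faked. The role of the dashed edge (reserved for Section 4.2) should be explicitly excluded here, and I would remark that the reduction is polynomial, so it transfers the full Ackermann-hardness of \Reachability.
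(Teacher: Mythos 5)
Your overall strategy is the paper's: Ackermann membership via the two-way correspondence with \Reachability\ in $\mathcal{V}_\PP$ (\cref{lemma:soundness}, \cref{lemma:completeness}, \cref{vass-reach-complexity}), and Ackermann-hardness via the protocol $\PP_\mathcal{V}$ of \cref{fig:prot-from-vass} without the dashed edge. The upper-bound paragraph is fine, and your observation that the doubly exponential size of $\mathcal{V}_\PP$ is absorbed by Ackermann is a point the paper leaves implicit.

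However, there is a genuine error in your account of the lower-bound reduction: you have reversed the direction of communication in the counter gadget. In the construction you claim to be using, the transitions are $(\textsf{zero}, !!\textsf{inc}_\counter, \textsf{unit}_\counter)$ and $(\textsf{unit}_\counter, !!\textsf{dec}_\counter, \textsf{zero})$: a \emph{counter process broadcasts} the increment/decrement (thereby changing its own state), and the unique leader sitting in a location $\ell\in\Loc$ (a waiting state) \emph{receives} it, moving to $\ell'$ if $(\ell,\delta,\ell')\in\Delta$ matches, and being forced to $\frownie$ otherwise. You instead have ``the controller simulates a VASS transition \dots by broadcasting $!!\textsf{inc}_\counter$, which is received by exactly one unit in $\textsf{zero}$ (others in $\textsf{zero}$ must move to $\frownie$)''. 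Under broadcast semantics this cannot work: a broadcast is received by \emph{every} process able to receive it, so either all processes in $\textsf{zero}$ move to $\textsf{unit}_\counter$ (destroying the counter correspondence) or the surplus ones are sent to the sink $\frownie$, which makes synchronization in $q_f$ impossible and hence kills the forward direction of the reduction. The same reversal appears in your description of the $\textsf{end}$ phase (in the paper each $\textsf{zero}$ process broadcasts $\textsf{end}$ one at a time and the leader receives the first one at $\ell_f$). This is not merely cosmetic: the completeness direction you single out as the hard part relies precisely on the fact that each counter operation is a broadcast by a \emph{single} unit process witnessed by the single leader, and that an ill-timed broadcast (e.g.\ a $\textsf{dec}_\counter$ after some process has reached $\textsf{z-end}$, or an operation not enabled at the leader's current location) irrevocably sends some process to $\frownie$ or $\textsf{err}'$. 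With the orientation you describe, neither the soundness nor the completeness argument goes through, so the hardness proof as planned would fail; with the paper's orientation, the rest of your outline (uniqueness of the $\textsf{verif}$ broadcast, the zero-test at the end, the $\$ /\textsf{start}$ split isolating one leader) is the right skeleton.
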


\begin{proof}[Sketch of Proof.]
The upper bound follows from~\cref{th:upper-bound-target} and \cref{vass-reach-complexity}. For the lower bound, we give the ideas that prove that the reduction 
described above is correct. In particular, there exist $C_0\in\mathcal{I}$ and $C_f\in \CC$ such that $C_0 \trans^\ast C_f$ in $\PP_\mathcal{V}$ and  $C_f(\aproc) = q_f$ for all $\aproc \in [1, ||C_f||]$ \emph{if and only if} $(\ell_0, \mathbf{0}) \abconftrans^\ast (\ell_f, \mathbf{0})$ in $\mathcal{V}$.

Assume first that $(\ell_0, \mathbf{0}) \abconftrans^\ast (\ell_f, \mathbf{0})$ in $\mathcal{V}$. Let $K\in\mathbb{N}$ be the maximum value $\Sigma_{\counter\in\Counters} v(\counter)$ encountered during the execution of $\mathcal{V}$. We choose $C_0$ such that $||C_0||=K+1$. The execution of the protocol goes as follows. All processes except for one (called the leader) broadcast a (dummy message) $\$$. The remaining leader process broadcasts $\textsf{start}$ and reaches $\ell_0$, causing all
 processes in $q_1$ to reach $\textsf{zero}$. These processes will simulate the counter values throughout the execution of the VASS, while the leader at $\ell_0$ simulates the VASS itself by traversing its locations in $\PP'_{\mathcal{V}}$. The value of the counter $\counter$ in the different steps of the execution of the VASS is represented by the number of processes in $\textsf{unit}_x$. To simulate an increment of $\counter$ in $\mathcal{V}$ from $(\ell_i, v_i)$ to $(\ell_{i+1}, v_{i+1})$, one of the processes in state $\textsf{zero}$ sends the message $\textsf{inc}_\counter$. This message is received by the leader, making it transitionning
  to $\ell_{i+1}$. If the
 protocol's configuration correctly represented $(\ell_i, v_i)$, then the new configuration faithfully represents $(\ell_{i+1}, v_{i+1})$. Once the simulation of the execution of the VASS is finished,
 the leader process is in $\ell_f$, and other processes remain in $\textsf{zero}$. At this point, all processes in $\textsf{zero}$ broadcast the message 
 $\textsf{end}$, and the leader, now on 
 $\ell'_f$, broadcasts the message $\textsf{verif}$, which gathers everyone in $q_f$. 

To prove the other direction, we need to prove that the processes in the protocol cannot cheat in their simulation of the VASS. First observe that to gather all the processes in $q_f$, 
an execution of the protocol needs to see the broadcast of $\textsf{verif}$ exactly once (a second broadcast of this message will necessarily send the first process to have broadcast it
in the state $\textsf{err}$). This enforces that in the execution of the protocol, exactly one process will evolve in the $\PP_\mathcal{V}$ part, simulating correctly the states of the VASS.
Second, observe that the modification of the counters must follow the transitions
of the VASS. 
The processes cannot modify a counter
in a way that is not allowed by the VASS: the process in the states of $\PP_\mathcal{V}$ would receive the message and would reach losing state $\frownie$. 
Finally, at the end of the simulation (when the leader reaches $\ell_f$), all the other processes must be in state $\textsf{zero}$. Otherwise the following will happen when trying to reach $q_f$: one process (called the helper process) broadcasts $\textsf{end}$, the message is received by the leader, which reaches $\ell'_f$. Then, all processes (in states $\textsf{zero}$ and $\textsf{unit}_\counter$) need to reach state $\textsf{z-end}$ before the unique broadcast of $\textsf{verif}$, as otherwise, they can never receive the message and reach $q_f$. If a process in state $\textsf{unit}_\counter$ joins state $\textsf{z-end}$, it has to broadcast message $\textsf{dec}_\counter$, causing processes in state $\textsf{z-end}$ (at least the helper process, and processes
from state $\textsf{zero}$ that may have already reached $\textsf{z-end}$) to receive the message and reach state $\textsf{err}'$. As a consequence, the helper process can no longer receive $\textsf{verif}$ and reach $q_f$, preventing the execution from gathering all the processes in $q_f$. Hence, the simulation correctly ends in a configuration where all processes but the leader are on state $\textsf{zero}$, faithfully simulating counters' values.
%
\end{proof}

\section{\Target\ is Easier when the Target State is an Action State}\label{sec:synchro-action}

In this section, we prove that when $q_f \in Q_A$, the $\Target$ problem becomes $\expspace$-complete. For the upper bound, we show that \Target\ can be reduced to a simpler problem on VASS. The lower bound comes
from the coverability problem on VASS. 


\subsection{\Target\ if $q_f\in Q_A$ is in \expspace}
%
We show that \Target\ when $q_f\in Q_A$ can be reduced to the mutual reachability problem on VASS, that is defined as follows.
%
Given a VASS $\mathcal{V} = (\Loc, \ell_0, \Counters, \Delta)$ and a location $\ell_f$, the mutual reachability problem on VASS asks if there exist a run from $(\ell_0, \mathbf{0})$ to $(\ell_f, \mathbf{0})$ and a
run from $(\ell_f, \mathbf{0})$ to $(\ell_0, \mathbf{0})$. This problem is shown to be in \expspace\ \cite{Leroux11}.
More precisely, we get the following theorem.
\begin{theorem}[\cite{Leroux11} Corollary 10.6, Transformation of \cite{HopcroftP79}]\label{thm:leroux:hopcroft:bound-runs}
	Given a VASS $\mathcal{V} = (\Loc, \ell_0, \Counters, \Delta)$, if there are two runs $(\ell_0, \mathbf{0}) \abconftrans^\ast (\ell_f, \mathbf{0})$ and  $(\ell_f, \mathbf{0}) \abconftrans^\ast (\ell_0, \mathbf{0})$, then there are two runs
	 $(\ell_0, \mathbf{0}) \abconftrans^\ast (\ell_f, \mathbf{0})$ and $(\ell_f, \mathbf{0})  \abconftrans^\ast (\ell_0, \mathbf{0})$ whose  lengths are bounded by $17(|\Counters| + 3)^2x^{15(|\Counters|+3)^{|\Counters| + 5}}$ where $x = (1+ 2(|\Loc| +1)^2)^2$.
\end{theorem}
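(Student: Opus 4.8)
The plan is to obtain this bound by combining two off-the-shelf results: the classical fact (due essentially to Hopcroft and Pansiot \cite{HopcroftP79}) that a VASS can be simulated by a plain vector addition system (VAS, i.e.\ a Petri net with no control locations) using only three extra counters, and the quantitative \emph{reversible-reachability} bound of Leroux \cite[Corollary~10.6]{Leroux11} for VAS, which states that two mutually reachable configurations of a $d$-dimensional VAS are joined, in both directions, by runs of length at most $17\,d^2\,X^{15\,d^{\,d+2}}$, where $X$ bounds the coordinates of the two endpoint configurations.

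First I would apply the Hopcroft--Pansiot transformation to $\mathcal V = (\Loc, \ell_0, \Counters, \Delta)$, producing a VAS $\mathcal V'$ on $d = |\Counters| + 3$ counters: the first $|\Counters|$ counters carry the original counter values, and the three auxiliary counters encode the current location via a fixed injective coding of $\Loc$ into the integers whose values stay bounded by a polynomial in $|\Loc|$ — precisely, by $x = (1 + 2(|\Loc|+1)^2)^2$. The crucial feature of this simulation is that it is faithful not only for reachability but for mutual reachability: for configurations of $\mathcal V$, being related by $\abconftrans^\ast$ corresponds exactly to the encoded configurations being related by $\abconftrans^\ast$ in $\mathcal V'$ (spurious manipulations of the three control counters lead to dead ends and cannot connect two \emph{encoded} configurations). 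Hence the hypothesis that $(\ell_0, \mathbf 0)$ and $(\ell_f, \mathbf 0)$ are mutually reachable in $\mathcal V$ translates to: their images $u_0$ and $u_f$ in $\mathcal V'$ are mutually reachable, and — since the original counters are $0$, only the control-encoding coordinates being nonzero — every coordinate of $u_0$ and of $u_f$ is at most $x$.

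Next I would feed $\mathcal V'$, $u_0$, $u_f$ into Leroux's Corollary~10.6. With $d = |\Counters|+3$, so that $d + 2 = |\Counters| + 5$, and with the above bound on the coordinates, this yields runs $u_0 \abconftrans^\ast u_f$ and $u_f \abconftrans^\ast u_0$ in $\mathcal V'$ of length at most $17(|\Counters|+3)^2 \, x^{15 (|\Counters|+3)^{|\Counters|+5}}$. Finally I would decode: since the Hopcroft--Pansiot simulation realises one step of $\mathcal V$ by one step of $\mathcal V'$ (the location change being folded into the effect vector on the auxiliary counters), a run of $\mathcal V'$ between $u_0$ and $u_f$ projects back to a run of $\mathcal V$ between $(\ell_0,\mathbf 0)$ and $(\ell_f,\mathbf 0)$ of the same length; doing this for both directions gives exactly the claimed two runs.

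The delicate point is the interface between the two cited results, namely checking that the Hopcroft--Pansiot encoding is genuinely reversible-faithful: one must verify that no ``cheating'' run of $\mathcal V'$ (which temporarily puts the three control counters into a configuration coding no location of $\mathcal V$) can both start and return at encoded configurations, so that mutual reachability in $\mathcal V'$, restricted to the image of the encoding, coincides with mutual reachability in $\mathcal V$, and that the decoding of a short $\mathcal V'$-run is a legitimate $\mathcal V$-run. Everything else — the choice $d = |\Counters|+3$, the polynomial bound $x$ on the location code, and the substitution into Leroux's formula — is routine arithmetic.
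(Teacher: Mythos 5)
Your proposal follows exactly the route the paper itself takes: the paper does not spell out a proof but simply cites Leroux's Corollary 10.6 for the mutual-reachability length bound in VAS and remarks that applying the Hopcroft--Pansiot VASS-to-VAS transformation (with $d=|\Counters|+3$ counters and location codes bounded by $x$) yields the stated bound. Your write-up is this same combination, with the additional (welcome) care about the faithfulness of the encoding, so it is correct and essentially identical in approach.
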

\begin{remark}
	In \cite{Leroux11}, Leroux establishes the \expspace -membership and provides a bound on the lengths of the runs, for the mutual reachability problem in VAS (Vector Addition Systems). 
	However, by applying the VASS-to-VAS transformation presented in \cite{HopcroftP79}, we get the aforementioned bound. 
\end{remark}

Given a Wait-Only protocol $\PP = (Q, \Sigma, \qinit, T)$ and a target state $q_f\in Q_A$, we explain how to transform it into a VASS where mutual reachability
is equivalent to \Target. We build the VASS $\mathcal{V}_\PP = (\Loc, \Counters, s_0, \Delta)$ as described in \cref{sec:def-vass}, with some
modifications. The first two modifications simply encode the fact that the target state is not a waiting sate anymore, while the third one ensures the mutual
reachability. 
\begin{itemize}
	\item We add a state $s'_f$ and the transition $(\set{S_f}, \mathbf{0}, s_f)$ is replaced by two transitions: $(\emptyset, \delta, s'_f)$ and $(s'_f, \delta', s_f)$. Here, $\delta(\counter_{q_f}) = -1$, $\delta'(\counter_{q_f}) = 1$, and for all other counters $\counter$, $\delta(\counter) = \delta'(\counter) = 0$. This is meant to prevent the reachability of $(s_f, \mathbf{0})$ to be trivial with the run $(s_0, \mathbf{0}) \abconftrans (\emptyset, \mathbf{0}) \abconftrans (s_f, \mathbf{0})$.
	\item The transition $(s_f, \delta_f, s_f)$ is replaced by $(s_f, \delta'_f, s_f)$, where $\delta'_f(\counter_{q_f}) = -1$ and $\delta'_f(\counter) = 0$ for all other counters $\counter$.
	\item We add a transition $(s_f, \mathbf{0}, s_0)$ to allow the resetting of $\mathcal{V}_P$ to its initial state after reaching $s_f$.
\end{itemize}


 \begin{lemma}\label{lemma:synchro-action:mutu-reach:reduction}
 	There exists $C \in \mathcal{I}$ and $C' \in \CC$ such that $C \trans^\ast C'$ and, for all $\aproc \in [1, ||C'||]$, $C'(\aproc) = q_f$ if and only if there is a run $(s_0, \mathbf{0})  \abconftrans^\ast  (s_f, \mathbf{0})$ and a run $(s_f, \mathbf{0}) \abconftrans^\ast (s_0, \mathbf{0})$ in $\mathcal{V}_P$.
 \end{lemma}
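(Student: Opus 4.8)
The plan is to reduce this new lemma to the soundness and completeness results already established for the original VASS construction (\cref{lemma:soundness,lemma:completeness}, or more precisely their building blocks \cref{lemma:soundness-local-ppty} and \cref{lemma:target:completeness:local}), adapting the endpoints to reflect that $q_f$ is now an action state. The key observation is that the three modifications only touch the ``entry'' and ``exit'' gadgets of $\mathcal{V}_\PP$ around the distinguished locations $s_0$, $\emptyset$, $\set{S_f}$ and $s_f$; the core simulation over $\CoherentSets$ is untouched, so the local correspondence between S-configurations and network configurations still holds verbatim.

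\textbf{($\Leftarrow$) From mutual reachability to a synchronizing execution.}
First I would analyze the shape of a run $(s_0,\mathbf{0}) \abconftrans^\ast (s_f,\mathbf{0})$ in the modified VASS. Since $q_f\in Q_A$, the only way to enter $s_f$ is via $(s'_f,\delta',s_f)$, and the only way to reach $s'_f$ is via $(\emptyset,\delta,s'_f)$; the self-loop at $s_f$ only decrements $\counter_{q_f}$. So any such run factors as $(s_0,\mathbf{0}) \abconftrans^\ast (s_0,v_0) \abconftrans (\emptyset,v_0) \abconftrans^\ast (\emptyset,v) \abconftrans (s'_f,v-e_{q_f}) \abconftrans (s_f,v) \abconftrans^\ast (s_f,\mathbf{0})$, where $v_0$ charges only $\counter_{\qinit}$ (say with value $K$), the middle segment is a run inside $\CoherentSets$, and reaching $(s_f,\mathbf{0})$ forces $v(\counter_{q_f}) = K + (\text{leftover in absent-summary counters})$ and all other counters of $v$ to be $0$ except possibly the absent-summary counters $\counter_{(q,k)}$ with $(q,k)\notin\lab{\emptyset}$. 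By \ref{cond:implem-definition:absent-summary}, in the implementation of $(\emptyset,v)$ every process associated with such a counter is in state $q$; and since the self-loop decrements exactly those absent counters and $\counter_{q_f}$ down to zero, in fact every counter of $v$ other than $\counter_{q_f}$ and these absent ones is zero, and all absent-counter processes must carry state $q_f$ (otherwise $\counter_{q_f}$ could not absorb them). Applying \cref{lemma:soundness-local-ppty} inductively along the middle segment starting from the initial configuration $C$ with $K$ processes (which lies in $\implem{(\emptyset,v_0)}$), I get $C'\in\implem{(\emptyset,v)}$ with $C\trans^\ast C'$. By \ref{cond:implem-definition:action-state} and \ref{cond:implem-definition:absent-summary}, every process of $C'$ is then in state $q_f$. (Note this direction does not even use the backward run $(s_f,\mathbf{0})\abconftrans^\ast(s_0,\mathbf{0})$; that run exists for free via the added transition $(s_f,\mathbf{0},s_0)$ and is only needed to make the statement an equivalence with mutual reachability rather than plain reachability.)

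\textbf{($\Rightarrow$) From a synchronizing execution to mutual reachability.}
Conversely, given $C\trans^\ast C'$ with all processes of $C'$ in $q_f$, invoke \cref{cor:well-formed} to get a well-formed execution $\rho$ from $C$ to $C'$, and run the argument of \cref{lemma:completeness}: start from $(s_0,\mathbf{0})\abconftrans^\ast(s_0,v_0)\abconftrans(\emptyset,v_0)\in\abconffrom{\rho}{0}$ with $v_0(\counter_{\qinit})=\nbagents{\rho}$, and apply \cref{lemma:target:completeness:local} inductively to reach some $(\SS_n,v_n)\in\abconffrom{\rho}{n}$. Since all processes of $\rho_n=C'$ are in the action state $q_f$, for every process the next action state is $q_f$ reached at index $n$, so there are no waiting-state processes at $\rho_n$ and $\SS_n=\emptyset$ (here I should double-check there is no residual summary with exit $q_u$: since $q_f\ne q_u$, nothing ever aims for $q_u$, so $\SS_n=\emptyset$ indeed), and by \ref{new-cond:repr:action-state} we get $v_n(\counter_{q_f})=\nbagents{\rho}$ and $v_n(\counter)=0$ for all other counters that appear (absent-summary counters being $0$ by \ref{new-cond:repr:renaming}). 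Hence $(\emptyset,v_n)\abconftrans(s'_f,v_n-e_{q_f})\abconftrans(s_f,v_n)$ and then the new $(s_f,\delta'_f,s_f)$ loop empties $\counter_{q_f}$, giving $(s_f,\mathbf{0})$. For the second run, $(s_f,\mathbf{0})\abconftrans(s_0,\mathbf{0})$ in one step via the added transition. This completes the equivalence.

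\textbf{Main obstacle.}
The delicate point is the bookkeeping around absent-summary counters in the ($\Leftarrow$) direction: I must argue that when $(s_f,\mathbf{0})$ is finally reached, the processes that were ``stuck'' in absent-summary counters (representing old processes already parked on some action state) are precisely those in $q_f$, so that emptying $\counter_{q_f}$ via the modified self-loop $\delta'_f$ is exactly the operation available. This requires carefully tracking which counters the new self-loop can decrement versus the ones in the generic $\Delta_e$ transitions, and confirming that the modified entry gadget ($\emptyset\to s'_f\to s_f$) charges $\counter_{q_f}$ consistently with \ref{cond:implem-definition:action-state}. I expect this to be a short but careful case analysis; everything else is a direct reuse of the already-proven local lemmas.
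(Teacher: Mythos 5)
Your ($\Rightarrow$) direction is fine and is essentially the paper's: reuse \cref{cor:well-formed} and \cref{lemma:target:completeness:local} to reach $(\emptyset,v_n)$ with only $\counter_{q_f}$ nonzero, then go through $s'_f$, $s_f$, empty $\counter_{q_f}$, and get the second run for free via the added transition $(s_f,\mathbf{0},s_0)$. The genuine gap is in your ($\Leftarrow$) direction, and it is exactly the point the paper spends its proof on: your claim that \emph{any} run $(s_0,\mathbf{0})\abconftrans^\ast(s_f,\mathbf{0})$ factors as a single pass $s_0\rightarrow\emptyset\rightarrow s'_f\rightarrow s_f$ is false. The modified VASS also contains the reset transition $(s_f,\mathbf{0},s_0)$, which can be taken at \emph{any} counter valuation (its effect is $\mathbf{0}$), so a run reaching $(s_f,\mathbf{0})$ may have the shape $(s_0,\mathbf{0})\abconftrans^\ast(s_f,v_1)\abconftrans(s_0,v_1)\abconftrans^\ast(s_f,v_2)\cdots\abconftrans^\ast(s_f,\mathbf{0})$, interleaving several simulation phases, injecting fresh processes into $\counter_{\qinit}$ in each phase, and carrying over nonzero action-state and absent-summary counters from one phase to the next. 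The later phases then do not start from a configuration corresponding to an initial protocol configuration, so \cref{lemma:soundness-local-ppty} cannot simply be applied from $(\emptyset,v_0)$ as you do.

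The paper resolves precisely this: in the sketch it extracts a protocol execution from each segment and merges them into a single execution, using crucially that $q_f\in Q_A$, so processes already parked in $q_f$ receive no messages and are untouched by the subsequent phases (the paper explicitly notes this merging fails if $q_f\in Q_W$); in the appendix it instead shows, via monotonicity of VASS steps ($(s,v_1)\abconftrans^\ast(s',v_2)$ implies $(s,v_1+\Delta)\abconftrans^\ast(s',v_2+\Delta)$ for $\Delta\succeq\mathbf{0}$) and a reordering that pushes all increments of $\counter_{\qinit}$ to the front and all decrements of $\counter_{q_f}$ to the end, that one may assume w.l.o.g. the run never uses $(s_f,\mathbf{0},s_0)$ — after which a single-pass analysis like yours goes through. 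Your ``main obstacle'' paragraph identifies a different (and comparatively minor) difficulty, the bookkeeping of absent-summary counters, and misses the multi-pass issue entirely; note also a small factual slip there: the modified self-loop $\delta'_f$ at $s_f$ decrements only $\counter_{q_f}$, not the absent-summary counters, so in a single-pass run all other counters must already be zero upon entering $s_f$.
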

 \begin{proof}[Sketch of Proof.]
Assume that there exist $C \in \mathcal{I}$ and $C' \in \CC$ such that $C \trans^\ast C'$ and, for all $\aproc \in [1, ||C'||]$, $C'(\aproc) = q_f$. 
Since the main body of $\mathcal{V}_P$ is the same as in~\cref{sec:def-vass} we can, like in \cref{lemma:completeness}, 
build a run of $\mathcal{V}_\PP$ from $(s_0, \mathbf{0})$ to $(s_f, \mathbf{0})$. By construction of $\mathcal{V}_\PP$, 
$(s_f, \mathbf{0}) \abconftrans (s_0, \mathbf{0})$.

Assume now that $(s_0, \mathbf{0}) \abconftrans^\ast (s_f, \mathbf{0})$ (and $(s_f, \mathbf{0}) \abconftrans (s_0, \mathbf{0})$). 
Observe that it might be the case that the run looks like $(s_0, \mathbf{0}) \abconftrans^\ast (s_f, v_1)\abconftrans (s_0, v_1)\abconftrans^\ast (s_f, v_2)\dots
(s_0, v_k) \abconftrans^\ast (s_f,\mathbf{0})$. 
By construction of the VASS, we know that there is an execution 
$(s_0, \mathbf{0}) \abconftrans^\ast (\emptyset, v'_0)\abconftrans^\ast (\emptyset, v''_0)\abconftrans(s'_f, v'''_0)\abconftrans (s_f, v''_0)\abconftrans^\ast (s_f, v_1)$.
Here $v'_0$ is the valuation obtained after having increased the counter $\counter_{\qinit}$, $v''_0$ is the valuation obtained just before going to $s'_f$, and
$v_1$ is obtained after having decremented counter $\counter_{q_f}$.  
Adapting the proof of~\cref{lemma:soundness} to this case, we deduce the existence of an execution of the protocol from an initial configuration $C_0$ to a configuration $C'_0$ such that $|{C'_0}^{-1}(q)|=0$ for all
$q\in Q_W$, and $|{C'_0}^{-1}(q)|=v''_0(\counter_q) + \sum_{i=1,\dots,{|Q_W|+1}}v''_0(\counter_{(q,i)})$ for all $q\in Q_A$. 
From the portion of run of the VASS $(s_0, v_1)\abconftrans^\ast (\emptyset, v'_1)\abconftrans^\ast(\emptyset, v''_1)\abconftrans (s'_f, v'''_1)\abconftrans (s_f, v''_1)\abconftrans^\ast (s_f, v_2)$, we similarly get another sequence of configurations of the protocol from a configuration $C_1$ (not necessarily initial
because $v_1$ might not be equal to 0 for some counters other than $\counter_{\qinit}$) to a configuration
$C'_1$ such that $C_1 \trans^\ast C'_1$ and $|{C'_1}^{-1}(q)|=0$ for all
$q\in Q_W$. 
Observe however that these two sequences can be merged into a single one, of size
 $v'_0(\counter_{\qinit})+ ( v'_1(\counter_{\qinit}) - v_1(\counter_{\qinit}))$ (the second part corresponds to processes added in $\qinit$ with transition 
 $(s_0, \delta_{\textit{in}}, s_0)$, and $\delta_{in}(\counter_{\qinit}) = 1$).
The execution then
first behaves like the execution from $C_0$ to $C'_0$, potentially leaving some processes in the initial state (in particular processes from $v'_1(\counter_{\qinit}) - v_1(\counter_{\qinit})$).
Once this first part is over,
all the processes are either in the initial state, or in an action state (because ${C'_0}^{-1}(q)=0$ for all $q\in Q_W$). Then, one can simulate the second sequence, 
(processes already in $q_f$ from the first execution won't be affected because they are in an action state, so they won't receive any message. This would not 
be true if $q_f\in Q_W$). Since the execution 
of $\mathcal{V}_P$ eventually reaches $(s_f, \mathbf{0})$, it means that it reaches $(s_f, v_{k+1})$ where $v_{k+1}(\counter)=0$ for all $\counter\neq \counter_{q_f}$. Then, by
iterating the construction described above, one obtains an execution of $\PP$ from an initial configuration to a configuration $C_f$ such that $C_f(e)=q_f$ 
for all $e\in [1,||C_f||]$. \end{proof}

As runs of doubly exponential length can be guessed in exponential space in the size of the VASS and $\expspace = \textsc{NExpspace}$, we get the following theorem using \cref{remark:vass:size,thm:leroux:hopcroft:bound-runs}.
Observe that even if the number of locations is doubly exponential in the size of the protocol, the bound of \cref{thm:leroux:hopcroft:bound-runs}~is polynomial in $|\Loc|$ and doubly exponential in $|\Counters|$, hence lengths of the runs to guess remain doubly exponential in the size of the protocol.\lug{ici }

\begin{theorem}\label{thm:target-wo-qf-action-membership}
	 \Target\ for Wait-Only protocols is in \expspace\ when $q_f \in Q_A$.
\end{theorem}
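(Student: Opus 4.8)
The plan is to turn \cref{lemma:synchro-action:mutu-reach:reduction} into an algorithm: reduce \Target\ with $q_f\in Q_A$ to mutual reachability of $(s_0,\mathbf{0})$ and $(s_f,\mathbf{0})$ in the modified VASS $\mathcal{V}_\PP$ of that lemma, bound the witnessing runs via \cref{thm:leroux:hopcroft:bound-runs}, and then decide the instance by a nondeterministic exponential-space search that simulates $\mathcal{V}_\PP$ on the fly, without ever constructing it explicitly.

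First I would record the relevant sizes. Writing $n$ for the size of $\PP$, \cref{remark:vass:size} gives $|\Counters| = O(n^2)$ while $|\Loc| \le 2^{2^{O(n)}}$, and the modifications of \cref{lemma:synchro-action:mutu-reach:reduction} add only a constant number of states and transitions, so these estimates still apply. The key structural observation is that, although there are doubly exponentially many locations, a single location — a coherent set of summaries — admits a \emph{polynomial}-size representation: by coherence a coherent set contains at most $|Q_A|\cdot(|Q_W|+1)$ summaries, and each summary $(\print,q_a,k)$ is described by $O(n)$ bits. Moreover, given such a location $\SS$ and the protocol $\PP$, one can enumerate and check the outgoing transitions of $\SS$ in $\mathcal{V}_\PP$ (those of $\Delta_0$, $\bigcup_{t\in T}\Delta_t$ and $\Delta_e$, together with the transitions added in \cref{lemma:synchro-action:mutu-reach:reduction}) in time polynomial in $n$: each summary transition $\abtransup{t}$, $\abtransup{t,+q'}$ or $\abtransup{t}\arrived$ only requires inspecting one broadcast step out of $\confstate{q}\oplus\conf{\print}$, which involves at most $|Q_W|+1$ processes, and the nondeterministic choices (which print $\print'$ is reached, which fresh label $(q_a,k)\notin\lab{\SS}$ is used) range over polynomial-size domains. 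Hence $\mathcal{V}_\PP$ can be handled implicitly.

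Next I would bound the search. By \cref{thm:leroux:hopcroft:bound-runs}, if the two runs witnessing mutual reachability exist, then there are such runs of length at most $B := 17(|\Counters|+3)^2\, x^{15(|\Counters|+3)^{|\Counters|+5}}$ with $x=(1+2(|\Loc|+1)^2)^2$. Crucially this bound is only \emph{polynomial} in $|\Loc|$ while the exponential tower sits on $|\Counters|$; plugging in the estimates above yields $\log x = 2^{O(n)}$ and $15(|\Counters|+3)^{|\Counters|+5}=2^{\mathrm{poly}(n)}$, hence $\log B = 2^{\mathrm{poly}(n)}$, i.e.\ $B$ is doubly exponential in $n$ — no worse, precisely because the doubly exponential quantity $|\Loc|$ enters only polynomially. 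Consequently every configuration $(\ell,v)$ on a run of length $\le B$ has a polynomial-size location and counter values bounded by $B$, so it is storable in $2^{\mathrm{poly}(n)}$ bits.

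Finally, the decision procedure guesses the two runs step by step, maintaining: the current configuration of $\mathcal{V}_\PP$ (exponential space), a step counter bounded by $B$ (needing $\lceil\log B\rceil = 2^{\mathrm{poly}(n)}$ bits), and a one-bit flag recording which of the two runs is being built; at each step it guesses an outgoing transition of the current location (computed on the fly as above), applies it, increments the step counter, and rejects if it exceeds $B$; it accepts once it has reached $(s_f,\mathbf{0})$ in the first phase and then $(s_0,\mathbf{0})$ in the second. This runs in nondeterministic exponential space in $n$, and since $\textsc{NExpspace}=\expspace$ by Savitch's theorem, \cref{lemma:synchro-action:mutu-reach:reduction} gives that \Target\ for Wait-Only protocols with $q_f\in Q_A$ is in \expspace. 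I expect the main obstacle to be twofold: (a) making rigorous that the transition relation of $\mathcal{V}_\PP$ is computable from $\SS$ and $\PP$ in polynomial time, so that the doubly exponentially many locations are never enumerated; and (b) the arithmetic establishing that $B$ stays doubly exponential in $n$ despite $|\Loc|$ being doubly exponential — both resting on the fact that it is $|\Counters|$, polynomial in $n$, that drives the Leroux--Hopcroft bound.
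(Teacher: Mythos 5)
Your proposal is correct and follows essentially the same route as the paper: reduce to mutual reachability in $\mathcal{V}_\PP$ via \cref{lemma:synchro-action:mutu-reach:reduction}, bound the witnessing runs with \cref{thm:leroux:hopcroft:bound-runs} while observing (as the paper does, using \cref{remark:vass:size}) that the bound is only polynomial in $|\Loc|$ and doubly exponential in $|\Counters|$, hence doubly exponential in the protocol size, and then guess such runs nondeterministically in exponential space, concluding with $\textsc{NExpspace}=\expspace$. Your explicit succinct encoding of locations and on-the-fly computation of the transition relation merely spell out details the paper leaves implicit.
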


\subsection{\Target\ if $q_f\in Q_A$ is \expspace-hard}

We get $\expspace$-hardness by proving that if one can solve $\Target$ problem with $q_f \in Q_A$, then one can solve the coverability problem in VASS, which
is \expspace-hard~\cite{lipton76}, and stated as follows: given a VASS $\mathcal{V} = (\Loc, \ell_0, \Counters, \Delta)$ of dimension $d \in \nat$, and a location $\ell_f \in \Loc$, decide whether there is an execution from $(\ell_0, \mathbf{0})$ to $(\ell_f, v)$ for some $v \in \nat^d$.
%
%

For the reduction, we use the protocol depicted in \cref{fig:prot-from-vass}, this time including the dashed edge but excluding the thick edges. Hence, $q_f$ becomes an action state. Then $\ell_f$ is coverable if and only if there exists an execution $C_0 \trans^\ast C_f$ in which all processes reach $q_f$ starting from an initial configuration $C_0 \in \mathcal{I}$.
If $\ell_f$ is coverable, the execution of the VASS can be simulated in the protocol like in \cref{subsec:synchro-lower-bound}. The processes that may still remain in 
the states $\textsf{unit}_\counter$ can reach $\textsf{z-end}$ even when $\ell'_f$ is populated (recall that all the thick edges have been removed). Once this is done,
all processes can gather in $q_f$. 
Conversely, if there exists an execution $C_0 \trans^\ast C_f$ in which all processes reach $q_f$, it means that $\ell_f$ is coverable. Let $\aproc_0$ be the first process to broadcast $\textsf{start}$, then an execution of the VASS covering $\ell_f$ can be built based on the sequence of states visited by $\aproc_0$ between $\ell_0$ and $\ell_f$.
This relies on three keys observations:
	(1) The process $\aproc_0$ must visit $\ell_f$, otherwise it cannot reach $q_f$.
	(2) When $\aproc_0$ is between $\ell_0$ and $\ell_f$, $\aproc_0$ is the only process in this region. If another process attempts to join by broadcasting $\textsf{start}$, $\aproc_0$ will receive the message and go to $\textsf{err}$, preventing it from reaching $q_f$.
	(3) When $\aproc_0$ reaches $\ell_0$ for the first time, no other process is in the states $\set{\textsf{unit}_\counter\mid \counter\in\Counters}$, since $\textsf{start}$ is sent for the first time at this point.
%
%
Together with \cref{thm:target-wo-qf-action-membership}, this establishes the following result. 
\begin{theorem}
	$\Target$ for Wait-Only protocols is $\expspace$-complete when $q_f\in Q_A$.
\end{theorem}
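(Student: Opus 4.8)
The plan is to close the loop by combining the upper bound from \cref{thm:target-wo-qf-action-membership} with the matching lower bound sketched immediately above. Since the \expspace-membership is already established, the only task is to confirm that the reduction from VASS \emph{coverability} (which is \expspace-hard by Lipton~\cite{lipton76}) to \Target\ with $q_f\in Q_A$ is correct, as outlined around \cref{fig:prot-from-vass} with the dashed edge included and the thick edges removed.

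First I would fix a VASS $\mathcal{V}=(\Loc,\ell_0,\Counters,\Delta)$ and a target location $\ell_f$, normalise it so each transition changes exactly one counter by $\pm 1$, and consider the protocol $\PP_\mathcal{V}$ of \cref{fig:prot-from-vass} with the dashed edge $(\ell_f,?\textsf{start},\textsf{err})$ present and all thick edges absent. I would then prove the two directions of the equivalence ``$\ell_f$ is coverable in $\mathcal{V}$ iff there is an execution $C_0\trans^\ast C_f$ of $\PP_\mathcal{V}$ from an initial configuration with $C_f(\aproc)=q_f$ for all $\aproc$.'' For the forward direction, given a covering run $(\ell_0,\mathbf{0})\abconftrans^\ast(\ell_f,v)$, let $K$ be the maximum of $\sum_{\counter}v_i(\counter)$ over the run; take $\|C_0\|=K+1$, designate one leader who broadcasts $\textsf{start}$ to reach $\ell_0$ (the others broadcast $\$$ and move to $q_1$, then to $\textsf{zero}$ upon receiving $\textsf{start}$), simulate the VASS run exactly as in \cref{subsec:synchro-lower-bound} using processes in $\textsf{zero}$/$\textsf{unit}_\counter$ to carry counter values, and then—crucially, because the thick edges are gone—let the leftover processes in $\textsf{unit}_\counter$ freely drain back to $\textsf{zero}$ (or into $\textsf{z-end}$) without destroying the helper process; finally broadcast $\textsf{end}$ then $\textsf{verif}$ to gather everyone in $q_f$. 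For the converse, take $\aproc_0$ to be the first process to broadcast $\textsf{start}$ and read off from its trajectory in $\PP'_\mathcal{V}$ a run of $\mathcal{V}$; the three observations in the sketch (($\aproc_0$ must reach $\ell_f$ to ever get to $q_f$; ($\aproc_0$ is the unique occupant of the $\PP'_\mathcal{V}$ region between $\ell_0$ and $\ell_f$ because a second $\textsf{start}$ would send it to $\textsf{err}$ via the dashed edge; and no process sits in any $\textsf{unit}_\counter$ when $\aproc_0$ first enters $\ell_0$) guarantee that the $\frownie$-transitions of $T_\mathcal{V}$ are never triggered, so every $\textsf{inc}_\counter/\textsf{dec}_\counter$ $\aproc_0$ receives corresponds to a genuine VASS transition and the counter encoding stays faithful up to covering (we only need $\ell_f$ reached, not with zero counters).

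The main obstacle I expect is making the converse direction airtight regarding the reuse of counter-process identities: one must argue that whenever $\aproc_0$ receives $\textsf{dec}_\counter$ there really is a process in $\textsf{unit}_\counter$ to have sent it, and that the number of processes currently in $\textsf{unit}_\counter$ is at least the VASS value of $\counter$—this is where observation (3) and an invariant maintained along $\aproc_0$'s trajectory do the work, and it requires care because, unlike the reachability reduction, here we never perform a final ``$\textsf{verif}$ sends a cheater to $\textsf{err}$'' check, so correctness rests entirely on the $\frownie$/$\textsf{err}'$ states blocking illegal counter moves \emph{during} the simulation rather than afterwards. Once this invariant is in place, the equivalence follows, and combining it with \cref{thm:target-wo-qf-action-membership} yields that \Target\ for Wait-Only protocols with $q_f\in Q_A$ is \expspace-complete, as claimed.
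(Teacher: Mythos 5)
Your proposal is correct and follows essentially the same route as the paper: the upper bound is taken from \cref{thm:target-wo-qf-action-membership}, and the lower bound is the same reduction from VASS coverability using the protocol of \cref{fig:prot-from-vass} with the dashed edge and without the thick edges, argued via the same three observations (the first \textsf{start}-broadcaster must reach $\ell_f$, it is the unique occupant of the $\PP'_\mathcal{V}$ region, and no process sits in a $\textsf{unit}_\counter$ state when it first enters $\ell_0$) together with the counting invariant that makes the received $\textsf{inc}_\counter/\textsf{dec}_\counter$ sequence a genuine covering run. The only point to tighten is the scheduling in the forward direction: the leftover $\textsf{unit}_\counter$ processes should drain (broadcast $\textsf{dec}_\counter$) only \emph{after} the helper's $\textsf{end}$ broadcast has moved the leader to the action state $\ell'_f$, since otherwise those broadcasts would be received by the leader inside $\PP'_\mathcal{V}$ — which is exactly how the paper orders these steps.
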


\section{Solving the Repeated Coverability problem for Wait-Only Protocols}\label{sec:repeated-cover}

\subparagraph*{Upper Bound.}
\label{sec:rep-cover-in-expspace}
We show now that \RepeatedCover\ on Wait-Only protocols is in \expspace. This is done via the repeated coverability problem in VASS, which is stated as
follows: given a VASS $\mathcal{V}=(\Loc,\ell_\textit{init}, \Counters, \Delta)$ and a location $\ell_f\in \Loc$, is there an infinite execution $(\ell_{init}, \mathbf{0}) \abconftrans(\ell_1, v_1) \abconftrans\dots$  such that, for all $i \in \nat$, there exists $j > i$ such that $\ell_j = \ell_f$?

We rely on the following theorem.
\begin{theorem}[Theorem 3.1 of \cite{Habermehl97}]\label{thm:habermehl97}
	For a VASS $\mathcal{V} = (\Loc, \ell_\textit{init}, \Counters, \Delta)$,
	the repeated coverability problem\ is solvable in $O(\log(l) + \log(|\Loc|))2^{c|\Counters|\log(|\Counters|)}$ nondeterministic space for some constant $c$ independent of $\mathcal{V}$, and where the absolute values of components of vectors in $\Delta$ are smaller than $l$.
\end{theorem}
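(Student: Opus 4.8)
The plan is to reconstruct the Rackoff-style analysis underlying Habermehl's bound, working directly with the VASS $\mathcal{V} = (\Loc, \ell_\textit{init}, \Counters, \Delta)$; write $d = |\Counters|$ for the dimension and recall that $l$ bounds the absolute values of the components of the vectors labelling $\Delta$. The first step is a combinatorial characterisation of repeated coverability as the existence of a \emph{self-covering witness}: $\ell_f$ is repeatedly coverable from $(\ell_\textit{init}, \mathbf{0})$ if and only if there exist counter valuations $v \leq v'$ (componentwise) with $(\ell_\textit{init}, \mathbf{0}) \abconftrans^\ast (\ell_f, v) \abconftrans^{+} (\ell_f, v')$. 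The ``if'' direction follows by monotonicity of the VASS semantics: since $v' \geq v$, the transition sequence realising $(\ell_f, v) \abconftrans^{+} (\ell_f, v')$ can be replayed from $(\ell_f, v')$ without any counter going negative, producing some $(\ell_f, v'')$ with $v'' \geq v'$, and iterating yields an infinite run visiting $\ell_f$ infinitely often. The ``only if'' direction uses that $(\nat^\Counters, \leq)$ is a well-quasi-order (Dickson's lemma): along any infinite run visiting $\ell_f$ infinitely often, the valuations recorded at successive visits to $\ell_f$ admit an increasing pair $v \leq v'$, which furnishes the witness.

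The technical heart is to bound the size of a \emph{minimal} self-covering witness, adapting Rackoff's method to this two-phase, self-covering shape. I would introduce, for each $0 \leq i \leq d$, an $i$-\emph{truncated} semantics in which only the first $i$ counters are required to stay non-negative while counters $i+1, \dots, d$ may take arbitrary integer values, and define $g(i)$ as the maximal length of a shortest self-covering witness (with the order constraint imposed only on the tracked counters $1, \dots, i$) over all source configurations in this relaxed semantics. The base case $g(0)$ is polynomial in $|\Loc|$, since with no counter constrained a minimal witness need not repeat a location and has length $O(|\Loc|^2)$. For the inductive step I would argue that in a shortest $(i{+}1)$-truncated witness either all tracked counters stay below a threshold $B = l \cdot g(i) + 1$, in which case the reachable ``tracked'' state space has size at most $|\Loc| \cdot B^{i+1}$ and a minimal witness cannot revisit a tracked configuration within a phase, bounding its length; or some tracked counter first exceeds $B$, after which that counter is large enough that replaying any segment of length at most $g(i)$ keeps it non-negative, so it may be treated as unbounded and the remainder reduces to the $i$-truncated problem. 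This yields a recurrence of the shape $g(i{+}1) \leq |\Loc| \cdot (l\,g(i) + 2)^{i+1} + g(i)$, whose solution satisfies $g(d) \leq 2^{2^{O(d\log d)}}$ and, crucially, bounds every counter value arising in a minimal witness by $l^{2^{O(d\log d)}}$, i.e.\ by a number representable in $2^{O(d\log d)}\cdot\log l$ bits.

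With this bound in hand I would give the nondeterministic procedure. The machine guesses the witness run of the untruncated VASS one transition at a time, maintaining only: the current configuration $(\ell, v)$; a step counter capped at $g(d)$ (so the search terminates, and a witness of length $\leq g(d)$ is found whenever one exists); a one-bit flag recording whether the ``first'' visit of the self-covering pair has been committed to; and, once committed, the stored valuation $v$ of that visit, so that upon a later visit to $\ell_f$ it checks $v' \geq v$ componentwise and accepts. It never stores the run itself. The current configuration and the stored valuation each use $\log|\Loc|$ bits for the location and $d$ counter values of $2^{O(d\log d)}\log l$ bits each, while the step counter uses $\log g(d) = 2^{O(d\log d)}$ bits. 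Summing and folding the polynomial factor $d$ into the exponent's constant, the total working space is $O(\log l + \log|\Loc|)\cdot 2^{c|\Counters|\log|\Counters|}$ for a suitable constant $c$ independent of $\mathcal{V}$, exactly the claimed bound.

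The main obstacle is the inductive step of the Rackoff analysis in the self-covering setting rather than for plain coverability. Plain coverability carries a single monotone target condition, whereas here one must propagate the two-phase structure (first reach $\ell_f$, then return to it at a larger valuation) through the truncation argument and guarantee that \emph{promoting} a counter to ``unbounded'' stays consistent with the order requirement $v \leq v'$ of the real, non-truncated run. Making this precise requires showing that whenever a tracked counter grows past the threshold $B$ it can be uniformly pumped — by inserting an additional non-negative loop — so that freeing it neither causes a negative value in the reconstructed concrete run nor violates $v' \geq v$ on that coordinate; verifying that this repair is always available and that it composes across the nested truncation levels without inflating the length bound beyond $g(d)$ is the delicate part of the argument.
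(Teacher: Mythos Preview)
The paper does not prove this theorem at all: it is quoted verbatim as Theorem~3.1 of \cite{Habermehl97} and used as a black box to obtain the \textsc{ExpSpace} upper bound for \RepeatedCover. There is therefore no ``paper's own proof'' to compare against.

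That said, your reconstruction is the right one and matches the argument in Habermehl's paper: the self-covering characterisation via Dickson's lemma, the Rackoff-style induction on the number of constrained counters, and the resulting doubly-exponential length bound yielding the stated nondeterministic space complexity. One remark on the obstacle you flag at the end: the worry that promoting a counter to ``unbounded'' might break the order constraint $v \leq v'$ on that coordinate is handled more simply than you suggest. Once a counter exceeds the threshold $B = l\cdot g(i)+1$, the remaining $i$-truncated witness has length at most $g(i)$, so that counter can decrease by at most $l\cdot g(i) < B$ and hence stays positive throughout; moreover, the \emph{same} prefix that pumped it past $B$ can be iterated to make its value at the first $\ell_f$-visit as large as desired, ensuring the coordinate-wise inequality at the second visit. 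No separate ``non-negative loop insertion'' is needed beyond replaying the already-found increasing segment, and this composes cleanly across truncation levels without affecting the recurrence you wrote.
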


Let $\PP= (Q, \Sigma, \qinit, T)$ be a Wait-Only protocol and $t_f=(q,!!a, q')$ the transition that has to occur infinitely often. We build a VASS $\mathcal{V}_P$ based on the construction
presented in~\cref{sec:def-vass}. The difference is that we add a new set of states $\CoherentSets_{\smiley}=\set{(\SS,\smiley)\mid \SS\in\CoherentSets}$.
We also add the transitions
$\set{(\SS,\delta,(\SS',\smiley))\mid (\SS,\delta,\SS')\in \Delta_{t_f}}\cup\set{(\SS,\smiley),\mathbf{0}, \SS)\mid \SS\in\CoherentSets}$. This means that when
a process takes the transition $t_f$, it is highlighted in $\mathcal{V}_P$ by going in a location tagged with $\smiley$, before continuing the execution. 
Taking the protocol of~\cref{fig:example-1}, when the transition $t_f=(\qinit, !!d, q_1)$ is
the transition that has to occur infinitely often, 
the previous transition from the location of the VASS $\set{(\set{q_3}, q_4, 1)}$ to the location $\set{(\set{q_3}, q_4, 1), (\set{q_1}, q_6, 1)}$ (see \cref{fig:vass-new-summary}) is
now transformed into two transitions in the VASS we build here: one from $(\set{(\set{q_3}, q_4, 1)})$ to the location $(\set{(\set{q_3}, q_4, 1), (\set{q_1}, q_6, 1)}, \smiley)$ and one from $(\set{(\set{q_3}, q_4, 1), (\set{q_1}, q_6, 1)}, \smiley)$ to $\set{(\set{q_3}, q_4, 1), (\set{q_1}, q_6, 1)}$. 
Then, there is an execution of $\PP$ with a process that takes $t_f$ infinitely often if and only if there is an execution of $\mathcal{V}_P$ where
one state in $\CoherentSets_{\smiley}$ is visited infinitely often. The procedure to decide \RepeatedCover\ is then the following: in $\mathcal{V}_P$,
guess a location $\ell_f$ in $\CoherentSets_{\smiley}$ and solve the repeated coverability problem for $\mathcal{V}$ and $\ell_f$. From~\cref{remark:vass:size,thm:habermehl97},
this can be done in $O(\log(2) + (|Q|+1)^2 \times 2^{|Q|} +2)2^{c(|Q| + 1)^3\log((|Q| +1)^3)}$ nondeterministic space. The overall procedure is then in 
$\textsc{NExpSpace} = \expspace$. Hence, the following theorem. 
\begin{theorem}
	\RepeatedCover\ is in \expspace.
\end{theorem}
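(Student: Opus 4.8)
The plan is to reduce \RepeatedCover\ on Wait-Only protocols to the repeated coverability problem on VASS of \cref{thm:habermehl97}, using the VASS $\mathcal{V}_P$ sketched above, and then control the parameters via \cref{remark:vass:size}. First I would fix the construction precisely: start from the VASS $\mathcal{V}_\PP$ of \cref{sec:def-vass} built from $\PP$ (keeping the part $(s_0,\delta_0,s_0)$, $(s_0,\mathbf 0,\emptyset)$ that seeds $\counter_{\qinit}$, and simply ignoring the $s_f$-gadget, which plays no role for an infinite behaviour), add the duplicated locations $\CoherentSets_{\smiley}=\set{(\SS,\smiley)\mid\SS\in\CoherentSets}$, and replace every transition of $\Delta_{t_f}$ by a pair: a transition $(\SS,\delta,(\SS',\smiley))$ followed by $((\SS,\smiley),\mathbf 0,\SS)$ for each $(\SS,\delta,\SS')\in\Delta_{t_f}$; all other transitions are unchanged. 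The core of the proof is then the equivalence: there are $\rho\in\Lambda_\omega$ and $e\in[1,\nbagents{\rho}]$ with $\rho_j\transup{e,t_f}\rho_{j+1}$ for infinitely many $j$ \emph{iff} $\mathcal{V}_P$ has an infinite run from $(s_0,\mathbf 0)$ visiting some location of $\CoherentSets_{\smiley}$ infinitely often. Here I would note that, since $\nbagents{\rho}$ is finite, ``some fixed process takes $t_f$ infinitely often'' is by pigeonhole equivalent to ``$t_f$ is fired infinitely often in $\rho$'', so the VASS-to-protocol direction only needs to exhibit an execution in which $t_f$ fires infinitely often.

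Both directions are obtained by lifting the soundness and completeness arguments of \cref{sec:synchro} to infinite runs. For soundness (VASS $\to$ protocol), an infinite run of $\mathcal{V}_P$ starts by reaching $(\emptyset,v_{init})$ with $v_{init}(\counter_{\qinit})=n$, and thereafter is an infinite concatenation of single steps $\lambda\abconftransup{\delta}\lambda'$; applying \cref{lemma:soundness-local-ppty} to each step and concatenating the resulting protocol segments yields an infinite execution of $\PP$ from the initial configuration with $n$ processes, and each step through $\CoherentSets_{\smiley}$ (i.e.\ each $\Delta_{t_f}$-step) contributes exactly one firing of $t_f$, so visiting $\CoherentSets_{\smiley}$ infinitely often forces $t_f$ to fire infinitely often. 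For completeness (protocol $\to$ VASS), I would first establish an infinite analogue of \cref{cor:well-formed}: as \cref{lemma:resolving-non-determinism} already applies to executions in $\Lambda\cup\Lambda_\omega$ and each use only rewrites a bounded window, one can transform $\rho$ into a well-formed infinite execution $\rho'$ that still fires $t_f$ infinitely often. Then, iterating \cref{lemma:target:completeness:local} exactly as in the proof of \cref{lemma:completeness} but forever, starting from $(\emptyset,v_0)\in\abconffrom{\rho'}{0}$, produces an infinite run of $\mathcal{V}_P$ whose successive $S$-configurations lie in $\abconffrom{\rho'}{i}$ and which passes through $\CoherentSets_{\smiley}$ each time $t_f$ is fired in $\rho'$, hence infinitely often.

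Given the equivalence, the decision procedure is: guess a target location $\ell_f\in\CoherentSets_{\smiley}$ of $\mathcal{V}_P$ and invoke \cref{thm:habermehl97} on $(\mathcal{V}_P,\ell_f)$. By \cref{remark:vass:size}, adding $\CoherentSets_{\smiley}$ only doubles the number of locations, so $|\Loc|$ is doubly exponential in $|Q|$, $|\Counters|$ is polynomial in $|Q|$, and every transition-vector entry is bounded by a constant; plugging $\log|\Loc|=2^{O(|Q|)}$ and $|\Counters|=O(|Q|^2)$ into the bound $O(\log l+\log|\Loc|)\,2^{c|\Counters|\log|\Counters|}$ yields a procedure running in $2^{\mathrm{poly}(|Q|)}$ nondeterministic space. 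Since $\textsc{NExpSpace}=\expspace$ by Savitch's theorem, \RepeatedCover\ is in \expspace.

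I expect the main obstacle to be the faithful transfer of the soundness/completeness correspondence to \emph{infinite} executions. Concretely: obtaining a well-formed infinite execution from an arbitrary one (the finite argument iterates \cref{lemma:resolving-non-determinism}, and one must re-examine that this process can be organised, e.g.\ left to right, so that it converges on infinite inputs without destroying already-fixed prefixes), and checking that the representatives $\abconffrom{\rho'}{i}$, which depend on the entire future of every process, stay well defined and compatible with the transitions of $\mathcal{V}_P$ at every index of the infinite run. The rest (the complexity bookkeeping and the $\smiley$-tagging) is routine once these points are settled.
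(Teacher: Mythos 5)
Your proposal follows essentially the same route as the paper: the same $\smiley$-tagged duplication of $\CoherentSets$ on top of the VASS of \cref{sec:def-vass}, the same equivalence between firing $t_f$ infinitely often and visiting $\CoherentSets_{\smiley}$ infinitely often, and the same appeal to \cref{thm:habermehl97} together with \cref{remark:vass:size} and $\textsc{NExpSpace}=\expspace$. The extra details you supply (the pigeonhole remark and the lifting of \cref{lemma:soundness-local-ppty}, \cref{cor:well-formed} and \cref{lemma:target:completeness:local} to infinite executions) are precisely the points the paper leaves implicit, so there is no divergence in approach.
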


\begin{remark}
One could define \RepeatedCover\ with a reception transition that has to occur infinitely often. The VASS we described hereabove can be adapted by enlarging each location
with the current state of the witness process. This can also be done in \expspace. 
\end{remark}

 \subparagraph*{Lower bound.}
 \label{sec:pspace-hard}
 
  We reduce the intersection non-emptiness problem for deterministic finite automata, which is known
 to be \pspace-complete \cite{Kozen77}, to \RepeatedCover, hence obtaining the following result.

 
 \begin{theorem}
	\RepeatedCover\ is \pspace -hard.
\end{theorem}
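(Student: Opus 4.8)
The plan is to reduce the intersection non-emptiness problem for deterministic finite automata (DFA) to \RepeatedCover. Given DFAs $\mathcal{A}_1, \dots, \mathcal{A}_m$ over a common alphabet, each with an initial state and a set of final states, the question is whether there exists a word $w$ accepted by all of them simultaneously. This problem is \pspace-complete by \cite{Kozen77}. We will construct, in polynomial time, a Wait-Only protocol $\PP$ and a broadcast transition $t_f$ such that $t_f$ can occur infinitely often along some infinite execution of the network if and only if $\bigcap_i L(\mathcal{A}_i) \neq \emptyset$.

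The key idea is to designate a single \emph{controller} process that guesses the letters of a common word one at a time by broadcasting them, and to have the remaining processes act as copies of the automata that must all follow their transition functions in lockstep upon receiving these broadcast letters. Concretely, the controller sits in an action state from which it can broadcast any letter $\sigma$ of the alphabet; for each automaton $\mathcal{A}_i$ we dedicate states $q^{i}_{s}$ for each state $s$ of $\mathcal{A}_i$, all of which are waiting states: from $q^i_s$, receiving letter $\sigma$ moves the process to $q^i_{\delta_i(s,\sigma)}$. To handle the ``simultaneity'' requirement we let the controller, after broadcasting a sequence of letters it believes spells an accepted word, broadcast a special \emph{check} message; the automaton-processes are designed so that a process representing $\mathcal{A}_i$ can receive \emph{check} only if it is currently in a final state of $\mathcal{A}_i$, otherwise it is stuck in a waiting state (has no $?\textsf{check}$ transition), which—since waiting states cannot broadcast—blocks nothing but, crucially, means the controller's next broadcasts are unheard by a non-accepting copy. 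The cleanest realization: route a process that successfully receives \emph{check} in an accepting state back to the initial automaton state, and make $t_f$ the controller's \emph{check}-broadcast (or a reset-broadcast immediately after); for $t_f$ to fire infinitely often the controller must repeatedly produce a word accepted by \emph{every} automaton copy that is present. One must additionally ensure that at least one copy of each automaton is present and remains synchronized; this can be arranged with an initialization phase in which the controller broadcasts a \textsf{start} message that spawns exactly one copy per automaton from the common initial state, mirroring the leader construction of \cref{fig:prot-from-vass}.

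The steps, in order, are: (i) fix the DFA instance and describe the state set of $\PP$ as the disjoint union of controller states, automaton-copy states (one per $(\,i, s\,)$ pair), a \textsf{start}/initialization gadget, and sink states, verifying the Wait-Only property (controller and sink states are action states; all automaton-copy states are waiting states receiving exactly the alphabet letters plus possibly \emph{check}); (ii) define $t_f$ as the broadcast that closes one simulation round; (iii) prove the forward direction: from a common accepted word $w$, build an infinite execution in which the controller cyclically broadcasts $w$ followed by \emph{check}, all automaton copies track $w$ deterministically and are in final states at the check, so the round closes and repeats—hence $t_f$ fires infinitely often; (iv) prove the backward direction: in any infinite execution firing $t_f$ infinitely often, isolate one controller process (argue uniqueness/persistence as in the \Target\ lower bound), extract from two consecutive firings the finite sequence of letters it broadcast, observe that the surviving automaton copies must have been ready to receive \emph{check}, hence in final states, hence the letter sequence is a word in every $L(\mathcal{A}_i)$, giving non-emptiness of the intersection.

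The main obstacle I expect is the backward direction's bookkeeping: guaranteeing that a copy of \emph{each} automaton is genuinely present and stays in lockstep with the controller throughout a round, and that a mischievous extra controller or a desynchronized copy cannot spuriously enable $t_f$ to fire forever. This is exactly the kind of ``cheating prevention'' argument already used in the sketch for \cref{th:synchro-ack-complete}—broadcasting \textsf{start} a second time sends the original controller to an error sink, and automaton-copy states deliberately lack $?\sigma$ transitions for letters that would take them ``off protocol''—so I would reuse that machinery. The other minor subtlety is that a Wait-Only automaton-copy process, being in a waiting state, cannot itself signal failure; so non-acceptance must be encoded as the \emph{absence} of a $?\textsf{check}$ transition, and one must check this does not inadvertently let the controller loop without that copy ever mattering. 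Resolving this cleanly—e.g.\ by having \emph{check} be received by \emph{all} copies but routing non-final ones to a permanent sink that then, via a subsequent mandatory broadcast by the controller, would be forced yet unable to participate—pins down the reduction; the rest is routine.
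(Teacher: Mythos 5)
Your choice of reduction (DFA intersection non-emptiness, as in \cite{Kozen77}) is the same as the paper's, but the realization has a genuine gap in the backward direction. You make the same process both broadcast the letters and take $t_f$ (the \textsf{check}/reset broadcast), and you encode non-acceptance only as the \emph{absence} of a $?\textsf{check}$ transition in a non-final automaton copy. In the broadcast semantics a sender is never blocked by the absence of receivers: a process that cannot receive \textsf{check} simply stays put, and the controller's broadcast goes through anyway. Hence a lone controller (or one whose automaton copies are absent, desynchronized, or parked in a sink) can fire $t_f$ forever even when the intersection is empty, so ``$t_f$ fires infinitely often $\Rightarrow$ common accepted word'' fails. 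Your proposed patch --- sending non-final copies to a permanent sink so that a subsequent mandatory broadcast finds them ``forced yet unable to participate'' --- does not repair this, because nothing in this model is ever forced to participate in a broadcast; losing the copies never prevents the controller from looping.

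The paper fixes exactly this point by reversing the direction of the acknowledgments and splitting the roles. The letters are broadcast by a process that stays in $\qinit$, while the process taking $t_f$ sits in a separate gadget $q_1,\dots,q_{n+2}$ and, between two firings of $t_f$, must \emph{receive} the sequence $\$\cdot\text{end}_1\cdots\text{end}_n$. Each $\text{end}_i$ can only be broadcast by a copy of $\mathcal{A}_i$ that received $\$$ while in its accepting state, so every round of $t_f$ requires a per-automaton acknowledgment that certifies acceptance of the word broadcast in that round; any off-protocol reception (a stray letter, a second $\#$ or $i$, an $\text{end}_j$ out of order) routes the affected process to the sink $\frownie$, and since $\frownie$ is absorbing such cheating can occur only finitely often in an infinite execution, after which the rounds are faithful. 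Your construction needs this ingredient --- the $t_f$-witness must be made to depend on receptions that only correctly accepting copies can emit --- before the backward direction goes through; the forward direction and the \pspace-hardness framing are otherwise fine.
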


 Let $\mathcal{A}_1, \dots, \mathcal{A}_n$ be a list of deterministic finite and \emph{complete} automata with $\mathcal{A}_i = (A, Q_i, q_{i}^0, \set{q_i^f},\linebreak[1] \Delta_i)$ for all $1 \leq i \leq n$. The problem we reduce takes as input automata with a unique accepting state, which does not change the complexity of the 
 problem. 
 We let $A^\ast$ be the set of words over the finite alphabet $A$ and $\Delta^\ast_i$ the function extending $\Delta_i$ to $A^\ast$: for all $q \in Q_i$, $\Delta_i^\ast(q, \varepsilon) = q$, and for all $w \in A^\ast$ and $a \in A$, $\Delta^\ast_i(q, wa) = \Delta_i(\Delta^\ast_i(q,w), a)$.

\begin{figure}
	\begin{minipage}[c]{0.50\linewidth}
		\tikzset{box/.style={draw, minimum width=4em, text width=4.5em, text centered, minimum height=17em}}

\begin{tikzpicture}[->, >=stealth', shorten >=1pt,node distance=2.0cm,on grid,auto, initial text = {}] 
	\node[state] (q1) [above right = of q0, yshift = -10 ] {$q_1$};
	\node[state] (q2) [right = of q1, xshift =-10] {$q_2$};
	\node[state] (q3) [right = of q2] {$q_3$};
	\node[state] (qn) [right = of q3, xshift = 10] {$q_{n+1}$};
	\node[state] (frb) [above = of q3, yshift =25] {$\frownie$};
	\node[state] (qnp) [below = of q3, yshift = 35, xshift = -25] {$q_{n+2}$};

	\path[->] 
	(q1) edge [bend left = 20] node[sloped] {\shortstack{$?\#, ?\text{end}_i,$ \\$ 1 \leq i \leq n$}} (frb)
	(q1) edge  node {$?\$$} (q2)
	(q2) edge  node {$?\text{end}_1$} (q3)
	(qn) edge [bend left = 15] node {$?\text{end}_n$} (qnp)
	(qnp) edge [bend left = 15] node {$!!\smiley$} (q1)
	(q2) edge node [sloped]{\shortstack{$?m$, $ m\in \Sigma$}} (frb)
	(q3) edge node [sloped] {\shortstack{$?m$, $ m \in \Sigma$}} (frb)
	(qn) edge [bend right = 20]node [sloped] {\shortstack{$?m$, $ m\in \Sigma$}} (frb)
	;

	\path (q3) -- node[auto=false]{\ldots} (qn);
\end{tikzpicture}
		\caption{The part of $\PP$ with transition $t_f = (q_{n+2}, !!\smiley, q_1)$. A process enters this protocol through transition $(\qinit, !!\#, q_1)$.}\label{fig:pspace-hard-rec}
	\end{minipage} \hfill
	\begin{minipage}[c]{0.40\linewidth}
		\tikzset{box/.style={draw, minimum width=4em, text width=4.5em, text centered, minimum height=17em}}

\begin{tikzpicture}[->, >=stealth', shorten >=1pt,node distance=2.0cm,on grid,auto, initial text = {}] 
	\node[state] (qi0) [] {$q_i^0$};
	\node[state] (qif) [right = of qi0, xshift =-10] {$q_i^f$};
	\node[state] (qid) [right = of qif, xshift =-10] {$q_i^{\$}$};
	\node[state] (qie) [right = of qid] {$q_i^{\text{e}}$};
	\node[state] (fr) [above = of qid, yshift = -10, xshift = -20] {$\frownie$};
	
	\path[->] 
	(qif) edge node {$?\$$} (qid)
	(qid) edge [bend left =0] node {$!!\text{end}_i$} (qie)
	(qie.south) edge [bend left =15] node {$?\smiley$} (qi0.south)
	(qie) edge [bend right =20] node [sloped] {\shortstack{$?m, m \nin \{\text{end}_{i+1},  $\\
		$ \dots, \text{end}_n\}$}} (fr)
	
	;

	\node[draw, fill = cyan, fill opacity = 0.2, text opacity = 1, fit=(qi0) (qif), text height=0.05 \columnwidth, inner sep = 0.5] (Ai) {$\PP_i$};
	
	\path[->] 
	(Ai.north) edge [bend left = 15] node [sloped]{\shortstack{$?\$, ?i$\\
			$ ?\text{end}_{1, \dots, n}$}} (fr);

\end{tikzpicture}
		\caption{The part of $\PP$ simulating automaton $i$. A process enters this protocol through transition $(\qinit, !!i, q_i^0)$.}\label{fig:pspace-hard-auto}
	\end{minipage}
\end{figure}

We define a protocol $\PP = (Q, \Sigma, \qinit, T)$ that consists of several components.
For each automaton $\mathcal{A}_i$, $\PP$ contains one component pictured on 
\cref{fig:pspace-hard-auto}. Here,  the blue box $\PP_i = (Q^i, T^i)$ is defined as follows: $Q^i$ is the set of states of automaton $i$ (i.e. $Q_i$) and $T^i = \set{(q_1^i, ?a, q_2^i) \mid (q_1^i, a , q_2^i) \in \Delta_i}$. The other main component, (\cref{fig:pspace-hard-rec}), includes the transition $t_f = (q_{n+2}, !!\smiley, q_1)$, that must be taken infinitely often. A process taking $t_f$ infinitely often also receives infinitely often the sequence of messages 
$\$ \cdot \text{end}_1 \cdot  \text{end}_2 \dots  \text{end}_n$. Each message $\text{end}_i$ is broadcast from the component corresponding to $\mathcal{A}_i$
(\cref{fig:pspace-hard-auto}).
In addition to transitions described in \cref{fig:pspace-hard-rec,fig:pspace-hard-auto}, $T$ contains the following set of transitions: $\set{(\qinit, !!\#, q_1)} \cup \set{(\qinit, !!i, q_i^0 )\mid 1 \leq i \leq n} \cup \set{(\qinit, !!a, \qinit) \mid a \in A} \cup \set{(\qinit, !!\$, \qinit)}$.
%


If there is a word $w=a_1\dots a_n$ in the intersection of the languages of the $n$ automata, the execution where, from the initial configuration, we reach
a configuration with one process in $q_1$, one process in $\qinit$, and 
one process in each of the $q^0_i$ will go as follows: the process in $\qinit$ broadcasts the sequence of messages $a_1 \cdots a_n$, received by all the processes
in the parts simulating the automata. Since the word is accepted by all the automata, they all reach $q_i^f$ together. Then, the process in $q_1$ broadcasts
$\$$, receives the sequence of $\text{end}_1\dots\text{end}_n$, and broadcasts $\smiley$, that gathers all the processes that are $q_i^\text{e}$ in $q_i^0$.
This can be repeated an infinite number of times, and $t_f$ will occur infinitely often.

Conversely, assume that the transition $t_f$ occurs infinitely often. At the beginning of the execution, 
it is possible that the automata are not faithfully simulated, because several processes populate the blue zone of some $P_i$. A process can then reach the state
$q_i^f$ with a different word than the others. However, this can happen only a finite number of times in an infinite execution, because every time a new process
reaches the blue zone, the processes that already populate it are sent to the state $\frownie$,  which is a sink state. Also, only one process at a time populates
the zone pictured in~\cref{fig:pspace-hard-rec}. So in an execution where $t_f$ occurs
infinitely often, there exists a process $e_0$ that takes it infinitely often, and eventually no process reaches $\frownie$. Once we have reached this point, when
 $\aproc_0$ takes transition $t_f$, there is exactly one process in $q_i^0$ for all $1 \leq i \leq n$. When $\$$ is broadcast, all the processes have received
 the same word, and they are all in a final state of their respective automaton (unless they go to $\frownie$, contradicting our assumption). So the word is in the intersection of languages. 

\bibliography{bib}

\begin{thebibliography}{10}

\bibitem{apt86limits}
K.~R. Apt and D.~C. Kozen.
\newblock Limits for automatic verification of finite-state concurrent systems.
\newblock {\em Inf. Process. Lett.}, 22(6):307--309, 1986.

\bibitem{ChiniMS22}
Peter Chini, Roland Meyer, and Prakash Saivasan.
\newblock Liveness in broadcast networks.
\newblock {\em Computing}, 104(10):2203--2223, 2022.
\newblock URL: \url{https://doi.org/10.1007/s00607-021-00986-y}, \href
  {https://doi.org/10.1007/S00607-021-00986-Y}
  {\path{doi:10.1007/S00607-021-00986-Y}}.

\bibitem{CzerwinskiO21}
Wojciech Czerwinski and Lukasz Orlikowski.
\newblock Reachability in vector addition systems is ackermann-complete.
\newblock In {\em 62nd {IEEE} Annual Symposium on Foundations of Computer
  Science, {FOCS} 2021, Denver, CO, USA, February 7-10, 2022}, pages
  1229--1240. {IEEE}, 2021.
\newblock \href {https://doi.org/10.1109/FOCS52979.2021.00120}
  {\path{doi:10.1109/FOCS52979.2021.00120}}.

\bibitem{DelzannoRB02}
Giorgio Delzanno, Jean{-}Fran{\c{c}}ois Raskin, and Laurent~Van Begin.
\newblock Towards the automated verification of multithreaded java programs.
\newblock In {\em {TACAS} 2002}, volume 2280 of {\em LNCS}, pages 173--187.
  Springer, 2002.
\newblock \href {https://doi.org/10.1007/3-540-46002-0\_13}
  {\path{doi:10.1007/3-540-46002-0\_13}}.

\bibitem{DelzannoSZ10}
Giorgio Delzanno, Arnaud Sangnier, and Gianluigi Zavattaro.
\newblock Parameterized verification of ad hoc networks.
\newblock In Paul Gastin and Fran{\c{c}}ois Laroussinie, editors, {\em CONCUR
  2010 - Concurrency Theory}, pages 313--327, Berlin, Heidelberg, 2010.
  Springer Berlin Heidelberg.

\bibitem{emerson98model-checking}
E.A. Emerson and K.S. Namjoshi.
\newblock On model checking for non-deterministic infinite-state systems.
\newblock In {\em Proceedings. Thirteenth Annual IEEE Symposium on Logic in
  Computer Science (LICS)}, pages 70--80, 1998.
\newblock \href {https://doi.org/10.1109/LICS.1998.705644}
  {\path{doi:10.1109/LICS.1998.705644}}.

\bibitem{esparza-param-cav13}
J.~Esparza, P.~Ganty, and R.~Majumdar.
\newblock Parameterized verification of asynchronous shared-memory systems.
\newblock In {\em {CAV}'13}, volume 8044 of {\em LNCS}, pages 124--140.
  Springer-Verlag, 2013.

\bibitem{EsparzaFM99}
Javier Esparza, Alain Finkel, and Richard Mayr.
\newblock On the verification of broadcast protocols.
\newblock In {\em 14th Annual {IEEE} Symposium on Logic in Computer Science,
  Trento, Italy, July 2-5, 1999}, pages 352--359. {IEEE} Computer Society,
  1999.
\newblock \href {https://doi.org/10.1109/LICS.1999.782630}
  {\path{doi:10.1109/LICS.1999.782630}}.

\bibitem{german92}
S.~M. German and A.~P. Sistla.
\newblock Reasoning about systems with many processes.
\newblock {\em Journal of the {ACM}}, 39(3):675--735, 1992.

\bibitem{guillou-safety-concur23}
Lucie Guillou, Arnaud Sangnier, and Nathalie Sznajder.
\newblock Safety analysis of parameterised networks with non-blocking
  rendez-vous.
\newblock In {\em {CONCUR}'23}, volume 279 of {\em LIPIcs}, pages 7:1--7:17.
  loss Dagstuhl - Leibniz-Zentrum f{\"{u}}r Informatik, 2023.

\bibitem{guillou-safety-petrinets24}
Lucie Guillou, Arnaud Sangnier, and Nathalie Sznajder.
\newblock Safety verification of wait-only non-blocking broadcast protocols.
\newblock In {\em Application and Theory of Petri Nets and Concurrency - 45th
  International Conference, {PETRI} {NETS} 2024}, volume 14628 of {\em Lecture
  Notes in Computer Science}, pages 291--311. Springer, 2024.
\newblock \href {https://doi.org/10.1007/978-3-031-61433-0\_14}
  {\path{doi:10.1007/978-3-031-61433-0\_14}}.

\bibitem{Habermehl97}
Peter Habermehl.
\newblock On the complexity of the linear-time mu -calculus for petri-nets.
\newblock In Pierre Az{\'{e}}ma and Gianfranco Balbo, editors, {\em Application
  and Theory of Petri Nets 1997, 18th International Conference, {ICATPN} '97,
  Toulouse, France, June 23-27, 1997, Proceedings}, volume 1248 of {\em Lecture
  Notes in Computer Science}, pages 102--116. Springer, 1997.
\newblock \href {https://doi.org/10.1007/3-540-63139-9\_32}
  {\path{doi:10.1007/3-540-63139-9\_32}}.

\bibitem{HopcroftP79}
John~E. Hopcroft and Jean{-}Jacques Pansiot.
\newblock On the reachability problem for 5-dimensional vector addition
  systems.
\newblock {\em Theor. Comput. Sci.}, 8:135--159, 1979.
\newblock \href {https://doi.org/10.1016/0304-3975(79)90041-0}
  {\path{doi:10.1016/0304-3975(79)90041-0}}.

\bibitem{Kozen77}
Dexter Kozen.
\newblock Lower bounds for natural proof systems.
\newblock In {\em 18th Annual Symposium on Foundations of Computer Science,
  Providence, Rhode Island, USA, 31 October - 1 November 1977}, pages 254--266.
  {IEEE} Computer Society, 1977.
\newblock \href {https://doi.org/10.1109/SFCS.1977.16}
  {\path{doi:10.1109/SFCS.1977.16}}.

\bibitem{Leroux11}
J{\'{e}}r{\^{o}}me Leroux.
\newblock Vector addition system reversible reachability problem.
\newblock In Joost{-}Pieter Katoen and Barbara K{\"{o}}nig, editors, {\em
  {CONCUR} 2011 - Concurrency Theory - 22nd International Conference, {CONCUR}
  2011, Aachen, Germany, September 6-9, 2011. Proceedings}, volume 6901 of {\em
  Lecture Notes in Computer Science}, pages 327--341. Springer, 2011.
\newblock \href {https://doi.org/10.1007/978-3-642-23217-6\_22}
  {\path{doi:10.1007/978-3-642-23217-6\_22}}.

\bibitem{Leroux21}
J{\'{e}}r{\^{o}}me Leroux.
\newblock The reachability problem for petri nets is not primitive recursive.
\newblock In {\em 62nd {IEEE} Annual Symposium on Foundations of Computer
  Science, {FOCS} 2021, Denver, CO, USA, February 7-10, 2022}, pages
  1241--1252. {IEEE}, 2021.
\newblock \href {https://doi.org/10.1109/FOCS52979.2021.00121}
  {\path{doi:10.1109/FOCS52979.2021.00121}}.

\bibitem{LerouxS19}
J{\'{e}}r{\^{o}}me Leroux and Sylvain Schmitz.
\newblock Reachability in vector addition systems is primitive-recursive in
  fixed dimension.
\newblock In {\em 34th Annual {ACM/IEEE} Symposium on Logic in Computer
  Science, {LICS} 2019, Vancouver, BC, Canada, June 24-27, 2019}, pages 1--13.
  {IEEE}, 2019.
\newblock \href {https://doi.org/10.1109/LICS.2019.8785796}
  {\path{doi:10.1109/LICS.2019.8785796}}.

\bibitem{lipton76}
R.J. Lipton.
\newblock {\em The reachability problem requires exponential space}.
\newblock Research report (Yale University. Department of Computer Science).
  Department of Computer Science, Yale University, 1976.
\newblock URL: \url{https://books.google.fr/books?id=7iSbGwAACAAJ}.

\bibitem{schmitz-power-concur13}
S.~Schmitz and P.~Schnoebelen.
\newblock The power of well-structured systems.
\newblock In {\em {CONCUR}'13}, volume 8052 of {\em Lecture Notes in Computer
  Science}, pages 5--24. Springer, 2013.

\bibitem{valk-self-icalp78}
R{\"{u}}diger Valk.
\newblock Self-modifying nets, a natural extension of petri nets.
\newblock In {\em {ICALP}'78}, volume~62 of {\em LNCS}, pages 464--476.
  Springer, 1978.

\end{thebibliography}

\newpage
\appendix
\section{\Target~is undecidable}
\label{app-sec0}

In order to prove undecidability of \Target, we reduce the halting problem of a 2-counter machine. 

\paragraph*{2-counter machines}
A 2-counter machine is a tuple $M =(S, s_{in}, s_f, \counter_1, \counter_2, \Delta)$ such that $S$ is a finite set of states, $s_{in} \in S$ is the initial state, $s_f \in S$ is the final state, $\counter_1$ and $\counter_2$ are two counters that take their values in $\nat$, and $\Delta \subseteq S \times \textrm{Op} \times S$ is a finite set of transitions, with $\textrm{Op}=\set{\inc{\counter}, \dec{\counter}, \counter =0 \mid \counter \in \set{\counter_1, \counter_2}}$. A configuration of the machine is a tuple $(s, x_1, x_2)$ with $s \in S$, and $x_1 \in \nat$ (resp. $x_2 \in \nat$) is the value of counter $\counter_1$ (resp. $\counter_2$).

Consider a transition $\delta = (s, \text{op}, s') \in \Delta$, then for $x_1, x_2 \in \nat$, we have:
\begin{itemize}
	\item $(s, x_1, x_2) \abconftransup{\delta} (s', x_1 +1 , x_2)$ iff $\text{op} = \inc{\counter_1}$;
	\item $(s, x_1, x_2) \abconftransup{\delta} (s', x_1 , x_2 + 1)$ iff $\text{op} = \inc{\counter_2}$;
	\item $(s, x_1, x_2) \abconftransup{\delta} (s', x_1 -1 , x_2)$ iff $x_1 > 0$ and $\text{op} = \dec{\counter_1}$;
	\item $(s, x_1, x_2) \abconftransup{\delta} (s', x_1  , x_2 - 1)$ iff $x_2 > 0$ and $\text{op} = \dec{\counter_2}$;
	\item $(s, x_1, x_2) \abconftransup{\delta} (s', x_1  , x_2)$ iff $x_1 = 0$ and $\text{op} = (\counter_1 =0 )$;
	\item $(s, x_1, x_2) \abconftransup{\delta} (s', x_1  , x_2)$ iff $x_2 = 0$ and $\text{op} = (\counter_2 =0 )$;
\end{itemize}
We let $\Delta_{\textrm{test}}=\set{(s,\textrm{op}, s') \in \Delta\mid\textrm{op}\in \set{\counter_1=0, \counter_2=0}}\subseteq \Delta$. 
Given two configurations $(s, x_1, x_2)$ and $(s', x'_1, x'_2)$, we write $(s, x_1, x_2) \abconftrans (s', x_1', x_2')$ whenever there exists a transition $\delta \in \Delta$ such that $(s, x_1, x_2) \abconftransup{\delta} (s', x_1', x_2')$.
Finally we denote $\abconftrans^\ast$ the reflexive and transitive closure of $\abconftrans$. We assume that $\Delta$ is deterministic, and that $s_f$ has no outgoing 
transition. 
The halting problem asks whether $(s_{in}, 0, 0) \abconftrans^\ast (s_f, 0, 0)$ and it is undecidable.

\paragraph*{Simulation.} 
Given a machine $M =(S, s_{in}, s_f, \counter_1, \counter_2, \Delta)$, we define a protocol $\PP$ depicted in \cref{fig:prot-from-machine}, where $\PP_M$ is defined as follows: $\PP_M = (S \cup \set{\frownie}, T_M)$ with $T_M = \set{(s, ?\text{op}, s') \mid (s, \text{op}, s') \in \Delta\setminus\Delta_\textrm{test}} \cup
\set{(s, !!\text{\counter=0}, s') \mid (s, \text{\counter=0}, s') \in \Delta_{\textrm{test}}}
\cup \set{(s, ? \text{op}, \frownie) \mid s\in S, \text{op} \in \set{\inc{\counter}, \dec{\counter}\mid \counter \in \set{\counter_1, \counter_2}} }$.

\begin{figure}
	\begin{center}
		\tikzset{box/.style={draw, minimum width=4em, text width=4.5em, text centered, minimum height=17em}}

\begin{tikzpicture}[->, >=stealth', shorten >=1pt,node distance=2.5cm,on grid,auto, initial text = {}] 
	\node[state, initial] (q0) {$\qinit$};
	\node[state] (0) [below = of q0, yshift = 40, xshift = 40 ] {$0$};

	\node[state] (s0) [ above = of q0, xshift = 40, yshift =-40] {$s_{in}$};

	\node[state] (x1) [below = of 0, xshift = 80, yshift = 50] {$x_1$};
	\node[state] (x2) [below = of 0, xshift = -80, yshift = 50] {$x_2$};
	
	\node[state,accepting] (sf) [right  = of s0, xshift = 20] {$s_f$};
	
	\node [state] (err) [below = of 0, yshift = 10] {\frownie};

	
	\path[->] 
	(q0) edge [bend left = 10] node {$!!\ell$} (s0)
	edge [bend right = 10] node {$?\ell$} (0)
	(0) edge [bend left = 10] node[xshift = 0] {$!!\inc{\counter_1}$} (x1)
	edge [bend left = 10] node [xshift = -10] {$!!\inc{\counter_2}$} (x2)
	(x1) edge [bend left = 10] node [xshift = 10]{$!!\dec{\counter_1}$} (0)
	(x2)edge [bend left = 10] node  {$!!\dec{\counter_2}$} (0)
	(0) edge [bend right = 10]node [below]{$?d$} (sf)
	(sf) edge [loop above] node {$!!d$} ()
	(x1) edge node {$?\counter_1=0, ?d$} (err)
	(x2) edge node [below, xshift = -10]{$? \counter_2 =0, ?d $} (err)
	;
	
	
	\node[draw, fill = cyan, fill opacity = 0.2, text opacity = 1, fit=(s0) (sf), text height=0.11 \columnwidth, inner sep = 1] (PV) {$\PP_{{M}}$};
	
	
%
\end{tikzpicture}
	\end{center}
	\caption{Protocol $\PP$ associated to a machine $M =(S, s_{in}, s_f, \counter_1, \counter_2, \Delta)$.}\label{fig:prot-from-machine}
\end{figure}

If $(s_{in}, 0, 0) \abconftrans^\ast (s_f, 0, 0)$, there is an execution of the protocol that gathers all the processes in $s_f$. The number of processes needed
is $1+\max(x_1+x_2)$ where $\max(x_1+x_2)$ denotes the maximum value of the sum of the counters during the execution of $M$. Initially, a single process (the leader),
broadcasts $\ell$, and goes to $s_{in}$, while all the others go to the state $0$. From that point onward, the configurations of the 2-counter machines are encoded in a the protocol's configurations as follows: 
the state of the leader process determines the location of $M$, and the number of processes in state $x_1$ (resp. $x_2$) represents the value of counter 
$\counter_1$ (resp. $\counter_2$).

From a location $s$, if the transition taken in the execution of $M$ belongs to $\Delta_{\textrm{test}}$, the leader process broadcasts the corresponding message 
$!!\counter_i=0$. Since the counter $\counter_i$ is zero at that moment, no process is in state $x_i$, allowing the simulation to continue. If the transition involves an increment (respectively, a decrement) of counter $\counter_i$, then, by our choice of the number of processes, there are some processes in state 0 (respectively, in state $x_i$). One such process broadcasts the message $!!\counter_i++$ (respectively, $!!\counter_i--$), which causes the leader to go to the next location. The protocol configuration remains equivalent to the configuration of $M$. Since $(s_f,0,0)$ is reached in $M$, this simulation ensures that the leader process eventually reaches state $s_f$, while all other processes remain in state 0. At that moment, the leader broadcasts $!!d$, bringing all processes to $s_f$.

Conversely, if there exists an execution of the protocol that gathers all processes in $s_f$, then at some point, one process must have broadcast $!!\ell$, after which all other processes must have reached state 0. From this point, if any process deviates from the intended behavior—such as a counter sending an incorrect operation to the leader or the leader sending a zero-test message when the corresponding counter is not empty—an erroneous message is received by some processes, leading them to the sink state $\frownie$, from which it is impossible to gather all processes in $s_f$. When the leader process eventually reaches $s_f$ (which must happen since the execution gathers all processes in $s_f$), the only way to ensure that all other processes join is by broadcasting $!!d$. If any process is not in state 0 at that moment, it will be sent to the sink state $\frownie$. Consequently, the corresponding execution in the 2-counter machine must be correct and must reach $(s_f,0,0)$.

\section{Proof of \cref{lemma:resolving-non-determinism}}
\begin{proof}
	Let $\rho \in \Lambda \cup \Lambda_\omega$ be an execution of $\PP$ such that there exist $\aproc_1, \aproc_2 \in [1, \nbagents{\rho}]$ and $0 \leq j_0 < |\rho|$ with $\rho_{j_0}(\aproc _1) = \rho_{j_0}(\aproc_2) \in Q_W$ and $\nextaction{\rho, j_0 ,\aproc_1} = (q,j_1)$ and $\nextaction{\rho, j_0, \aproc_2} = (q,j_2)$ with $j_1 \leq j_2 < |\rho|$.
	
	We define the sequence of configurations $\rho' = \rho'_{j_0+1} \cdots \rho'_{j_2}$ as follows: 
	\begin{itemize}
		\item for all $j_0 < k \leq j_1$, $\rho'_k(\aproc_2) = \rho_k(\aproc_1)$ and for all $\aproc \neq \aproc_2$, $\rho'_k(\aproc) = \rho_k(\aproc)$;
		\item for all $j_1 < k \leq j_2$, $\rho'_k(\aproc_2) = q$ and for all $\aproc \neq \aproc_2$, $\rho'_k(\aproc) = \rho_k(\aproc)$.
	\end{itemize}
	Observe that $\rho_{j_2}(\aproc_2) = q$ as $\nextaction{\rho, j_0, \aproc_2} = (q,j_2)$. Hence, $\rho'_{j_2} = \rho_{j_2}$.
	Hence, it remains to prove that $\rho_{j_0} \trans \rho'_{j_0 +1} \trans \cdots \trans \rho'_{j_2}$ in order to prove that $\rho_{\leq  j_0} \cdot \rho'\cdot \rho_{\geq j_2+1}$ is an execution.
	
	We do so by induction: we let $\rho'_{j_0}=\rho_{j_0}$ and we prove by induction on $j_0 \leq k \leq j_2$ that $\rho'_{j_0} \trans \rho'_{j_0 +1} \trans \cdots \trans \rho'_k$.
	
	For $k = j_0$, there is nothing to do.
	Let $j_0 \leq k < j_2$ and assume that $\rho'_{j_0} \trans \rho'_{j_0 +1} \trans \cdots \trans \rho'_k$. Let us prove that $\rho'_{k} \trans \rho'_{k+1}$. 
	To do so, let $\rho_k \transup{\aproc_0, t} \rho_{k+1}$ with $t = (q, !!a, q')$.
	We distinguish between three cases:
	\begin{itemize}
		\item Case $k < j_1$. First observe that $k < j_1 = \nextactionindex{\rho, j_0, \aproc_1} \leq \nextactionindex{\rho, j_0, \aproc_2} = j_2$, hence $\rho_k(\aproc_1) \in Q_W$, $\rho_k(\aproc_2) \in Q_W$ and so $\aproc_0 \neq \aproc_1$ and $\aproc_0 \neq \aproc_2$. 
		
		Observe that we are in the case where $\rho'_k(\aproc_2) = \rho_k(\aproc_1)$ and $\rho'_{k+1}(\aproc_2) = \rho_{k+1}(\aproc_1)$. As $\rho_k \trans \rho_{k+1}$, either
		$(\rho'_k(\aproc_2), ?a, \rho'_{k+1}(\aproc_2)) = (\rho_k(\aproc_1), ?a, \rho_{k+1}(\aproc_1)) \in T$ or $a \nin R(\rho_k(\aproc_1)) = R(\rho'_k(\aproc_2))$ and $\rho'_{k+1}(\aproc_2) = \rho_{k+1}(\aproc_1) = \rho_k(\aproc_1) = \rho'_k(\aproc_2)$. Hence, as for all other process $\aproc$, $\rho'_k(\aproc) = \rho_k(\aproc)$ and $\rho'_{k+1}(\aproc) = \rho_{k+1}(\aproc)$, it holds that $\rho'_k\transup{\aproc_0, t} \rho'_{k+1}$.
		
		\item Case $k > j_1$. First observe that $k < j_2 =  \nextactionindex{\rho, j_0, \aproc_2}$, hence $\rho_k(\aproc_2) \in Q_W$, and so $\aproc_0 \neq \aproc_2$.
		
		Observe that we are in the case where $\rho'_k(\aproc_2) = \rho'_{k+1}(\aproc_2) = q$. Recall that $q \in Q_A$. Hence, $a \nin R(q) = \emptyset$.
		
		As $\rho'_k(\aproc) = \rho_k(\aproc)$ and $\rho'_{k+1}(\aproc) = \rho_{k+1}(\aproc)$ for all other process $\aproc$, it holds that $\rho'_k \transup{\aproc, t} \rho'_{k+1}$.
		 As in the previous case, we conclude that $\rho'_k \transup{\aproc_0, t} \rho'_{k+1}$.
		
		\item Case $k = j_1$. In that case $\rho'_{j_1}(\aproc_2) = \rho_{j_1}(\aproc_1) = q$ and $\rho'_{j_1 +1}(\aproc_2) = q$. As $q \in Q_A$, $a \nin R(q) = \emptyset$ and so, as $\rho'_k(\aproc) = \rho_k(\aproc)$ and $\rho'_{k+1}(\aproc) = \rho_{k+1}(\aproc)$ for all other process $\aproc \neq \aproc_2$, it holds that $\rho'_k \transup{\aproc_0, t} \rho'_{k+1}$.
	\end{itemize}

\end{proof}

\section{Proofs of \cref{subsubsec:target-wo-soundness-part}}
\label{app-sec:target-wo-soundness-part}
Let $\lambda = (\SS, v)$ and $\lambda' = (\SS', v')$ be two S-configurations and $t = (q, !!a, q') \in T$. When $\lambda \abconftransup{\delta} \lambda'$ and $(\SS, \delta, \SS') \in \Delta_t$, we say that $\lambda \abconftransup{t} \lambda'$.

\begin{proof}[Proof of \cref{lemma:soundness-local-ppty}]
Let $\lambda\abconftransup{\delta}\lambda'$ with $\lambda=(\SS,v)$ and $\lambda'=(\SS',v')$. Let $C\in\implem{\lambda}$ with $||C||=n$ and 
$f:[1,n]\rightarrow \Counters$.

\begin{itemize}
\item If $\delta\in \Delta_e$, then $\SS=\SS'$ and let $(p,k)\notin\lab{\SS}$ such that $\delta(\counter_p)=1$ and $\delta(\counter_{p,k})=-1$. In that case, we let
$C'=C$ and $C\trans^\ast C'$. It remains to show that $C=C'\in\implem{\lambda'}$. For that, we will build another function $f':[1,n]\rightarrow \Counters$, that
reflects the modification of counters $\counter_{p,k}$ and $\counter_p$. Since $\delta(\counter_{p,k})=-1$, $v(\counter_{p,k})>0$ and hence $|f^{-1}(\counter_{p,k})|>0$.
Thus, let $e\in f^{-1}(\counter_{p,k})$. Since $(p,k)\notin\lab{\SS}$, by \ref{cond:implem-definition:absent-summary}, $C(e)=C'(e)=p$. Then, we let $f'(e)=\counter_p$,
and for all other processes $e'\in[1,n]$, we let $f'(e)=f(e)$. Then, $|{f'}^{-1}(\counter_p)| = |f^{-1}(\counter_p)|+1$, $|{f'}^{-1}(\counter_{p,k})|=f^{-1}(\counter_{p,k})|-1$,
and for all other counter $\counter$, $|{f'}^{-1}(\counter)| = |f^{-1}(\counter)|$. Then, $v'(\counter_p)=v(\counter_p)+1 = |f^{-1}(\counter_p)|+1 = |{f'}^{-1}(\counter_p)|$,
$v'(\counter_{p,k})= v(\counter_{p,k})-1 = |f^{-1}(\counter_{p,k})|-1 = |{f'}^{-1}(\counter_{p,k})|$, and for all other counter $v'(\counter)=v(\counter)=|{f'}^{-1}(\counter)|$.
Since $C'(e)=p$, it is easy to see that $f'$ meets the requirements \ref{cond:implem-definition:action-state}, \ref{cond:implem-definition:present-summary}\ and \ref{cond:implem-definition:absent-summary}, hence $C'\in\implem{\lambda'}$. 

\item Otherwise, $\delta\nin \Delta_e$ and let $t=(q,!!a,q')$ such that $\lambda\abconftransup{t}\lambda'$. Since $\delta(\counter_q)=-1$, $v(\counter_q)>0$ and hence there exists $e_q\in [1,n]$ such 
that $f(e_q)=\counter_q$, and, by  \ref{cond:implem-definition:action-state}, $C(e_q)=q$. We build $C'$ as follows: $C'(e_q)=q'$. Then for all other $e\in [1,n]$, if
$a\notin R(C(e))$, we let $C'(e)=C(e)$. Otherwise, $C(e)\in Q_W$. From the fact that $C\in\implem{\lambda}$, we deduce that $f(e) = \counter_{(p,k)}$, with 
$(p,k)\in\lab{\SS}$, and let $S=(\print,p,k)\in \SS$. Now,
	\begin{itemize}
	\item if $(p,k)\notin\lab{\SS'}$, it means that $S\abtransup{t}\arrived$, and we let $C'(e)=p$. By definition of $S\abtransup{t}\arrived$, we deduce that $(C(e), ?a, p)\in T$. 
	\item Otherwise, let $S'=(\print', p, k)\in \SS'$. By definition of the VASS, it means that either $S\abtransup{t} S'$, or $S\abtransup{t, +q'} S'$. In both cases, 
	there exists a configuration $C''$ such that $\confstate{{q}} \oplus \conf{\print}   \transup{1, t} \confstate{{q'}} \oplus C''$ and either $\setof{C''} = \print'$ or
	$\setof{C''}\cup\set{q'}=\print'$. Let $e'$ such that $\confstate{{q}} \oplus \conf{\print}(e')=C(e)$ (we will justify its existence just after), then, we define $C'(e)=\confstate{{q'}} \oplus C''(e')$. Observe that we know that
	such a process $e'$ exists, thanks to \ref{cond:implem-definition:present-summary}: since $e\in f^{-1}(\counter_{p,k})$, $C(e)\in \print\cup\set{p}$. Moreover, $C(e)\in Q_W$,
	hence $C(e)\in\print$. Then we know that $(C(e), ?a, C'(e))\in T$ or $C(\aproc) = C'(\aproc)$ and $a \nin R(C(\aproc))$. 
	\end{itemize}
	
	It is clear from the construction of $C'$ that $C\trans C'$. We show now that $C'\in\implem{\lambda'}$ by defining $f':[1,n]\rightarrow\Counters$ as follows. 
	
	\begin{align*}
	f'(e_q) &=\begin{cases} \counter_{q'} & \textrm{if $q'\in Q_A$}\\
					\counter_{p,k} & \textrm{ if $q'\in Q_W$, and $\lab{\SS}=\lab{\SS'}$, with $\counter_{p,k}$ the unique counter such that $\delta(\counter_{p,k})=1$}\\
					\counter_{p,k} & \textrm{ otherwise, with $(p,k)\in\lab{\SS'}\setminus\lab{\SS}$}
	\end{cases}\\
	f'(e)&=f(e) \quad \textrm{ for all other $e\in [1,n]$}
	\end{align*}

\end{itemize}

We have to prove \ref{cond:implem-definition:action-state}, \ref{cond:implem-definition:present-summary}\ and \ref{cond:implem-definition:absent-summary}, but
also that for all $\counter\in \Counters$, $|f'^{-1}(\counter)| = v'(\counter)$.

Let $p\in Q_A$ and $e\neq e_q$ such that $f'(e)=p=f(e)$ by construction. Then $C(e)=p$, and by definition of $C'$, $C'(e)=p$. If $e_q$ is such that $f'(e_q)=p$,
it means that $p=q'\in Q_A$. In that case, $C'(e_q)=p$. In any case, \ref{cond:implem-definition:action-state} is satisfied. Moreover, if $p\neq q'$, $v'(\counter_p)=
v(\counter_p)$ and ${f'}^{-1}(\counter_p) = f^{-1}(\counter_p)$, hence $|{f'}^{-1}(\counter_p)|=v'(\counter_p)$. If $p=q'$, then $v'(\counter_p)=
v(\counter_p)+1$ and ${f'}^{-1}(\counter_p) = f^{-1}(\counter_p)\cup\set{e_q}$ thus $|{f'}^{-1}(\counter_p)|=v'(\counter_p)$.

Let $(p,k)$ such that $S'=(\print', p,k)\in\SS'$ with $(p,k)\in \lab{\SS}$. Let $e\in {f'}^{-1}(\counter_{p,k})$. If $e\neq e_q$, then $e\in f^{-1}(\counter_{p,k})$. Let $S=(\print, p,k)\in \SS$.
Then $C(e)\in\print\cup \set{p}$. By construction of $C'$, either $C(e)=p=C'(e)$, or $C(e)\in\print$ and $C'(e)\in\print'$. If $e=e_q$, by definition of $f'$, $q'\in Q_W$, 
$\lab{\SS}=\lab{\SS'}$, and $\delta(\counter_{p,k})=1$, and, by definition of the VASS, this means that $S\abtransup{t, +q'} S'$, hence $q'\in\print'$. 
Hence, $C'(e)=q'\in\print'$. 
Let now $(p,k)$ such that $S'=(\print', p,k)\in\SS'$ with $(p,k)\nin \lab{\SS}$. Let $e\in {f'}^{-1}(\counter_{p,k})$ such that $e\neq e_q$. In that case, $e\in f^{-1}(\counter_{p,k})$.
 Since $(p,k)\nin \lab{\SS}$, by \ref{cond:implem-definition:absent-summary}, $C(e)=p\in Q_A$, and then $C'(e)=p\in\print'\cup\set{p}$. If $e=e_q$, $C'(e)=q'$, and
 by definition of the VASS, $S'=(\set{q'},p,k)$.  

Then \ref{cond:implem-definition:present-summary} is met. Moreover, if $f'(e_q)=\counter_{p,k}$ then $|{f'}^{-1}(\counter_{p,k})|=
|f^{-1}(\counter_{p,k})|+1$, and in that case, $v'(\counter_{p,k})=v(\counter_{p,k})+1$. Otherwise, $|{f'}^{-1}(\counter_{p,k})|=
|f^{-1}(\counter_{p,k})|=v(\counter_{p,k})=v'(\counter_{p,k})$. 

Finally, let $(p,k)\notin\lab{\SS'}$, and let $e\in {f'}^{-1}(\counter_{p,k})$. By definition of $f'$, $e\in f^{-1}(\counter_{p,k})$. 
If $(p,k)\notin\lab{\SS}$, then by \ref{cond:implem-definition:absent-summary} on $C$, $C(e)=p\in Q_A$, and then $C'(e)=p$. Otherwise,
let $S=(\print,p,k)\in\SS$, we must have $S\abtransup{t}\arrived$. By construction, $C'(e)=p$. Then \ref{cond:implem-definition:absent-summary} is satisfied by $C'$. 
Moreover, $v(\counter_{p,k})=v'(\counter_{p,k})$ and $|{f'}^{-1}(\counter_{p,k})|=|{f}^{-1}(\counter_{p,k})|$.

We have hence proved that $C'\in\implem{\lambda'}$.

\end{proof}

\section{Proofs of \cref{subsubsec:target-wo-completeness-part}}
\label{app-sec:target-wo-completeness-part}

\begin{proof}[Proof of \cref{lemma:target:completeness:local}]
	
	Let $C=\rho_i$ and let $\lambda_{i} = (\SS, v) \in \abconffrom{\rho}{i}$, then 

\begin{itemize}
	\item \ref{new-cond:repr:action-state}\ 
	for all $q \in Q_A$, $v(\counter_q) = |C^{-1}(q)|$;
\end{itemize}
and for all $q_a\in Q_A$, there is an injective function $r_{q_a} : \nextactionindexset{\rho, i ,q_a} \rightarrow [1, |Q_W|+1]$ s.t.:
\begin{itemize}
	\item \ref{new-cond:summaries}\ $\SS= \set{(C(E^{\rho,i}_{q_a,j}), q_a, r_{q_a}(j)) \mid q_a \in Q_A, j \in \nextactionindexset{\rho,i, q_a}}$,and,
	\item \ref{new-cond:repr:renaming}
	for all $q_a\in Q_A$, we have  $v(\counter_{q_a, r_{q_a}(j)}) = |E^{\rho,i}_{q_a, j}|$ for all $j \in \nextactionindexset{\rho, i, q_a}$ 
	and  $v(\counter_{q_a, k}) = 0$ for all $k \notin r_{q_a}(\nextactionindexset{\rho, i ,q_a})$.      
\end{itemize}

%
	Let $t=(q,!!m,q')$ and $e_q\in[1, \nbagents{\rho}]$ such that $C\transup{e_q,t} C_{i+1}$.
	We start by building an $S$-configuration $\lambda'=(\SS',v')$
	such that $\lambda_i\abconftransup{t}\lambda'$. We will then show that an appropriate sequence of transitions from $\Delta_e$ allows to go from $\lambda'$ to
	a configuration $\lambda_{i+1}\in\abconffrom{\rho}{i+1}$.
	
	We state first some remarks about $E^{\rho, i}_{q_a,j}$, for $q_a\in Q_A$ and $j\in\nextactionindexset{\rho,i,q_a}$. Observe that, since $E^{\rho, i}_{q_a,j}=\set{e\in [1, \nbagents{\rho}]\mid C(e)\subseteq Q_W\textrm{ and }\nextaction{\rho,i,e}=(q_a,j)}$, if $C_{i+1}(e)\in Q_A$ for
	some $e\in E^{\rho, i}_{q_a,j}$, then it means that $j=i+1$ and $C_{i+1}(e)=q_a$ for all $e\in E^{\rho, i}_{q_a,j}$. Hence $C_{i+1}(E^{\rho, i}_{q_a,j})\subseteq Q_W$ or $C_{i+1}(E^{\rho, i}_{q_a,j})=\set{q_a}$. 
	
	Observe also that if $j \in \nextactionindexset{\rho, i, q_a}$:
	\begin{displaymath}
E^{\rho, i+1}_{q_a,j}=\begin{cases}\emptyset & \textrm{ if $j=i+1$}\\
E^{\rho, i}_{q_a,j} \cup\set{e_q}& \textrm{if $q'\in Q_W$ and $\nextaction{\rho,i+1, e_q}=(q_a,j)$}\\
E^{\rho, i}_{q_a,j} & \textrm{otherwise.}
\end{cases}
\end{displaymath}
Moreover, if $q'\in Q_W$, and $\nextaction{\rho,i,e_q}=(q_a,j)$ with $j\notin\nextactionindexset{\rho,i,q_a}$, then $E^{\rho, i}_{q_a,j}=\emptyset$ hence $E^{\rho, i+1}_{q_a,j}=\set{e_q}$.

	\paragraph*{Definition of $\lambda'$.} For all $\summary=(\print, q_a,k)\in\SS$, we let $j=r_{q_a}^{-1}(k)$ and we define 
	\begin{displaymath}
	\nextsummary{\summary}=\begin{cases} (C_{i+1}(E^{\rho, i}_{q_a,j}), q_a, k) & \textrm{ if }C_{i+1}(E^{\rho, i}_{q_a,j})\subseteq Q_W\\
									\arrived & \textrm{ otherwise.}
									\end{cases}
	\end{displaymath}
	
	Recall that $t = (q, !!m, q')$.
	$\nextsummary{\summary}$ represents the set of states reached by the states in the print of $\summary$ after the sending of the message $m$ in $\rho$. 
	We then let $\SS_0=\set{\nextsummary{\summary}\mid\summary\in\SS}\setminus\set{\arrived}$. 
	
	We first show that $\SS_0$ is coherent:
	let $\summary_1=(\print_1,q_a,k_1), \summary_2=(\print_2,q_a,k_2)\in\SS$ such that $\nextsummary{\summary_1}=(\print'_1, q_a,k_1)$ and $\nextsummary{\summary_2}=(\print'_2, q_a, k_2)$ are two distinct summaries in $\SS_0$. Since $\SS$ is coherent, we know that $k_1\neq k_2$. 
	Assume that there is some $p\in\print'_1\cap\print'_2$. It means that there exist $e_1,e_2\in[1, \nbagents{\rho}]$ such that $C_{i+1}(e_1)=p=C_{i+1}(e_2)$, and
	$e_1\in E^{\rho, i}_{q_a, r_{q_a}^{-1}(k_1)}$ and $e_2\in E^{\rho, i}_{q_a, r_{q_a}^{-1}(k_2)}$.  Hence by definition, $\nextaction{\rho, i, e_1} = (q_a, r^{-1}_{q_a}(k_1))
	=\nextaction{\rho, i+1, e_1}$
	and $\nextaction{\rho, i, e_2} = (q_a, r^{-1}_{q_a}(k_2))=\nextaction{\rho, i+1, e_2}$. Since $\rho$ is well-formed, we have
	that $r^{-1}_{q_a}(k_1)=r^{-1}_{q_a}(k_2)$,
	and hence, $k_1= k_2$, which is a contradiction. Then for any two distinct summaries $\nextsummary{\summary_1}=(\print'_1, q_a,k_1)$ and $\nextsummary{\summary_2}=(\print'_2, q_a, k_2)$, then $k_1\neq k_2$ and $\print'_1\cap\print'_2=\emptyset$. 
	
	Now that we have defined $\SS_0$, which captures the behavior of the processes that were in waiting states when the transition $t$ has been taken, we define
	$\SS'$ that incorporates the process $e_q$ that arrives in state $q'$, along with the new valuation $v'$. In particular, if $q'\in Q_W$, then 
	$e_q$ reaches a zone of waiting states and must enter one of the summaries. Observe first that $C(e_q)=q$, then $v(\counter_q)=|C^{-1}(q)|>0$, then we let $v'(\counter_q)=v(\counter_q)-1$. 
	\begin{itemize}
	\item If $q'\in Q_A$, $\SS'=\SS_0$ and $v'(\counter_{q'})=v(\counter_{q'})+1$. For all other counter $\counter$, $v'(\counter)=v(\counter)$. 
	\item If $q'\in Q_W$, let $\nextaction{\rho, i+1, e_q}=(p,j)$.
		\begin{enumerate}
		\item If $j\in\nextactionindexset{\rho, i, p}$, then $e_q$ will join an existing summary $\summary=(\print, p, r_p(j))\in \SS$. Since
		$\nextaction{\rho,i+1,e_q}=(p,j)$ and $q'\in Q_W$, $j\neq i+1$ and $C_{i+1}(E^{\rho, i}_{q_a,j})\subseteq Q_W$. Hence, $\nextsummary{\summary}\neq\arrived$. We then let 
		$\SS'=(\SS_0\setminus\set{\nextsummary{\summary}})\cup \set{(\printof{\nextsummary{\summary}}\cup\set{q'}, p, r_p(j))}$. Then $v'(\counter_{p,r_p(j)})=
		v(\counter_{p,r_p(j)})+1$ and $v'(\counter)=v(\counter)$ for all other counter $\counter$. 
		
		\item \label{it:valuation-new-summary}If $j\notin\nextactionindexset{\rho, i, p}$, then $\aproc_q$ will enter a new summary. Observe first that $\nextactionindexset{\rho, i+1, p}=(\nextactionindexset{\rho,i,p}\cup\set{j})\setminus\set{i+1}$.  So $|\nextactionindexset{\rho,i,p}|\leq|\nextactionindexset{\rho,i+1,p}|\leq|Q_W|< 1+|Q_W|$, and
		there exists $k \in [1, |Q_W|+1]$ such that $k \notin r_p(\nextactionindexset{\rho,i,p})$. This means that $(p,k)$ is a label that is not used in the summaries of $\SS_0$
		and we let $\SS'=\SS_0\cup\set{(\set{q'},p,k)}$. Now, $v'(\counter_{p,k})=v(\counter_{p,k})+1$, and for all other counter $\counter$, $v'(\counter)=v(\counter)$. 
		\end{enumerate} 
	\end{itemize}
	
	\paragraph*{Proof that $\lambda'=(\SS',v')$ is an $S$-configuration.}
	To prove that $\lambda'$ is an $S$-configuration, we need to show that $\SS'$ is coherent. The definition of $\SS'$ depends on the belonging of $q'$ to $Q_A$ or $Q_W$.
	\begin{itemize}
	\item If $q'\in Q_A$, then $\SS'=\SS_0$, and we have proved above that $\SS_0$ is coherent. 
	\item if $q'\in Q_W$, let $\summary_1=(\print_1, q_a, k_1)$ and $\summary_2=(\print_2, q_a, k_2)\in \SS_1$, for some $q_a\in Q_A$. If $\summary_1,\summary_2\in \SS_0$, by coherence of $\SS_0$, we have that 
	$k_1\neq k_2$ and $\print_1\cap\print_2=\emptyset$. Otherwise, assume that $\summary_2\in \SS'\setminus\SS_0$. Depending
	on whether $\aproc_q$ has joined an existing summary or a new one, $\summary_2=(\print\cup\set{q'}, p,k_2)$ with $\print$ empty or not. Then $\summary_1=(\print_1,p,k_1)$ and
	if $\print\neq\emptyset$, $(\print,p,k_2)\in \SS_0$. By coherence of $\SS_0$, $k_1\neq k_2$ and $\print_1\cap\print=\emptyset$. Moreover, if $\summary_2=(\set{q'}, p, k_2)$, by construction, $k_2\neq k_1$. Assume now that there exists $p'\in\print_1\cap\print_2$. Then $p'=q'$, and there exists $e_1\in
	[1, \nbagents{\rho}]$ such that $C_{i+1}(e_1)=q'$ and $\nextaction{\rho,i,e_1}=(p, r^{-1}_{p}(k_1))$. But $\nextaction{\rho, i, e_q}=(p, j)$. Since $\rho$ is well-formed,
	$j=r^{-1}_p(k_1)$, which is impossible because $k_2\neq k_1$. Hence $\print_1\cap\print_2=\emptyset$, and $\SS_1$ is coherent. 
	\end{itemize}
	\paragraph*{Proof that $\lambda_i \abconftransup{t} \lambda'$.}
	We first show  that for all $\summary\in \SS$, $S\abtransup{t}\nextsummary{\summary}$. Let $\summary=(\print, q_a, k)\in\SS$ with $j'=r^{-1}_{q_a}(k)$. By 
	construction, $\print=C(E^{\rho, i}_{q_a,j'})$. Assume now that there exist $e_1,e_2\in E^{\rho, i}_{q_a,j'}$ such that $C(e_1)=C(e_2)$. Since $\rho$
	is well-defined, $C_{i+1}(e_1)=C_{i+1}(e_2)$. Then, for all $s\in \print$, there exists a unique
	$\nextsummary{s}\in Q$ such that $C_{i+1}(e)=\nextsummary{s}$ for all $e\in C^{-1}_i(s)$. Moreover,
	$C_{i+1}(E^{\rho, i}_{q_a,j'})=\set{\nextsummary{s}\mid s\in \print}$. We define a configuration $C'$ as follows. Let $e\in [1, ||\conf{\print}||]$ such that $\conf{\print}(e)=p_e\in\print$. 
	Then $C'(e)=\nextsummary{p_e}$. Hence $\setof{C'}=\set{\nextsummary{s}\mid s\in \print}= C_{i+1}(E^{\rho, i}_{q_a, j'})$.  We show now that $\conf{q}\oplus \conf{1, \print} \transup{t} \conf{q'}\oplus C'$. Let $e\in [1,||\conf{\print}||]$ such that $\conf{\print}(e)=p_e$, then there exists $e'\in E^{\rho, i}_{q_a, j'}$ such that 
	$C(e')=p_e$ and $\nextsummary{p_e}=C_{i+1}(e')$. Since $C\transup{t} C_{i+1}$, then either $(p_e, ?m, \nextsummary{p_e})\in T$ or $m\notin R(p_e)$
	and $p_e=\nextsummary{p_e}$, and $\conf{q}\oplus \conf{\print} \transup{1, t} \conf{q'}\oplus C'$. We have already established that 
	$\setof{C'}=C_{i+1}(E^{\rho, i}_{q_a, j'})\subseteq Q_W$ or $\setof{C'}=C_{i+1}(E^{\rho, i}_{q_a, j'})=\set{q_a}$. In the latter case, $\nextsummary{\summary}=\arrived$. 
	In the former case, $\nextsummary{\summary}= (C_{i+1}(E_{q_a, j'}^i), q_a, k)$ and $\summary\abtransup{t}\nextsummary{\summary}$. In any case,
	we have that $\summary\abtransup{t}\nextsummary{\summary}$. 
	
	Now we show that there exists $\delta$ such that $\lambda_i\abconftransup{\delta}\lambda'$ and $(\SS,\delta,\SS')\in \Delta_t$. 
	\begin{itemize}
	\item If $q'\in Q_A$, we must have 
	$\delta(\counter_q)=-1$, $\delta(\counter_{q'})=1$ and for all other counter $\counter$, $\delta(\counter)=0$. Observe that we have $v'(\counter_q)=v(\counter_q)-1$,
	$v'(\counter_{q'})=v(\counter_{q'})+1$, and $v'(\counter)=v(\counter)$ for all other counter $\counter$. Moreover, $\SS'=\SS_0=\set{\nextsummary{\summary}\mid \summary\in \SS}\setminus\set{\arrived}.$ Since $\summary\abtransup{t}\nextsummary{\summary}$ for all $\summary\in \SS$, we have $(\SS,\delta, \SS')\in
	\Delta_t$ and $\lambda_i\abconftransup{\delta}\lambda'$. 
	\item If $q'\in Q_W$ and there exists some $\summary=(\print, p, r_p(j))$ such that $\SS'=(\SS_0\setminus\set{\nextsummary{\summary}})\cup \set{(\printof{\nextsummary{\summary}}\cup\set{q'}, p, r_p(j))}$.
	Then $v'(\counter_q)=v(\counter_q)-1$, $v'(\counter_{p,r_p(j)})=
		v(\counter_{p,r_p(j)})+1$ and $v'(\counter)=v(\counter)$ for all other counter $\counter$. By definition, $\summary\abtransup{t, +q'} (\printof{\nextsummary{\summary}}\cup\set{q'}, p, r_p(j))$, because $\summary\abtransup{t}\nextsummary{\summary}$. Then, with $\delta(\counter_q)=-1$,
		$\delta(\counter_{p,r_p(j)})=1$ and $\delta(\counter)=0$ for all other counter, we have $(\SS,\delta, \SS')\in \Delta_t$, and $\lambda_i\abconftransup{\delta}\lambda'$.
	\item If $q' \in Q_W$ and $\SS'=\SS_0\cup\set{(\set{q'},p,k)}$ for some $k\in [1,|Q_W|+1]$. Then $(p,k)\notin \lab{\SS}$ by construction. Moreover,
	$v'(\counter_q)=v(\counter_q)-1$, and $v'(\counter_{p,k})=1$. For all other counter $\counter$, $v'(\counter)=v(\counter)$. For $\delta$ such that
	$\delta(\counter_q) = - 1$, $\delta(\counter_{p,k})=1$ and $\delta(\counter)=0$ for all other $\counter$, $(\SS,\delta,\SS')\in \Delta_t$ and 
	$\lambda_i\abconftransup{\delta}\lambda'$. 
	\end{itemize}
	
	Then $\lambda_i\abconftransup{t}\lambda'$. 
	
	\paragraph*{Definition of $\lambda_{i+1}$.}
	First observe that for all $q_a\in Q_A$, there is at most one $\summary=(\print, q_a,k)\in \SS$ such that $\nextsummary{\summary}=\arrived$. Indeed, 
	suppose there exist $\summary_1=(\print_1,q_a,k_1), \summary_2=(\print_2,q_a,k_2)\in \SS$ such that $\nextsummary{\summary_1}=\nextsummary{\summary_2}=\arrived$. By definition, $\print_1=C(E_{q_a,r^{-1}_{q_a}(k_1)}^{\rho, i})$ and $\print_2=C(E_{q_a,r^{-1}_{q_a}(k_2)}^{\rho, i})$. Let $j_1=r^{-1}_{q_a}(k_1)$
	and $j_2=r^{-1}_{q_a}(k_2)$. Then, since $C_{i+1}(E^{\rho, i}_{q_a,j_i})\notin Q_W$ for $i\in \set{1,2}$, necessarily $C_{i+1}(E^{\rho, i}_{q_a,j_1})=C_{i+1}(E^{\rho, i}_{q_a,j_2})
	=\set{q_a}$. 
	This implies that $j_1=j_2=i+1$ and then that $k_1=k_2$, which contradicts the coherence of $\SS$. This means that, there is at most one summary
	for each $q_a\in Q_A$ that disappears while going from $\SS$ to $\SS'$. This corresponds to a set of processes leaving a waiting zone and reaching the state $q_a$ together. We
	capture this new configuration of the protocol by transferring the value of the counter $\counter_{(q_a,r_{q_a}(i+1))}$ into the counter $\counter_{q_a}$. 
	
	So we define $v_{i+1}$ as follows. For all $q_a\in Q_A$, 
	\begin{itemize}
	\item if $i+1\in \nextactionindexset{\rho, i, q_a}$, let $k=r_{q_a}(i+1)$ and $v_{i+1}(\counter_{q_a})=v'(\counter_{q_a})+v'(\counter_{q_a,k})$ and
	$v_{i+1}(\counter_{q_a,k})=0$.
	\item Otherwise, $v_{i+1}(\counter_{q_a})=v'(\counter_{q_a})$.
	\end{itemize}
	For all other counter $\counter$, $v_{i+1}(\counter)=v'(\counter)$.
	
	We let $\lambda_{i+1}=(\SS',v_{i+1})$. 
	\paragraph*{Proof that $\lambda_{i+1}\in\abconffrom{\rho}{i+1}$.}
	We show that \ref{new-cond:repr:action-state} holds. Let $q_a\in Q_A$. Observe first that, for all $e\in[1, \nbagents{\rho}]$, $(C(e), ?m, q_a)\in T$ if and only if
	$e\in E_{q_a,i+1}^{\rho, i}$. Let $e_q\in [1, \nbagents{\rho}]$ such that $C\transup{e_q,t}C_{i+1}$. Then, $C(e_q)=q$, $C_{i+1}(e_q)=q'$. 
	For all $e\in[1, \nbagents{\rho}]$, $C_{i+1}(e)=q_a$ if and only if 
	($C(e)=q_a$ and $q\neq q_a$, or $C(e)=q_a$ and $e\neq e_q$, or $(C(e), ?m, q_a)\in T$ and $\nextaction{\rho,i,\aproc} = (q_a, i+1)$, or $q_a=q'$ and $e=e_q$).	
	From this we can deduce that

\begin{displaymath}
C_{i+1}^{-1}(q_a) = \begin{cases}
C^{-1}(q_a) \cup E_{q_a, i+1}^{\rho, i}& \textrm{if $q_a\neq q$ and $q_a\neq q'$}\\
C^{-1}(q_a)\cup\set{e_q}\cup E_{q_a, i+1}^{\rho, i} & \textrm{if $q_a=q'$}\\
(C^{-1}(q_a)\setminus\set{e_q})\cup E_{q_a, i+1}^{\rho, i} & \textrm{if $q_a=q$}.
\end{cases}
\end{displaymath}

Since $C^{-1}(q_a)$ and $E_{q_a, i+1}^{\rho, i}$ are disjoint, and if $e_q\in C^{-1}(q)\setminus C^{-1}(q')$ (which is always true as $q\neq q'$ by hypothesis), we have that 
\begin{displaymath}
|C_{i+1}^{-1}(q_a)| = \begin{cases}
|C^{-1}(q_a)| + |E_{q_a, i+1}^{\rho, i}| & \textrm{if $q_a\neq q$ and $q_a\neq q'$}\\
|C^{-1}(q_a)+1+ |E_{q_a, i+1}^{\rho, i}| & \textrm{if $q_a=q'$}\\
|C^{-1}(q_a)|-1+|E_{q_a, i+1}^{\rho, i}| & \textrm{if $q_a=q$}.
\end{cases}
\end{displaymath}

Since $\lambda=(\SS,v)\in\abconffrom{C}{i}$, $v(\counter_{q_a})=|C^{-1}(q_a)|$ and, by definition of $v'$, we have that 
\begin{displaymath}
|C_{i+1}^{-1}(q_a)| =
v'(\counter_{q_a}) + |E_{q_a, i+1}^{\rho, i}| 
\end{displaymath}

Now, if $i+1\notin \nextactionindexset{\rho, i, q_a}$, it means that $E_{q_a, i+1}^{\rho, i}=\emptyset$. In that case, by definition of $v_{i+1}$, $v_{i+1}(\counter_{q_a})=v'(\counter_{q_a})=|C_{i+1}^{-1}(q_a)|$.

Otherwise, $v_{i+1}(\counter_{q_a})=v'(\counter_{q_a})+v'(\counter_{(q_a, r_{q_a}(i+1))})$. The only case where $v'(\counter_{(q_a, r_{q_a}(i+1))})\neq v(\counter_{(q_a, r_{q_a}(i+1))})$ is if $q'\in Q_W$ and $\nextaction{\rho, i+1, e_q}=(q_a,i+1)$ which is impossible. So $v'(\counter_{q_a, r_{q_a}(i+1)})=v(\counter_{q_a, r_{q_a}(i+1)})$ and, since $\lambda \in\abconffrom{C}{i}$, $v(\counter_{q_a, r_{q_a}(i+1)})=|E_{q_a, i+1}^{\rho, i}|$. Hence, 
$v_{i+1}(\counter_{q_a})=v'(\counter_{q_a})+|E_{q_a, i+1}^{\rho, i}|=|C_{i+1}^{-1}(q_a)|$, proving \ref{new-cond:repr:action-state}.

We now show that \ref{new-cond:repr:renaming} holds. Let $q_a\in Q_A$. We define an injective function $r'_{q_a}: \nextactionindexset{\rho, i+1, q_a}\rightarrow [1, |Q_W|+1]$ as follows.

Recall that $$\nextactionindexset{\rho, i+1, q_a}=\begin{cases}(\nextactionindexset{\rho, i, q_a}\setminus\set{i+1})\cup&\\
 \set{\nextactionindex{\rho, i+1, e_q}} & 
\textrm{if $q'\in Q_W$ and $\nextactionstate{\rho, i+1, e_q}=q_a$}\\
\nextactionindexset{\rho, i, q_a}\setminus\set{i+1} & \textrm{otherwise.}
\end{cases}$$

The new injective function keeps the labelling of the summaries as in the previous configuration, and if needed, associates the labelling of the newly created 
summary to the index of the next moment $e_q$ will reach an action state. 
\begin{displaymath}
r'_{q_a}(j)=\begin{cases} r_{q_a}(j) & \textrm{if $j\in \nextactionindexset{\rho, i, q_a}\setminus\set{i+1}$}\\
k & \textrm{otherwise, for the unique $k$ such that $(\set{q'}, q_a, k)\in\SS'$ and $(q_a, k)\nin \lab{\SS}$}
\end{cases}
\end{displaymath}
	
	\begin{itemize}
	\item For $j\in\nextactionindexset{\rho, i, q_a}\setminus\set{i+1}$, $v_{i+1}(\counter_{q_a, r'_{q_a}(j)})=v_{i+1}(\counter_{q_a, r_{q_a}(j)})=v'(\counter_{q_a, r_{q_a}(j)})$. By definition of $v'$, 
	
	\begin{displaymath}
v'(\counter_{q_a, r_{q_a}(j)})=\begin{cases} v(\counter_{q_a, r_{q_a}(j)})+1 & \textrm{if $q'\in Q_W$ and $\nextaction{\rho, i+1, e_q}=(q_a,j)$}\\
v(\counter_{q_a, r_{q_a}(j)}) & \textrm{otherwise}.
\end{cases}
\end{displaymath}

We know that $v(\counter_{q_a, r_{q_a}(j)})=|E^{\rho, i}_{q_a,j}|$. Moreover, 
\begin{displaymath}
E^{\rho, i+1}_{q_a,j}=\begin{cases}E^{\rho, i}_{q_a,j} \cup\set{e_q}& \textrm{if $q'\in Q_W$ and $\nextaction{\rho,i+1, e_q}=(q_a,j)$}\\
E^{\rho, i}_{q_a,j} & \textrm{otherwise.}
\end{cases}
\end{displaymath}
Hence, $v_{i+1}(\counter_{q_a, r'_{q_a}(j)})=v'(\counter_{q_a, r_{q_a}(j)})=|E^{\rho, i+1}_{q_a,j}|$. 
\item If $\nextaction{\rho, i+1, e_q} = (q_a,j)$ and $j\notin \nextactionindexset{\rho, i, q_a}$, by definition $v_{i+1}(\counter_{q_a, r'_{q_a}(j)})=v'(\counter_{q_a, r'_{q_a}(j)})
=v'(\counter_{q_a,k})$. This corresponds to the case~\ref{it:valuation-new-summary} of the definition of $v'$, where $v'(\counter_{q_a,k})=v(\counter_{q_a,k})+1$. 
By definition of $k$, $k\notin r_p(\nextactionindexset{\rho,i,q_a})$, then, since $\lambda=(\SS,v)\in \abconffrom{C}{i}$, $v(\counter_{q_a,k})=0$ and
$v_{i+1}(\counter_{q_a, r'_{q_a}(j)})=1$. Moreover, $E^{\rho, i}_{q_a,j}=\{e\in[1, \nbagents{\rho}]\mid C(e)\in Q_W\textrm{ and }\nextaction{\rho, i, e}= (q_a,j)\}$; since
$j\notin \nextactionindexset{\rho, i, q_a}$,
$E^{\rho, i}_{q_a,j}=\emptyset$. 
Hence, $E_{q_a, j}^{\rho, i+1}=E^{\rho, i}_{q_a,j} \cup\set{e_q}=\set{e_q}$ and $v_{i+1}(\counter_{q_a, r'_{q_a}(j)})=1=|E^{\rho, i+1}_{q_a,j}|$.
\end{itemize}

Let $x\notin {r'}_{q_a}(\nextactionindexset{\rho, i+1, q_a})$. Then, either $x\notin\nextactionindexset{\rho, i, q_a}$, or $x=r_{q_a}(i+1)$. 
In the first case, $v(\counter_{q_a,x})=0$, and $v_{i+1}(\counter_{q_a,x})=v'(\counter_{q_a,x})=v(\counter_{q_a,x})=0$. In the second case, $v_{i+1}(\counter_{q_a,x})=0$ by construction. 

Finally we show that \ref{new-cond:summaries} holds, i.e. that $\SS'=\set{(C_{i+1}(E^{\rho, i+1}_{q_a,j}),q_a,r'_{q_a}(j))\mid q_a\in Q_A, j\in\nextactionindexset{\rho,i+1,q_a}}$.

\begin{itemize}
	\item First, let $\summary\in \SS'$. By construction, either $\summary\in\SS_0$, or there exists $\summary_0=(\print, q_a,r_{q_a}(j))\in\SS_0$ such that $\summary=(\print\cup\set{q'}, q_a,r_{q_a}(j))$,
	or $\summary=(\set{q'}, q_a, k)$.  
	\begin{itemize}
		\item If $\summary\in\SS_0$ and $\nextaction{\rho,i,q_a,e_q} \neq (q_a, j)$ or $q'\in Q_A$, then $\summary=\nextsummary{\summary_0}$ for some $\summary_0\in \SS$. Since $(\SS,v)\in\abconffrom{C}{i}$, 
		$\summary_0=(C(E^{\rho, i}_{q_a,j}), q_a, r_{q_a}(j))$ for some $q_a\in Q_A$ and $j\in\nextactionindexset{\rho,i,q_a}$.  Then $\summary=(C_{i+1}(E^{\rho, i}_{q_a,j}),q_a,r_{q_a}(j))$. 
		Then $E^{\rho, i+1}_{q_a,j}=E^{\rho, i}_{q_a,j}$. Moreover, $\nextsummary{\summary_0}\neq\arrived$ hence $j\neq i+1$, then $r'_{q_a}(j)=r_{q_a}(j)$. 
		Then $\summary=(C_{i+1}(E^{\rho, i+1}_{q_a,j}), q_a, r'_{q_a}(j))$. 
		\item If there exists $\summary_0=(\print, q_a,r_{q_a}(j))\in\SS$ such that $\summary=(\print\cup\set{q'}, q_a,r_{q_a}(j))$ and
		$q'\in Q_W$ and $\nextaction{\rho, i+1, e_q}=(q_a,j)$, then, in that case, $E^{\rho, i+1}_{q_a,j}=E^{\rho, i}_{q_a,j} \cup\set{e_q}$ and $j\neq i+1$ so $r'_{q_a}(j)=r_{q_a}(j)$. Then $\summary=(C_{i+1}(E^{\rho, i+1}_{q_a,j}), q_a, r'_{q_a}(j))$. 
		\item Finally, if $\summary=(\set{q'}, q_a, k)$, and $q'\in Q_W$, with $\nextaction{\rho,i+1, e_q}=(q_a,j)$ and $j\notin\nextactionindexset{\rho,i,q_a}$, then $E^{\rho, i+1}_{q_a,j}=\set{e_q}$ and 
		$r'_{q_a}(j)=k$. Then, 
		$\summary=(C_{i+1}(E^{\rho, i+1}_{q_a,j}),q_a,r'_{q_a}(j))$. 
	\end{itemize}
	\item Now, let $\summary=(C_{i+1}(E^{\rho, i+1}_{q_a,j}), q_a, r'_{q_a}(j))$ for some $q_a\in Q_A$, $j\in\nextactionindexset{\rho,i+1,q_a}$. 
		\begin{itemize}
		\item If $j\in\nextactionindexset{\rho, i, q_a}$, then $j\neq i+1$ and $r'_{q_a}(j)=r_{q_a}(j)$.
	Let $\summary_0=(C(E^{\rho, i}_{q_a,j}), q_a, r_{q_a}(j))\in \SS$. Such a summary exists because $\lambda=(\SS,v)$ is in $\abconffrom{C}{i}$. If $q'\in Q_W$ and $\nextaction{\rho,i+1, e_q}=(q_a,j)$,
	then, $E^{\rho, i+1}_{q_a,j}=E^{\rho, i}_{q_a,j}\cup\set{e_q}$. Moreover, by construction of $\SS'$, $(C_{i+1}(E^{\rho, i}_{q_a,j})\cup\set{q'}, q_a, r_{q_a}(j))=(C_{i+1}(E^{\rho, i}_{q_a,j})\cup C_{i+1}(e_q), q_a, r_{q_a}(j))=(C_{i+1}(E^{\rho, i+1}_{q_a,j}), q_a, r'_{q_a}(j))\in \SS'$. Otherwise, $E^{\rho, i+1}_{q_a,j}=E^{\rho, i}_{q_a,j}$ and, by construction of $\SS'$, $\nextsummary{\summary_0}=(C_{i+1}(E^{\rho, i}_{q_a,j}), q_a, r_{q_a}(j))=(C_{i+1}(E^{\rho, i+1}_{q_a,j}), q_a, r'_{q_a}(j))\in \SS'$.  
		\item If $j\notin\nextactionindexset{\rho,i,q_a}$, then $r'_{q_a}(j)=k$ and $(\set{q'}, q_a,k)\in\SS'$. Moreover, since $j\in\nextactionindexset{\rho,i+1, q_a}\setminus\nextactionindexset{\rho,i,q_a}$, it means
		that $q'\in Q_W$ and $\nextaction{\rho, i+1, e_q}=(q_a,j)$. Then $E^{\rho, i+1}_{q_a,j}=\set{e_q}$ and $\summary=(C_{i+1}(E^{\rho, i+1}_{q_a,j}, q_a, r'_{q_a}(j))=(\set{q'}, q_a,k)\in \SS'$. 
		\end{itemize}
	\end{itemize}

	\paragraph*{Proof that $\lambda'\abconftrans^*\lambda_{i+1}$.}
	The only difference between $\lambda'$ and $\lambda_{i+1}$ is in the valuation. We show that $\lambda'\abconftrans^*\lambda_{i+1}$ by successive transitions from $\Delta_e$ in order to transfer the 
	values of counters associated to summaries that have reached $\arrived$ to the counter associated to the corresponding action state. 
	
	Let $\set{q_1,\dots, q_n}\subseteq Q_A$ such that, for all $1\leq \ell\leq n$, $i+1\in \nextactionindexset{\rho,i,q_\ell}$. It is the set of action states that are reached by processes in waiting zones in $C_{i+1}$.
	We build the sequence of valuations $v_0,v_1,\dots, v_n$ with $v_0=v'$ and $v_n=v_{i+1}$ such that $(\SS',v_\ell)\abconftrans^*(\SS', v_{\ell+1})$, for all $1\leq \ell < n$. 
	 For all $1\leq \ell\leq n$, let $k_\ell=r_{q_\ell}(i+1)$.
	We prove by induction that, for all $1\leq \ell \leq n$, $v_\ell(\counter_{q_{\ell'},k_{\ell'}} )=v_{i+1}(\counter_{q_{\ell'},k_{\ell'}})$, and $v_\ell(\counter_{q_{\ell'}})=v_{i+1}(\counter_{q_{\ell'}})$ 
	for $1\leq \ell'\leq \ell$, and
	for all other counter $v_\ell(\counter)=v'(\counter)$. 
	
	Observe that for all $1\leq\ell \leq n$, $k_\ell\notin r'_{q_\ell}(\nextactionindexset{\rho,i+1,q_\ell})$.  Indeed, by construction, if $\nextaction{\rho,i+1,e_q}=(q_\ell,j)$, then $r'_{q_\ell}(j)\neq r_{q_\ell}(i+1)$. 
	
	Let $v(\counter_{q_1, k_1})=x$. Then $(\SS', v)(\abconftransup{\delta})^x (\SS', v_1)$ with $\delta(\counter_{q_1})=+1$, $\delta(\counter_{q_1,k_1})=-1$ and $\delta(\counter)=0$ for all
	other $\counter$. Since $k_1\notin r'_{q_\ell}(\nextactionindexset{\rho,i+1,q_1})$, $(q_1,k_1)\notin\lab{\SS'}$ (due to the fact that $(\SS',v_{i+1})$ is in $\abconffrom{\rho}{i+1}$), and $(\SS',\delta, \SS')\in \Delta_e$. Moreover, $v_1(\counter_{q_1})=v'(\counter_{q_1})+v'(\counter_{q_1,k_1})=v_{i+1}(\counter_{q_1})$ and $v_1(\counter_{q_1,k_1})=0=v_{i+1}(\counter_{q_1,k_1})$. 
	
	Let $\ell\geq 1$ and assume that the property is true for $1\leq\ell' < \ell$. We can show in a similar way that for $\delta(\counter_{q_\ell})=+1$, $\delta(\counter_{q_\ell,k_\ell})=-1$ and $\delta(\counter)=0$
	for all other counter $\counter$, we have $(\SS',\delta,\SS')\in\Delta_e$ and $(\SS',v_{\ell-1})(\abconftransup{\delta})^{x'} (\SS', v_\ell)$, with $x'=v(\counter_{q_\ell,k_\ell})$. Hence, $v_{\ell}(\counter_{q_\ell})=
	v_{\ell-1}(\counter_{q_\ell})+v_{\ell-1}(\counter_{q_\ell,k_\ell})$. By induction hypothesis, $v_{\ell}(\counter_{q_\ell})=v'(\counter_{q_\ell})+v'(\counter_{q_\ell,k_\ell})=v_{i+1}(\counter_{q_\ell})$.
	Moreover $v_{\ell}(\counter_{q_\ell,k_\ell})=0=v_{i+1}(\counter_{q_\ell,k_\ell})$.  For all other counter $\counter$, $v_\ell(\counter)=v_{\ell-1}(\counter)$, which proves the property.
	
	Thus, we have proved that $(\SS', v')\abconftrans^* (\SS', v_n)$ with $v_n(\counter_{q_\ell,k_\ell})=v_{i+1}(\counter_{q_\ell,k_\ell})$, $v_n(\counter_{q_\ell})=v_{i+1}(\counter_{q_\ell})$ for all $1\leq \ell\leq n$, 
	and $v_n(\counter)=v'(\counter)$ for all other counter $\counter$, hence $v_n=v_{i+1}$. 
\end{proof}

\section{Proofs of \cref{subsec:lower-bound-target-waitonly}}
\label{app:lower-bound-target-waitonly}

The proof of~\cref{th:synchro-ack-complete} relies on the following lemma. 
\begin{lemma}\label{lemma:lower-bound-target-waitonly:completeness}
	There exist $C_0\in\mathcal{I}$ and $C_f\in \CC$ such that $C_0 \trans^\ast C_f$ in $\PP_\mathcal{V}$ and  $C_f(\aproc) = q_f$ for all $\aproc \in [1, ||C_f||]$ \emph{if and only if} $(\ell_0, \mathbf{0}) \abconftrans^\ast (\ell_f, \mathbf{0})$ in $\mathcal{V}$.
\end{lemma}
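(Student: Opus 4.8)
The plan is to prove the two directions of the equivalence separately, both relative to the protocol $\PP_\mathcal{V}$ of \cref{fig:prot-from-vass} (without the dashed edge, but with the thick edges).

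For the ``if'' direction, assume $(\ell_0, \mathbf{0}) \abconftrans^\ast (\ell_f, \mathbf{0})$ in $\mathcal{V}$, witnessed by a run $(\ell_0, v_0) \abconftrans \cdots \abconftrans (\ell_n, v_n)$ with $v_0 = v_n = \mathbf{0}$. I would set $K = \max\big(1,\, \max_{0 \le i \le n} \sum_{\counter \in \Counters} v_i(\counter)\big)$, take $C_0$ to be the initial configuration with $K+1$ processes, and build the execution of the proof sketch of \cref{th:synchro-ack-complete}: $K$ processes each broadcast $!!\$$ and move to $q_1$ while one process (the leader) stays in $\qinit$; the leader then broadcasts $!!\textsf{start}$, entering $\ell_0$ and pushing every process of $q_1$ into $\textsf{zero}$. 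The heart of this direction is an induction on $i$ proving that the network can reach a configuration in which the leader is in $\ell_i$, exactly $v_i(\counter)$ processes are in $\textsf{unit}_{\counter}$ for each $\counter$, and the remaining $K - \sum_{\counter} v_i(\counter) \ge 0$ processes are in $\textsf{zero}$. The step is direct: a VASS increment of $\counter$ is realised by some process of $\textsf{zero}$ broadcasting $!!\textsf{inc}_{\counter}$, which the leader is forced to receive along the matching transition $(\ell_i, ?\textsf{inc}_{\counter}, \ell_{i+1})$ while every other process ignores it, being in the action states $\textsf{zero}$ or $\textsf{unit}_{\counter'}$; a decrement is handled symmetrically via a process of $\textsf{unit}_{\counter}$, which exists because the VASS semantics guarantees $v_i(\counter) > 0$. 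From the invariant at $i = n$ all $K$ non-leader processes are in $\textsf{zero}$; they broadcast $!!\textsf{end}$ one by one to reach $\textsf{z-end}$ (the first such broadcast also moving the leader from $\ell_f$ to $\ell'_f$), and the leader finally broadcasts $!!\textsf{verif}$, gathering all $K+1$ processes in $q_f$; this configuration is the desired $C_f$.

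For the ``only if'' direction, let $\rho$ be an execution from $C_0 \in \mathcal{I}$ to some $C_f$ with $C_f(\aproc) = q_f$ for all $\aproc$. I would proceed through a chain of ``no-cheating'' observations. First, $\textsf{verif}$ is broadcast exactly once along $\rho$: at least once, because $q_f$ has no other incoming transition; at most once, because a second broadcast would send the process that first broadcast it --- which then sits in $q_f$ and leaves $q_f$ only on receiving $\textsf{verif}$ --- into the sink $\textsf{err}$. Second, exactly one process ever enters $\PP'_\mathcal{V}$: its only entry is $(\qinit, !!\textsf{start}, \ell_0)$ and its only exit is $(\ell_f, ?\textsf{end}, \ell'_f)$ followed by $(\ell'_f, !!\textsf{verif}, q_f)$, so two processes inside $\PP'_\mathcal{V}$ would each need their own broadcast of $\textsf{verif}$ to escape and reach $q_f$, contradicting the first point. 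Hence $\textsf{start}$ is broadcast exactly once, at some step $\tau_s$ by a process $\aproc_0$, and every other process must be in $q_1$ at $\tau_s$ --- otherwise it stays trapped in $\{\qinit, q_1\}$ forever and never reaches $q_f$. Consequently, immediately after $\tau_s$ the configuration of $\rho$ is: $\aproc_0$ in $\ell_0$, every other process in $\textsf{zero}$, none in any $\textsf{unit}_{\counter}$ --- the faithful encoding of the VASS configuration $(\ell_0, \mathbf{0})$. Along the suffix of $\rho$ after $\tau_s$ I then track the pair $(\ell, v)$ where $\ell$ is $\aproc_0$'s state inside $\PP'_\mathcal{V}$ and $v(\counter)$ is the number of processes in $\textsf{unit}_{\counter}$. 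Since every location of $\PP'_\mathcal{V}$ has $\textsf{inc}_{\counter}, \textsf{dec}_{\counter} \in R(\ell)$ (via the $T_\mathcal{V}$-transitions to $\frownie$), $\aproc_0$ is forced to receive \emph{every} broadcast of such a message, so these broadcasts occur exactly at $\aproc_0$'s moves inside $\PP'_\mathcal{V}$ and each changes $v$ by precisely the effect of the message. Because $\aproc_0$ must still reach $q_f$ it never enters $\frownie$, so each of its moves $\ell \to \ell'$ uses a transition $(\ell, ?\textsf{inc}_{\counter}, \ell')$ or $(\ell, ?\textsf{dec}_{\counter}, \ell')$ with $\ell' \neq \frownie$, which by definition of $T_\mathcal{V}$ witnesses a VASS transition $(\ell, \delta, \ell')$ with $\delta(\counter) = \pm 1$, and a $\textsf{dec}_{\counter}$ move requires a process in $\textsf{unit}_{\counter}$, i.e.\ $v(\counter) > 0$. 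Thus $\aproc_0$'s moves form a genuine run of $\mathcal{V}$ from $(\ell_0, \mathbf{0})$. Finally $\aproc_0$ receives $\textsf{end}$ only at $\ell_f$, and at that moment no process can be in any $\textsf{unit}_{\counter}$: emptying such a state afterwards requires a broadcast of $\textsf{dec}_{\counter}$, which would push the process that just broadcast $\textsf{end}$ (now in $\textsf{z-end}$) into the sink $\textsf{err}'$, leaving it unable to reach $q_f$. Hence the extracted run ends in $(\ell_f, \mathbf{0})$.

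I expect the ``only if'' direction to be the main obstacle: making the two uniqueness claims airtight (one broadcast of $\textsf{verif}$, one process inside $\PP'_\mathcal{V}$) and the final argument that no process is left in a $\textsf{unit}_{\counter}$ state --- exactly the places where the sink states $\frownie, \textsf{err}, \textsf{err}'$ and the thick edges of \cref{fig:prot-from-vass} do their work, and where a careful case analysis on the relative order of the broadcasts is needed.
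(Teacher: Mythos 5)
Your proposal is correct and follows essentially the same route as the paper's proof: the same explicit simulation with $K+1$ processes and the same invariant-based induction for the forward direction, and for the converse the same chain of arguments --- $\textsf{verif}$ is broadcast exactly once, hence exactly one process enters $\PP'_{\mathcal{V}}$, hence all other processes sit in $\textsf{zero}$ right after the unique $\textsf{start}$, and the first $\textsf{end}$ broadcast must find the leader in $\ell_f$ with no process in any $\textsf{unit}_{\counter}$ --- followed by extracting the VASS run from the leader's moves. (Your choice $K=\max(1,\cdot)$ even covers the degenerate case $\ell_0=\ell_f$ with an all-zero run, which the paper's choice of $K$ leaves implicit.)
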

\begin{proof}[Proof of \cref{lemma:lower-bound-target-waitonly:completeness}]
	Let the run of the VASS be $(\ell_0, v_0) \abconftrans (\ell_1, v_1) \abconftrans \cdots \abconftrans (\ell_n, v_n)$ with $v_0 = v_n = \mathbf{0}$ and $\ell_n = \ell_f$. We denote $K = \max_{0 \leq i \leq n} |v_i|$ where $|v_i| = \sum_{\counter \in \Counters} v_i(\counter)$. 
	
	We build an execution of the protocol starting from the initial configuration $C_0$ with $K+1$ processes. The first part of the execution goes as follows.
	
	$C_0 \transup{1, (\qinit, !! \$, q_1)} C_1 \transup{2, (\qinit, !! \$, q_1)} \cdots \transup{K, (\qinit, !!\$, q_1)} C_{K} \transup{K+1,(\qinit, !! \textsf{start}, \ell_0)} C_{K+1} $. Observe that $C_{K+1}(j) = \textsf{zero}$ for all $1 \leq j \leq K$ and $C_{K+1}(K+1) = \ell_0$.
	
	We rename $C_{K+1}$ to $C'_{0}$. 
	
	For all $0\leq i \leq n$, let $\mathbb{C}(\ell_i,v_i)=\set{C\in \CC\mid C(K+1)=\ell_i, C([1,K])=\set{\textsf{zero}, \textsf{unit}_\counter\mid\counter\in\Counters},
	\textrm{ and for all $\counter\in\Counters$}, |C^{-1}(\textsf{unit}_\counter)|=v_i(\counter)}$ be the set of protocol configurations corresponding to 
	$(\ell_i,v_i)$. 
	We show by induction on $i$ that, for all $0\leq i \leq n$, for all $C\in\mathbb{C}(\ell_i,v_i)$, $C'_0\trans^\ast C$. 
	
	For $i = 0$, observe that $\mathbb{C}(\ell_0,v_0)=\set{C'_0}$ hence the property trivially holds.
	Let $ 0 \leq i <n$ and $(\ell_i, \delta_i, \ell_{i+1}) \in \Delta$ such that $(\ell_i, v_i) \abconftransup{} (\ell_{i+1}, v_{i+1})$ and $v_{i+1} = v_i + \delta_i$. 
	Let $\counter_i$ be the unique counter such that $\delta_i(\counter_i)\in\set{1,-1}$. 	 Let now $C\in\mathbb{C}(\ell_{i+1}, v_{i+1})$. 
	
	\textbf{Case $\delta_i(\counter_i) = 1$.} 
	Then $v_{i+1}(\counter_i)=v_{i}(\counter_i)+1>0$, and since $C\in\mathbb{C}(\ell_{i+1}, v_{i+1})$, there exists $1\leq \aproc\leq K$ such that 
	$C(\aproc)=\textsf{unit}_{\counter_i}$. Let then $C_i\in \CC$ such that $C_i(K+1)=\ell_i$, $C_i(\aproc)=\textsf{zero}$ and for all other process $1\leq j\leq K$, $C_i(j)=C(j)$. 
	By definition of the protocol, $(\ell_i, ? \textsf{inc}_{\counter_i}, \ell_{i+1}) \in T$, hence $C_i \transup{\aproc, (\textsf{zero}, !! \textsf{inc}_{\counter_i}, \textsf{unit}_{\counter_i})}  C$.
	Furthermore, it is easy to see that,
	since $|C_i^{-1}(\textsf{unit}_{\counter_i})|=|C^{-1}(\textsf{unit}_{\counter_i})|-1=v_{i+1}(\counter_i)-1=v_i(\counter_i)$, and for all other counter $\counter$,
	$|C_i^{-1}(\textsf{unit}_{\counter})|=|C^{-1}(\textsf{unit}_{\counter})|=v_{i+1}(\counter)=v_i(\counter)$, $C_i\in\mathbb{C}(\ell_i,v_i)$. 
	Hence, by induction hypothesis, $C'_0\trans^\ast C_i\trans C$. 
%
	
	\textbf{Case $\delta_i(\counter_i) = -1$.} Then $v_{i+1}(\counter_i) =v_i(\counter_i)-1<K$ and there exists a process $\aproc$ such that $C(\aproc) =\textsf{zero}$. Let $C_i\in \CC$ such that $C_i(K+1)=\ell_i$, $C_i(\aproc)=\textsf{unit}_{\counter_i}$ and for all other process $1\leq j\leq K$, $C_i(j)=C(j)$. 
By definition of the protocol, $(\ell_i, ? \textsf{dec}_{\counter_i}, \ell_{i+1}) \in T$, hence $C_i \transup{\aproc, (\textsf{unit}_{\counter_i}, !!\textsf{dec}_{\counter_i}, \textsf{zero}_{})}  C$. Furthermore, it is easy to see that,
	since $|C_i^{-1}(\textsf{unit}_{\counter_i})|=|C^{-1}(\textsf{unit}_{\counter_i})|+1=v_{i+1}(\counter_i)+1=v_i(\counter_i)$, and for all other counter $\counter$,
	$|C_i^{-1}(\textsf{unit}_{\counter})|=|C^{-1}(\textsf{unit}_{\counter})|=v_{i+1}(\counter)=v_i(\counter)$, $C_i\in\mathbb{C}(\ell_i,v_i)$. 
	Hence, by induction hypothesis, $C'_0\trans^\ast C_i\trans C$. 

%
%
	
	Then, since $\mathbb{C}(\ell_n,v_n)=\set{C'_n}$ with $C'_n(K+1)=\ell_n=\ell_f$, and $C'_n(j)=\textsf{zero}$ for all $1\leq j\leq K$, we have that 
	$C'_0\trans^\ast C'_n$.
	The execution ends in the following way: 
	$C'_n \transup{1, (\textsf{zero}, !!\textsf{end}, \textsf{z-end})} C'_{n+1} \transup{2, (\textsf{zero}, !!\textsf{end}, \textsf{z-end})} \cdots \transup{K, (\textsf{zero}, !!\textsf{end}, \textsf{z-end})} C'_{n+K} \transup{K+1,(s'_f,  !!\textsf{verif}, q_f)}  C'_{n+K+1}$. One can observe that for all process $\aproc$, $C'_{n+K+1}(\aproc)=q_f$ which concludes the proof.

We now prove 
the other direction, which relies on the following intermediate lemma. 
\begin{lemma}\label{lemma:target-ack-hard-two-indices-broadcast}
	Let $C_0 \trans^\ast C_f$ such that $C_0 \in \mathcal{I}$ and for all $\aproc \in [1, ||C_f||]$, $C_f(\aproc) = q_f$. Then, the message $\textsf{verif}$
	is broadcast exactly once.
\end{lemma}

\begin{proof}
	We let $C_0 \trans C_1 \trans \cdots \trans C_n =C_f$ be the execution that gathers everyone in $q_f$. The broadcast of $\textsf{verif}$ can be done
	only with the transition $(\ell'_f, !!\textsf{verif}, q_f)$.
	
	First, observe that there must exist an index $0 \leq j \leq n$ such that $C_j \transup{\aproc_1, (\ell'_f, !! \textsf{verif}, q_f)} C_{j+1}$; otherwise, $q_f$ is unreachable. Let $j_1$ denote the first such index. 
	If there exists an index $j_2 > j_1$ such that $C_{j_2}\transup{\aproc_2, (\ell'_f, !! \textsf{verif}, q_f)} C_{j_2+1}$, then, by construction of the VASS, $\aproc_2\neq \aproc_1$ (there is now way for $e_1$ to go back to $\ell'_f$), and $C_{j_2+1}(\aproc_1) = \textsf{err}$. Then, $C_n(\aproc_1) =\textsf{err}$ which contradicts the definition of the execution.
%
%
\end{proof}

	We let $C_0 \trans C_1 \trans \cdots \trans C_n =C_f$ be the execution that gathers all processes in $q_f$.
	From \cref{lemma:target-ack-hard-two-indices-broadcast}, there exists a unique index $j$ such that
	$C_j \transup{\aproc_1, (\ell'_f, !! \textsf{verif}, q_f)} C_{j+1}$.
	Hence, there exists a unique process $\aproc_1$ evolving in $\PP_{\mathcal{V}}$; indeed, any process in $\PP_{\mathcal{V}}$ needs to broadcast $\textsf{verif}$
	in order to reach $q_f$, and the only way to broadcast $\textsf{verif}$ is to go through $\PP_{\mathcal{V}}$. So there exists a unique index $j_1 < j$ such that $C_{j_1} \transup{\aproc_1, (\qinit, !!\textsf{start}, \ell_0)} C_{j_1 +1}$. Hence, all other processes receive the message $\textsf{start}$ in $C_{j_1}$; otherwise they will never reach $q_f$. As a consequence, for each process $\aproc \neq \aproc_1$, $C_{j_1 +1}(\aproc) = \textsf{zero}$.

	
	

	We now prove that there exists an index $j_1 < j_2 < j$ such that $C_{j_2}(\aproc_1) = \ell_f$ and for any other process $\aproc \neq \aproc_1$, $C_{j_2}(\aproc) = \textsf{zero}$.
	First note that, unless $\ell_0=\ell_f$, there must be at least one process $\aproc$ such that $C_{j_1+1}(\aproc)=\textsf{zero}$. Otherwise, no process ever broadcasts a message, and $\aproc$ would not
	be able to reach $\ell_f$ and then $\ell'_f$. 
	
	We claim that $j_2$ is the first index such that $C_{j_2} \transup{\aproc_2, (\textsf{zero}, !!\textsf{end}, \textsf{z-end})} C_{j_2+1}$ for some process $\aproc_2$. This transition must occur during
	the execution, otherwise, processes other than $\aproc_1$ would be unable to reach $q_f$ during the unique broadcast of $\textsf{verif}$ in the execution.  First, we prove that $C_{j_2}(\aproc_1) = \ell_f$.
	If it is not the case, it needs to receive some messages in the set $\set{\textsf{inc}_\counter, \textsf{dec}_\counter \mid \counter \in \Counters}$ before broadcasting message $!!\textsf{verif}$. Observe from \cref{lemma:target-ack-hard-two-indices-broadcast}, there is exactly one broadcast of $\textsf{verif}$, and so the sending of one message in $\set{\textsf{inc}_\counter, \textsf{dec}_\counter \mid \counter \in \Counters}$ is received by $\aproc_2$ which goes to state $\textsf{err}'$. As a consequence, it never reaches $q_f$ which contradicts definition of $C_f$.  Hence, $C_{j_2}(\aproc_1) = \ell_f$.
	We prove now that for all other process $\aproc \neq \aproc_1$, $C_{j_2}(\aproc) = \textsf{zero}$. If it is not the case, since $\textsf{end}$ is not broadcast before, there is one process $\aproc'$ such that $C_{j_2}(\aproc') \in \set{\textsf{unit}_\counter \mid \counter \in \Counters}$. In that case, for the same reason as before, it will need to broadcast a message from $\set{\textsf{dec}_{\counter} \mid \counter \in \Counters}$ while $\aproc_2$ is in $\textsf{z-end}$. This would lead $\aproc_2$ to $\textsf{err}'$ and contradicts the definition of $C_f$.

	Hence, $C_{j_2}(\aproc_1) = \ell_f$ and for all other process $\aproc \neq \aproc_1$, $C_{j_2}(\aproc) = \textsf{zero}$.
	
	We rename the configurations visited during this part of the execution to $C_{j_1+1}=C'_0\trans C'_1\trans \cdots \trans C'_k=C_{j_2}$, with $k=j_2-j_1-1$. First observe that since $C'_0(\aproc_1)=\ell_0$,
	and $C'_k(\aproc_1)=\ell_f$, the structure of the VASS implies that $C'_i(\aproc_1)\in\Loc$ for all $0\leq i\leq k$. Similarly, since, for all $\aproc\neq\aproc_1$, $C'_0(\aproc)=C'_k(\aproc)=\textsf{zero}$,
	the structure of the VASS implies that $C'_i(\aproc)\in\set{\textsf{zero}, \textsf{unit}_\counter\mid\counter\in\Counters}$, for all $\aproc\neq\aproc_1$, and for all $0\leq i\leq k$. We build now, by induction on 
	$0\leq i\leq k$, an execution $(\ell_0, v_0)\abconftrans (\ell_1,v_1)\abconftrans \dots \abconftrans (\ell_k, v_k)$ of the VASS such that $\ell_i=C'_i(\aproc_1)$, and $v_i(\counter)=|C_i'^{-1}(\textsf{unit}_\counter)|$ for all $\counter \in \Counters$.
	
	With $v_0=\mathbf{0}_\Counters$, the definition of $C'_0$ allows to conclude immediately for the case $i=0$. Let now $0\leq i < k$ and assume the property true. Let $C'_i \transup{\aproc, t} C'_{i+1}$
	the transition taken in the execution of the protocol. Since, by construction of the protocol, $C'_i(\aproc_1)\in Q_W$, we know that $\aproc\neq\aproc_1$, and then 
	$t\in\set{(\textsf{zero}, !! \textsf{inc}_\counter, \textsf{unit}_\counter), (\textsf{unit}_\counter, !!\textsf{dec}_\counter, \textsf{zero})\mid\counter\in\Counters}$.
	We distinguish two cases:
	
	\begin{itemize}
	\item $t = (\textsf{zero}, !!\textsf{inc}_\counter, \textsf{unit}_\counter)$ for some $\counter \in \Counters$. In that case, by definition of the protocol, $(\ell_i, \delta, \ell_{i+1}) \in \Delta$ with 
	$\delta(\counter) = 1$ and $\delta(\counter') = 0$ for all other counters $\counter'$. Furthermore, $|C'^{-1}_{i+1}(\textsf{unit}_\counter)| = |C'^{-1}_{i}(\textsf{unit}_\counter)| +1 = v_i(\counter) + 1$ by induction
	hypothesis, and for all other counters $\counter'$, $|C'^{-1}_{i+1}(\textsf{unit}_{\counter'})| = |C'^{-1}_{i}(\textsf{unit}_{\counter'})|=v_i(\counter')$, by induction hypothesis. Hence, $(\ell_i, v_i) \abconftrans (\ell_{i+1}, v_{i+1})$ with $v_{i+1}(\counter) = |C_{i+1}'^{-1}(\textsf{unit}_\counter)|$ for all $\counter \in \Counters$.
	
	\item $t = (\textsf{unit}_\counter, !!\textsf{dec}_\counter, \textsf{zero})$ for some $\counter \in \Counters$. In that case, by definition of the protocol, $(\ell_i, \delta, \ell_{i+1}) \in \Delta$ with $\delta(\counter) = -1$ and $\delta(\counter') = 0$ for all other counters $\counter'$. Furthermore, $|C'^{-1}_{i+1}(\textsf{unit}_\counter)| = |C'^{-1}_{i}(\textsf{unit}_\counter)| -1 = v_i(\counter) - 1 \geq 0$, by induction hypothesis.
	For all other counters $\counter'$, $|C'^{-1}_{i+1}(\textsf{unit}_{\counter'})| = |C'^{-1}_{i}(\textsf{unit}_{\counter'})|=v_i(\counter')$, by induction hypothesis. Hence, $(\ell_i, v_i) \abconftrans (\ell_{i+1}, v_{i+1})$ where $v_{i+1}(\counter) = |C_{i+1}'^{-1}(\textsf{unit}_\counter)|$ for all $\counter \in \Counters$.
	\end{itemize}
	
	Hence, $(\ell_0, v_0) \abconftrans^\ast (\ell_k, v_k)$ with $\ell_k = C_k'(\aproc_1) = C_{j_2}(\aproc_1) = \ell_f$ and $v_k(\counter ) = |C_k'^{-1}(\textsf{unit}_\counter)| = 0$ for all counters $\counter$, which concludes the proof.

\end{proof}

\section{Proof of \cref{lemma:synchro-action:mutu-reach:reduction}}

We prove that if there are runs $(s_0, \mathbf{0})\abconftrans^\ast(s_f, \mathbf{0})$ and 
$(s_f, \mathbf{0})\abconftrans^\ast(s_0, \mathbf{0})$ in the VASS $\mathcal{V}_\PP$, there is an execution in $\PP$ that leads all the processes in $q_f$. 

The idea is that from a run of the VASS going from $(s_0, \mathbf{0})$ to $(s_f, \mathbf{0})$, one can build a run from $(s_0, \mathbf{0})$ to $(s_f, \mathbf{0})$ \emph{that never takes the transition $(s_f, \mathbf{0}, s_0)$}. Hence this new run will necessarily be of the form:
$(s_0, \mathbf{0}) \abconftrans^\ast (\emptyset, v_{init}) \abconftrans^\ast (\emptyset, v_f) \abconftrans (s'_f, v'_f) \abconftrans (s_f, v_f) \abconftrans^\ast (s_f, \mathbf{0})$ with no visit of $(s_f, \mathbf{0}, s_0)$. Here, $v_{init}(\counter) = 0$ for all counters $\counter \neq \counter_{\qinit}$ and $v_f(\counter) = 0$ for all counters $\counter \neq \counter_{q_f}$. An easy adaptation of the proof of~\cref{lemma:soundness} allows then to build an execution of the protocol from $
C_0 \in \mathcal{I}$ to $C_f$ such that for all $\aproc \in [1, ||C_f||]$, $C_f(\aproc) = q_f$.

We now focus on transforming the run of the $\mathcal{V}_\PP$.  From the VASS construction, we can decompose the run into: 
$(s_0, v_0) \abconftrans^\ast (s_f, v_1)\abconftrans (s_0, v_1)\abconftrans^\ast (s_f, v_2)\dots
(s_0, v_{n-1}) \abconftrans^\ast (s_f,v_n)$ with $v_0 = v_n = \mathbf{0}$.
Moreover, for all $0 \leq i < n$ there exist vectors $v'_i$, $v''_i$ and $v'''_i$ such that:
$(s_0, v_i) \abconftrans^\ast (s_0, v'_i) \abconftrans (\emptyset, v'_i) \abconftrans^\ast (\emptyset, v''_{i}) \abconftrans (s'_f, v'''_i) \abconftrans (s_f, v''_i) \abconftrans^\ast (s_f,v_{i+1})$, with (i) $v'_i(\counter_{\qinit}) = v_i(\counter_{\qinit}) + k$ for some $k \in \nat$ and for all other counters $\counter$, $v'_i(\counter) = v_i(\counter)$, and (ii) $v''_i(\counter_{q_f}) = v_{i+1}(\counter_{q_f}) + j$ for some $j \in \nat$ and for all other counters $\counter$, $v''_i(\counter) = v_{i+1}(\counter)$.

The key observation is to note that for all $s , s' \in \Loc$, for all counter valuations $v_1, \Delta_1, v_2$ 
such that $\mathbf{0}\preceq\Delta_1$, with $\preceq$ the component-wise order, if $(s, v_1) \abconftrans^\ast (s',v_2)$ then $(s, v_1+\Delta_1) \abconftrans^\ast (s', v_2 + \Delta_1)$.
By reordering, we construct a run where all increments of $\counter_{\qinit}$ occur at the beginning and all decrements of $\counter_{q_f}$ occur at the end. 

For all $0\leq i<n$, let $\Delta_i$ be defined by:
\begin{itemize}
\item $\Delta_i(\counter_{\qinit})=\sum_{i<j< n} v'_j(\counter_{\qinit})-v_j(\counter_{\qinit})$
\item $\Delta_i(\counter_{q_f})=\sum_{0\leq j<i} v''_j(\counter_{q_f})-v_{j+1}(\counter_{q_f})$
\item $\Delta_i(\counter)=0$ for all other $\counter\in\Counters$.
\end{itemize}

Observe first that $\mathbf{0}\preceq \Delta_i$, and that, for all $0\leq i< n$, $v''_i+\Delta_i=v'_{i+1}+\Delta_{i+1}$. Then, from the different parts of the run 
$(\emptyset, v'_i)\abconftrans^\ast (\emptyset, v''_i)$, we deduce that $(\emptyset, v'_i+\Delta_i)\abconftrans^\ast (\emptyset, v''_i+\Delta_i)$, and we can hence build
an execution $(s_0, \mathbf{0})\abconftrans^\ast (s_0, v'_0+\Delta_0)\abconftrans (\emptyset, v'_0+\Delta_0)\abconftrans^\ast (\emptyset, v'_1+\Delta_1)\dots\abconftrans^\ast
(\emptyset, v''_{n-1}+\Delta_{n-1})$. 

Since in the original run $(\emptyset, v'_{n-1}) \abconftrans^\ast (\emptyset, v''_{n-1}) \abconftrans (s'_f, v'''_{n-1}) \abconftrans (s_f, v''_{n-1}) \abconftrans^\ast (s_f,v_n)$, with $v_n=\mathbf{0}$, we deduce that $v''_{n-1}(\counter_{\qinit})=\mathbf{0}$, $v''_{n-1}(\counter_{q_f})>0$, and $v''_{n-1}(\counter)=0$ for all
other $\counter\in\Counters$. Moreover, $\Delta_{n-1}(\counter)=0$ for all $\counter\neq \counter_{q_f}$, hence can continue the execution 
$(\emptyset, v''_{n-1}+\Delta_{n-1})\abconftrans (s'_f, v'''_{n-1}+\Delta_{n-1})\abconftrans (s_f, v''_{n-1}+\Delta_{n-1})\abconftrans^\ast (s_f, \mathbf{0})$. 

This run avoids the transition $(s_f, \mathbf{0}, s_0)$.

Observe that if $q_f \in Q_W$, then this transformation is not possible, as run segments between $s_0$ and $s_f$ would be of the form $(s_0, v) \abconftrans^\ast (\emptyset, v_1) \abconftrans^\ast (\set{S_f}, v_2) \abconftrans (s'_f, v_3) \abconftrans (s_f, v_3)$. As a result, we can not shorten a run by gluing a configuration with location $\set{S_f}$ and one with location $\emptyset$.

\section{Proofs of \cref{sec:pspace-hard}}

We prove here the soundness and correctness of the reduction presented in \cref{sec:pspace-hard}.
Denote $\mathcal{L}(\mathcal{A})$ the language of automaton $\mathcal{A}$, i.e. $\mathcal{L}(\mathcal{A}) = \set{w\mid \Delta^\ast(q^0, w) = q^f}$ where $q^0$ is its initial state and $q^f$ its final state.
\begin{lemma}
	If there exists a word $w \in \bigcap_{1\leq i \leq n} \mathcal{L}(\mathcal{A}_i)$, then there exists an execution $\rho \in \Lambda_\omega$ and $\aproc \in [1, \nbagents{\rho}]$ such that for all $i \in \nat$, there exists $j > i$ such that $\rho_j \transup{\aproc, (q_{n+2}, !! \smiley, q_1)} \rho_{j+1}$.
\end{lemma}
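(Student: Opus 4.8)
The plan is to construct explicitly, from a word $w$ in the intersection, an infinite execution on $n+2$ processes that repeats a fixed ``round'' forever, the transition $t_f$ being taken once per round. Fix $w = a_1 \cdots a_m$ (I use $m$ for its length to avoid clashing with the number $n$ of automata) so that $\Delta_i^\ast(q_i^0, w) = q_i^f$ for every $i$. I split the $n+2$ processes into: a \emph{recursion process} $e_R$, which will traverse the gadget of \cref{fig:pspace-hard-rec}; one \emph{automaton process} $e_i$ per $\mathcal{A}_i$, which will live in the gadget of \cref{fig:pspace-hard-auto}; and one \emph{word process} $e_W$, which stays forever in $\qinit$ and feeds the letters of $w$.

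First I would fix the reference configuration $C^\star$ in which $e_R$ is in $q_1$, each $e_i$ is in $q_i^0$, and $e_W$ is in $\qinit$, and show that $C^\star$ is reachable from the initial configuration: $e_R$ broadcasts $\#$ and moves to $q_1$ (nobody else receives it, since all other processes are still in $\qinit$, which receives no message), and then, for $i=1,\dots,n$ in turn, $e_i$ broadcasts its index $i$ and moves to $q_i^0$ (processes still in $\qinit$ cannot receive $i$; $e_R$ in $q_1$ cannot receive $i$; and a process $e_j$ already in $q_j^0$ is undisturbed because $i$ is none of the messages a state of the box $\PP_j$ reacts to).

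Then I would describe one round, which starts and ends in $C^\star$, in four phases. (a)~\emph{Word phase}: $e_W$ broadcasts $a_1,\dots,a_m$ one after another, staying in $\qinit$; since $\mathcal{A}_i$ is deterministic and complete, $e_i$ is forced to follow the unique run of $\mathcal{A}_i$ on $w$ and ends in $q_i^f$, while $e_R$ in $q_1$ ignores every letter of $A$. (b)~\emph{Separator phase}: $e_W$ broadcasts $\$$, moving $e_R$ from $q_1$ to $q_2$ and each $e_i$ from $q_i^f$ to $q_i^{\$}$. (c)~\emph{End phase}: for $i=1,\dots,n$ in \emph{increasing} order, $e_i$ broadcasts $\text{end}_i$ and moves to $q_i^{\text{e}}$; this moves $e_R$ from $q_{i+1}$ to $q_{i+2}$, it is ignored by every $e_j$ with $j>i$ (still in $q_j^{\$}$, a broadcasting state, hence unable to receive), and, crucially, by every $e_j$ with $j<i$ already in $q_j^{\text{e}}$, because $\text{end}_i \in \{\text{end}_{j+1},\dots,\text{end}_n\}$, so such a process does \emph{not} fall into $\frownie$. (d)~\emph{$t_f$-phase}: $e_R$, now in $q_{n+2}$, takes $t_f = (q_{n+2}, !!\smiley, q_1)$ and returns to $q_1$; this sends every $e_i$ from $q_i^{\text{e}}$ back to $q_i^0$ via $(q_i^{\text{e}}, ?\smiley, q_i^0)$, while $e_W$ in $\qinit$ ignores $\smiley$. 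The configuration reached is again $C^\star$.

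The witness is then $\rho \in \Lambda_\omega$ consisting of the reachability prefix above followed by $\omega$ copies of the round, together with the process $e_R$, which takes $t_f$ exactly once in each round and hence infinitely often; so for every $i \in \nat$ some round starts after index $i$ and provides the required $j>i$. No step is genuinely hard; the only point requiring care is checking that no process is ever forced into the sink $\frownie$, which reduces to (i)~the Wait-Only discipline (a state that broadcasts, such as $q_i^{\$}$, cannot receive), (ii)~the increasing order of the $\text{end}_i$ broadcasts in phase~(c), and (iii)~the fact that $\qinit$ and $q_1$ react only to their intended alphabets — all of which I would verify transition by transition against \cref{fig:pspace-hard-rec,fig:pspace-hard-auto}.
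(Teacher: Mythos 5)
Your construction is correct and is essentially the paper's own proof: $n{+}2$ processes (one per automaton entering $q_i^0$, one in the $q_1$-gadget, one staying in $\qinit$ to broadcast $w$, then $\$$, then the $\text{end}_i$ in increasing order, then $\smiley$), yielding a cycle through the reference configuration that fires $t_f$ once per round. The only cosmetic difference is the order of the initial broadcasts ($\#$ before the indices $i$ rather than after), which is immaterial since none of the affected states can receive those messages.
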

\begin{proof}
	We denote $w = a_1\dots a_k$.
	Let $C_0$ be the initial configuration with $n+2$ processes. We form the following prefix of execution:
	\[
		C_0 \transup{1, (\qinit, !!1, q_1^0)} C_1 \transup{2, (\qinit, !!2, q_2^0)} \cdots \transup{n, (\qinit, !!n, q_n^0)} C_n \transup{n+1, (\qinit, !!\#, q_1)} C_{n+1}
	\]
	$C_{n+1}$ is such that: $C_{n+1}(i) = q_i^0$ for all $i \in [1,n]$, and $C_{n+1}(n+1) = q_1$ and $C_{n+1}(n+2) = \qinit$.
	Next, consider the following sequence of configurations:
	\begin{align*}
		C_{n+1} &\transup{n+2, (\qinit, !!a_1, \qinit)} C_{n+2} \transup{n+2, (\qinit, !!a_2, \qinit)} \cdots \transup{n+2, (\qinit, !!a_k, \qinit)} C_{n+k+1}\\
		& \transup{n+2, (\qinit, !!\$, \qinit)} C_{n+k+2} \transup{1, (q_1^{\$}, !!\text{end}_1, q_1^e)} \cdots \transup{n, (q_n^{\$}, !!\text{end}_n, q_n^e)} C_{2n+k+2}\\
		&\transup{n+1, (q_{n+2}, !!\smiley, q_1)} C_{2n+k+3} 
	\end{align*}
As $a_1 \dots a_k = w \in \bigcap_{1\leq i \leq n} \mathcal{L}(\mathcal{A}_i)$, it holds that: $C_{n+k+1}(i) = q_i^f$ for all $i \in [1 , n]$. Hence, $C_{n+k +2}(i) = q_i^\$ $ for all $i \in [1, n]$. Furthermore, $C_{n+k+2}(n+1) = q_2$. We conclude that $C_{2n+k+3} = C_{n+1}$. Hence, we can form an infinite execution $\rho = C_0 C_1 \dots C_{n+1} \dots C_{2n+k+3} \dots C_{3n+k+5} \dots$ such that for all $i \in \nat$, $C_{n+1 +i(n+k+2)} = C_{n+1}$ and $C_{n +i(n+k+2)} \transup{n+1, (q_{n+2}, !! \smiley, q_1)} C_{n+1 + i(n+k+2)}$, which concludes the proof.

\end{proof}

\begin{lemma}
	If there exists an execution $\rho \in \Lambda_\omega$ and $\aproc \in [1, \nbagents{\rho}]$ such that for all $i \in \nat$, there exists $j > i$ such that $\rho_j \transup{\aproc, (q_{n+2}, !! \smiley, q_1)} \rho_{j+1}$, then there exists $w \in \bigcap_{1\leq i \leq n} \mathcal{L}(\mathcal{A}_i)$.
\end{lemma}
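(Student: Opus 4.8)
The plan is to make precise the informal argument sketched just before the statement. Fix an infinite execution $\rho\in\Lambda_\omega$ and a process $\aproc_0\in[1,\nbagents{\rho}]$ firing $t_f=(q_{n+2},!!\smiley,q_1)$ infinitely often. Every $\frownie$-state of $\PP$ is a sink, so the number of processes residing in a $\frownie$-state is non-decreasing along $\rho$ and bounded by $\nbagents{\rho}$; I would fix an index $i_0$ past which this number never changes and past $\aproc_0$'s first use of $t_f$. From $\rho_{i_0}$ on no process ever enters a $\frownie$-state, and $\aproc_0$ stays forever inside the recording gadget $\{q_1,\dots,q_{n+2}\}$ of \cref{fig:pspace-hard-rec} (its transitions never leave that gadget). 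Since $\#$ is broadcast only by $(\qinit,!!\#,q_1)$ and any process sitting in $\{q_1,\dots,q_{n+2}\}$ that receives $\#$ is sent to $\frownie$, the message $\#$ is never broadcast after $i_0$ (it would destroy $\aproc_0$); hence no further process enters the recording gadget, and — as entering it always destroys its previous occupants — at every index $\geq i_0$ that gadget contains exactly the single process $\aproc_0$. In particular, after $i_0$ the message $\smiley$, emitted only from $q_{n+2}$, is broadcast exactly when $\aproc_0$ fires $t_f$.

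Next I would take the third firing of $t_f$ by $\aproc_0$ after $i_0$, at index $j$, and call $j_{-1}$ and $j_{-2}$ the two preceding firings (all $>i_0$). In the cycle $(j_{-1},j)$ process $\aproc_0$ traverses $q_1\to q_2\to\cdots\to q_{n+2}$; since $\frownie$ is unreachable past $i_0$ and every $q_\ell$ with $\ell\geq 2$ can receive any message in $\Sigma$, the very first message broadcast after $\aproc_0$ enters $q_\ell$ must be the one labelling $q_\ell$'s intended reception. Thus $\aproc_0$ receives, in order, $\$$ at some index $a_0$, then $\text{end}_1,\dots,\text{end}_n$, each $\text{end}_i$ coming from a process moving $q_i^{\$}\to q_i^{e}$. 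While $\aproc_0$ sits at $q_1$ (that is, at all steps after the $\smiley$ of step $j_{-1}$ and before the $\$$ of step $a_0$), the only messages broadcast are letters of $A$: a $\$$ would have moved $\aproc_0$ off $q_1$ earlier; a $\#$ or an $\text{end}_{i'}$ would send $\aproc_0$ (at $q_1$) to $\frownie$; an $i'$ would send to $\frownie$ whichever process then occupies $Q_{i'}$ (one always does, by the next paragraph); and $\smiley$ is impossible, $\aproc_0$ not being the occupant of $q_{n+2}$. Let $w\in A^{\ast}$ be this sequence of $A$-letters.

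Finally I would check $w\in\bigcap_{1\le i\le n}\mathcal{L}(\mathcal{A}_i)$. The $\smiley$ broadcast at step $j_{-1}$ moves some process $p_i$ to $q_i^{0}$: a process lies at $q_i^{e}$ in $\rho_{j_{-1}}$, namely the one that fired $\text{end}_i$ during the cycle $(j_{-2},j_{-1})$, and it cannot have left $q_i^{e}$ meanwhile ($\smiley$ was not re-broadcast in between, and a move to $\frownie$ is excluded). Between $\rho_{j_{-1}+1}$ and $\rho_{a_0}$ the process $p_i$ stays inside $Q_i$ (it could leave only via $q_i^{f}\transup{?\$}q_i^{\$}$, and no $\$$ is broadcast in that window), and since $\mathcal{A}_i$ is complete it moves according to $\Delta_i$ on each broadcast letter of $w$ and ignores every other message, so at $\rho_{a_0}$ it is in state $\Delta_i^{\ast}(q_i^{0},w)$. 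On the other hand, when $\$$ is broadcast at step $a_0$, any process lying in $Q_i\setminus\{q_i^{f}\}$ would be forced to $\frownie$ (that is its only $?\$$-transition), which is excluded; hence every process in $Q_i$, in particular $p_i$, is already at $q_i^{f}$ in $\rho_{a_0}$. Therefore $\Delta_i^{\ast}(q_i^{0},w)=q_i^{f}$, i.e. $w\in\mathcal{L}(\mathcal{A}_i)$, for every $i$, so the intersection is non-empty.

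The real work, and the main obstacle, lies in the invariants of the first two paragraphs: that past $i_0$ the recording gadget has a single occupant, that each cycle of $\aproc_0$ forces the message pattern $\$,\text{end}_1,\dots,\text{end}_n,\smiley$, and that the occupants of $q_i^{0},q_i^{f},q_i^{\$},q_i^{e}$ move in lock-step with that cycle. These properties hold only \emph{eventually}, which is precisely why the whole argument must be carried out beyond $i_0$ and why the finiteness of the number of $\frownie$-moves is indispensable; once they are in place, the witness word $w$ simply falls out as the sequence of $A$-letters broadcast between the $\smiley$ of step $j_{-1}$ and the $\$$ of step $a_0$.
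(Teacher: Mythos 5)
Your proposal follows the same route as the paper's appendix proof: fix an index $i_0$ past which no process enters $\frownie$ (your monotone-counting argument and the paper's ``otherwise all processes end in the sink and the run deadlocks'' argument are interchangeable), look at consecutive firings of $t_f$ by $\aproc_0$ after $i_0$, derive the forced pattern $\$\cdot\text{end}_1\cdots\text{end}_n\cdot\smiley$ from the fact that $q_2,\dots,q_{n+1}$ receive every message, and conclude that the letters broadcast before the next $\$$ form a word under which every tracked blue-zone process goes from $q_i^0$ to $q_i^f$. The decomposition and the key steps coincide with the paper's.

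The one step that does not hold as you justify it is the single-occupancy invariant of the gadget of \cref{fig:pspace-hard-rec}: ``entering it always destroys its previous occupants'' is false for an occupant of $q_{n+2}$, which is an action state and receives nothing, so a process parked there survives every later $\#$ and can stay in the gadget forever; consequently, after $i_0$ the gadget may contain processes besides $\aproc_0$, and $\smiley$ may be broadcast at moments other than $\aproc_0$'s firings (harmlessly, e.g.\ while $\aproc_0$ sits at $q_1$ or at $q_{n+2}$). This matters for your identification of the witness: a parked process can emit $\smiley$ while $\aproc_0$ is still at $q_{n+2}$, moving the $q_i^e$-processes to $q_i^0$ early, after which letters can be broadcast before the firing at $j_{-1}$; then your $w$ (letters between $j_{-1}$ and $a_0$) need not satisfy $\Delta_i^\ast(q_i^0,w)=q_i^f$ — only $\Delta_i^\ast(q_i^0,uw)=q_i^f$ for a common prefix $u$, so the lemma survives but with a different witness word. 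To be fair, the paper's informal discussion asserts the same single-occupancy, and its formal proof makes the analogous silent identification when it equates the post-$\text{end}_n$ $\smiley$ with $\aproc_0$'s own firing; the robust repair, which is essentially what the rest of the paper's argument does, is to drop the occupancy claim and reason only from receptions (any stray broadcast kills $\aproc_0$ in $q_2,\dots,q_{n+1}$, or kills $\aproc_0$ at $q_1$ or a tracked blue-zone process, while stray $\smiley$'s are invisible to all relevant processes), starting the witness word at the last $\smiley$ — whoever emits it — preceding the decisive $\$$.
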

\begin{proof}
	Denote $\rho$ such an execution.
	We start by proving the following lemma.
	\begin{lemma}
		There exists $i_0 \in \nat$, from which no new processes reach state $\frownie$, in other words, for all process $\aproc$, if $\rho_{i_0}(\aproc) \neq \frownie$ then $\rho_i(\aproc) \neq \frownie$ for all $i > i_0$.
	\end{lemma}
	\begin{proof}
		We reason by contradiction. Otherwise, for all $i_0 \in \nat$, either all processes are on $\frownie$, or there exists an process not in $\frownie$ who reach it later. As $\frownie$ is a sink state and there is a finite number of processes, eventually all processes are on $\frownie$, leading to a deadlock configuration, which contradicts the fact that $\rho \in \Lambda_\omega$.
	\end{proof}
	We now denote $\aproc_0$ a process taking infinitely often transition $(q_{n+2}, !! \smiley, q_1)$.
	We easily see that for all $i \in \nat$, $\rho_i(\aproc_0) \neq \frownie$, as otherwise, $\rho_j(\aproc_0) \neq \frownie$ for all $j >i$.
	Consider now $i_1, i_2, i_3 \in \nat$ three indexes such that $i_0 < i _1 <i_2 <i_3 $.
	and $\rho_{i_j} \transup{\aproc_0, (q_{n+2}, !!\smiley, q_1)} \rho_{i_j +1}$ for $j \in \set{1,2,3}$.
	By construction of the protocol,
	there exist $\aproc_1, \dots, \aproc_n$ which respectively broadcast messages $\text{end}_1, \dots, \text{end}_n$ which are received by $\aproc_0$ between $i_1$ and $i_2$.
	Observe that from the protocol's construction, once process $\aproc_0$ is on state $q_2$, the only sequence of broadcast messages that does not put $\aproc_0$ (before it reaches $q_{n+2}$) in $\frownie$ is $\text{end}_1 \cdot \text{end}_2 \cdots \text{end}_n  $.
	Meaning that there exists $i_1 < i_4 < i_2$ such that $C_{i_4}(\aproc_0) = q_2$ and:
	\[
	C_{i_4} \transup{\aproc_1, (q_1^\$, !!\text{end}_1, q_1^e)} \transup{\aproc_2, (q_2^\$, !!\text{end}_2, q_2^e)} \cdots \transup{\aproc_n, (q_n^\$, !!\text{end}_1, q_n^e)} C_{i_4 +n}
	\]
	as any other behaviour would put $\aproc_0$ in $\frownie$. Observe that for all $i \in [1,n]$, $C_{i_4 +n}(\aproc_i) = q_i^e$.
	Note that $i_4 + n = i_2$ as if $C_{i_4 + n} \transup{(q, !!m, q')} C_{i_4 + n+ 1}$ with $m\neq \smiley$, then $C_{i_4 +n}(\aproc_n) = \frownie$. Hence:
	\begin{align*}
		&C_{i_2}(\aproc_i) = q_i^0 \textrm{ for all } i \in [1,n]\\
		&C_{i_2}(\aproc_0) = q_1
	\end{align*}
	

	Denote $i_5$ the next index from which $\$$ is broadcast (it exists as, in particular, $\aproc_0$ needs to take the cycle with $(q_2, ?\text{end}_1, q_3)$).
	
	As $i_3 > i_0$, no process $\aproc_i$ reaches $\frownie$, hence $\rho_{i_5}(\aproc_i) = q_i^f$ as otherwise, $\rho_{i_5 + 1}(\aproc_i) = \frownie$ for all $i\in [1,n]$. 
	
	Observe that between $i_3 +1$ and $i_5$, the only broadcasts are of the form $(\qinit, !!a, \qinit)$ with $a \in \Sigma$ as any other broadcast would put $\aproc_0$ in $\frownie$. 
	
	Hence, the sequence of letters $a \in \Sigma$ broadcast between $i_3+1$ and $i_5$ is received by all processes $\aproc_1, \dots, \aproc_n$ (as automata are complete), and thanks to the protocol's construction, it forms a word accepted by all automata.
\end{proof}

\end{document}